\documentclass[fleqn]{LMCS}

\def\dOi{10(1:14)2014}
\lmcsheading%
{\dOi}
{1--29}
{}
{}
{Jun.~\phantom05, 2013}
{Jun.~24, 2014}
{}

\ACMCCS{[{\bf Software and its engineering}]: Software organization
  and properties---Software functional properties---Formal methods;
  [{\bf Theory of computation}]: Formal languages and automata
  theory---Automata extensions---Quantitative automata }

\subjclass{D.2.4 Software/Program Verification}

\allowdisplaybreaks

\usepackage{microtype}

\usepackage{amsmath}
\usepackage{amsfonts}
\usepackage{amssymb}
\usepackage{amsthm}


\usepackage{algorithmic, algorithm}

\usepackage{tikz}
\usetikzlibrary{decorations.pathreplacing}
\usetikzlibrary{decorations.pathmorphing}
\usetikzlibrary{shapes,shapes.symbols,automata,arrows}
\usetikzlibrary{calc}

\tikzset{p0/.style = {shape = circle,    draw, thick, minimum size = 0.4cm}}
\tikzset{p1/.style = {rectangle split, rectangle split parts=2, rectangle split draw splits = false, minimum width=.8cm, draw, thick}}
\tikzset{>=stealth, shorten >=1pt}
\tikzset{every edge/.style = {thick, ->, draw}}
\tikzset{every loop/.style = {thick, ->, draw}}

\usepackage{booktabs}


\newcommand{\set}[1]{\{ #1 \}}
\newcommand{\nats}{\mathbb{N}}

\renewcommand{\epsilon}{\varepsilon}
\newcommand{\last}{\mathrm{Lst}}

\newcommand{\arena}{\mathcal{A}}
\newcommand{\game}{\mathcal{G}}
\newcommand{\cost}{\mathrm{Cst}}
\newcommand{\costfam}{{\overline{\cost}}}
\newcommand{\col}{\Omega}
\newcommand{\win}{\mathrm{Win}}

\newcommand{\parity}{\mathrm{Parity}}
\newcommand{\cp}{\mathrm{CostParity}}
\newcommand{\bcp}{\mathrm{BndCostParity}}

\newcommand{\buechi}{\mathrm{B\ddot{u}chi}}
\newcommand{\coB}{\mathrm{coB\ddot{u}chi}}

\newcommand{\streett}{\mathrm{Streett}}
\newcommand{\cs}{\mathrm{CostStreett}}
\newcommand{\bcs}{\mathrm{BndCostStreett}}

\newcommand{\rr}{\mathrm{RR}}
\newcommand{\scrr}{\mathrm{SCRR}}
\newcommand{\pcrr}{\mathrm{PCRR}}
\newcommand{\mem}{\mathcal{M}}
\newcommand{\init}{\mathrm{Init}}
\newcommand{\update}{\mathrm{Upd}}
\newcommand{\nxt}{\mathrm{Nxt}}

\newcommand{\paritydist}{\mathrm{Cor}}
\newcommand{\streettdist}{\mathrm{StCor}}

\newcommand{\lexord}{\le}
\newcommand{\lexordstrict}{<}

\newcommand{\sheet}{\mathrm{Sh}}

\newcommand{\sub}{\mathrm{sub}}
\newcommand{\answer}[1]{\mathrm{Ans}({#1})}

\newcommand{\req}{\mathrm{Req}}
\newcommand{\creq}{\mathrm{ReqCst}}
\newcommand{\score}{\mathrm{Sc}}

\newcommand{\att}[3]{\mathrm{Attr}_{#1}^{#2}(#3)}

\newcommand{\up}{\mathbf{UP}}
\newcommand{\coup}{\mathbf{coUP}}
\newcommand{\np}{\mathbf{NP}}
\newcommand{\conp}{\mathbf{coNP}}
\newcommand{\exptime}{\mathbf{EXPTIME}}
\newcommand{\ptime}{\mathbf{PTIME}}
\newcommand{\pspace}{\mathbf{PSPACE}}

\usepackage{hyperref}

\begin{document}

\title[Parity and Streett Games with Costs]{Parity and Streett Games with 
Costs\rsuper*}

\author[N.\ Fijalkow]{Nathana\"{e}l Fijalkow\rsuper a}
\address{{\lsuper a}LIAFA, Universit{\'e} Paris 7 \and 
		 Institute of Informatics, University of Warsaw}
\email{nath@liafa.univ-paris-diderot.fr}

\author[M.\ Zimmermann]{Martin Zimmermann\rsuper b}
\address{{\lsuper b}Reactive Systems Group, Saarland University}
\email{zimmermann@react.uni-saarland.de}



\keywords{Parity Games, Streett Games, Costs, Half-positional Determinacy}
\titlecomment{{\lsuper*}A preliminary version of this work appeared in FSTTCS 2012
under the name 
``Cost-parity and Cost-Streett Games''~\cite{FijalkowZimmermann12}.
The research leading to these results has received funding from the European 
Union's Seventh Framework Programme (FP7/2007-2013) under grant agreements 
259454 (GALE) and 239850 (SOSNA)}


\begin{abstract}
We consider two-player games played on finite graphs equipped with costs on
edges and introduce two winning conditions, cost-parity and cost-Streett,
which require bounds on the cost between requests and their responses. Both
conditions generalize the corresponding classical omega-regular conditions and
the corresponding finitary conditions.

For parity games with costs we show that the first player has
positional winning strategies and that determining the winner lies in
NP and coNP. For Streett games with costs we show that the first
player has finite-state winning strategies and that determining the
winner is EXPTIME-complete. The second player might need infinite
memory in both games. Both types of games with costs can be solved by
solving linearly many instances of their classical variants.
\end{abstract}

\maketitle

\section{Introduction}
In recent years, boundedness problems arose in topics pertaining to automata
and logics leading to the development of novel models and techniques to tackle
these problems. Although in general undecidable, many boundedness problems for
automata turn out to be decidable if the acceptance condition can refer to
boundedness properties of variables, but the transitions cannot access
variable values. A great achievement was made by
Hashiguchi~\cite{DBLP:journals/jcss/Hashiguchi82} who proved decidability of
the star-height problem by reducing it to a boundedness problem for a certain
type of finite automaton and by then solving this problem. This led the path
to recent developments towards a general theory of bounds in automata and
logics, comprising automata and logics with
bounds~\cite{Bojanczyk04,BojanczykColcombet06}, satisfiability algorithms for
these logics~\cite{Bojanczyk11,BojanczykTorunczyk12, Boom11}, and regular
cost-functions~\cite{Colcombet09}.

In this work, we consider boundedness problems in turn-based two-player graph
games of infinite duration. We introduce cost-parity and cost-Streett
conditions which generalize the (classical) $\omega$-regular parity-
respectively Streett condition, as well as the finitary parity- respectively
finitary Streett condition~\cite{ChatterjeeHenzingerHorn09}. 

A game with cost-parity condition is played on an arena whose vertices are 
colored by natural numbers, and where traversing an edge incurs a non-negative cost.
The cost of a play (prefix) is the sum of the costs of the edges along the play (prefix). Player~$0$ wins a play if there is a bound $b$ such that all but finitely many
odd colors seen along the play (which we think of as requests) are followed by
a larger even color (which we think of as responses) that is reached with cost
at most $b$. 
If all edges have cost zero, then this reduces to the parity condition: all but finitely many
odd colors are followed by a larger even color.
If all edges have positive cost, then this reduces to finitary parity conditions:
there is a bound $b$ such that all but finitely many odd colors
are followed by a larger even color within $b$ steps. The definition of the cost-Streett condition goes along the same
lines, but the requests and responses are independent and not hierarchically
ordered as in parity conditions. 

The cost of traversing an edge can be used to
model the consumption of a resource. Thus, if Player~$0$ wins a play she can
achieve her goal along an infinite run with bounded resources. On the other
hand, Player~$1$'s objective is to exhaust the resource, no matter how big the
capacity is. Note that this is not an $\omega$-regular property, which is
witnessed by the fact that Player~$1$ needs infinite memory to win such games.

Since the term ``cost-parity games'' has been
used before~\cite{Colcombet09,CL10,Boom11}, we refer to games with cost-parity
conditions as parity games with costs. 
The first difference between cost-parity games and parity games with costs is the bound quantification:
in cost-parity games the counter values are required to be uniformly bounded over all paths, 
whereas in parity games with costs the bound can depend on the path. 
However, as shown in~\cite{CF13} in a more general context, the two formulations
are equivalent over finite arenas.
The actually relevant difference between cost-parity games and parity games with costs is in the intent: 
cost-parity games were introduced to solve the domination
problem for regular cost-functions over finite trees~\cite{CL10}.
Hence cost-parity games are very general: their winning conditions are conjunctions of a parity condition
and of boundedness requirements on counters, 
allowing the counters and the parity condition to evolve independently.
In contrast to this work, we are interested in efficient algorithms to solve games.
Hence, in parity games with costs we restrict the use of counters, 
which are only used to give a quantitative measure of the satisfaction of the
parity condition.
This gives rise to a better-behaved winning condition, for which we provide a finer analysis
of the complexity of solving the games and of the memory requirements.

We show that parity games with costs enjoy two nice properties of parity and
finitary parity games: Player~$0$ has memoryless winning strategies and
determining the winner lies in $\np \cap \conp$~\footnote{This was recently improved to $\up \cap \coup$~\cite{DBLP:conf/lpar/MogaveroMS13}.}. Furthermore, we show that
solving parity games with costs can be algorithmically reduced to solving
parity games, which allows to solve these games almost as efficiently as
parity games. We then consider Streett games with costs and prove that
Player~$0$ has finite-state winning strategies, and that determining the
winner is $\exptime$-complete. 

This unifies the previous results about finitary parity and Streett games and the results about their
classical variants, in the following sense. For both parity and Streett,
recall that the games with costs generalize both the classical and the finitary variants,
hence solving them is at least as hard as solving these two subcases.
Our results show that it does not get worse: solving games with costs is not harder 
than solving the corresponding classical and finitary games.
Indeed, solving finitary parity games can be carried
out in polynomial time~\cite{ChatterjeeHenzingerHorn09}, while no
polynomial-time algorithm for parity games is yet known, and the decision
problem for parity games is in $\np \cap \conp$.
The situation is reversed for Streett games, 
since solving them is $\conp$-complete~\cite{EmersonJutla99}
while solving finitary Streett games is $\exptime$-complete. 
The latter result is shown in unpublished work by Chatterjee, Henzinger, and Horn: by slightly
modifying the proof of $\exptime$-hardness of solving request-response games
presented in~\cite{ChatterjeeHenzingerHorn11} they prove $\exptime$-hardness
of solving finitary Streett games.

To obtain our results, we present an algorithm to solve parity games with
costs that iteratively computes the winning region of Player~$0$ employing an
algorithm to solve parity games. This ``reduction'' to parity games also
yields finite-state winning strategies for Player~$0$ in parity games with
costs. However, this can be improved: by exploiting the intrinsic structure of
the memory introduced in the reduction, we are able to prove the existence of
positional winning strategies for Player~$0$. We also give a second proof of
this result: we show how to transform an arbitrary finite-state winning
strategy into a positional one. This construction relies on so-called scoring
functions (which are reminiscent of the simulation of alternating
tree-automata by non-deterministic automata presented in~\cite{MullerSchupp95}
and of scoring functions for Muller games~\cite{McNaughton00}) and presents a
general framework to turn finite-state strategies into positional ones, which
we believe to be applicable in other situations as well. Finally, we present
an algorithm that solves Streett games with costs by solving Streett games.
Here, we show the existence of finite-state winning strategies for Player~$0$
in Streett games with costs.

Adding quantitative requirements to qualitative winning conditions has been an
active field of research during the last decade: much attention is being paid
to not just synthesize some winning strategy, but to find an optimal one
according to a certain quality measure, e.g., the use of mean-payoff
objectives and weighted automata to model quantitative aspects in the winning
condition~\cite{BCHJ09, CernyChatterjeeHenzingerRadhakrishnaSingh11,
ChatterjeeHenzingerJurdzinski05}. For request-response games and their
extensions, waiting times between requests and their responses are used to
measure the quality of a strategy and it was shown how to compute optimal
(w.r.t.\ the limit superior of the mean waiting time) winning
strategies~\cite{HornThomasWallmeier08, Zimmermann09}. However, the optimal
finite-state strategies that are obtained are exponentially larger than the
ones computed by the classical algorithm.

Finally, there has been a lot of interest in so-called energy games, whose
winning conditions ask for the existence of an initial amount of energy such that a positive energy level is maintained throughout the play. Solving energy games with multiple resources is in general
intractable~\cite{FahrenbergJuhlLarsenSrba11} while so-called consumption
games, a subclass of energy games, are shown to be tractable in~\cite{BCKN12}.
Furthermore, energy parity games, whose winning conditions are a conjunction
of a (single resource) energy and a parity condition, can be solved in $\np
\cap \conp$ and one player (the spoiling one) has positional winning
strategies while the other one needs exponential
memory~\cite{ChatterjeeDoyen10}. The memory requirements show that energy parity games are incomparable to parity games with costs, since the second player needs infinite memory in the latter one. This also implies that there are no continuous reductions between these games via finite memory structures (see the next section for a formal definition of such reductions).

The paper is organized as follows. In Section~\ref{section_defs}, we define
the necessary material related to games and introduce cost-parity and
cost-Streett conditions, as well as their bounded variants, which are used to solve games with costs. In
Section~\ref{section_solvingbcp}, we study bounded parity games with costs,
providing an algorithm to solve them and tight memory requirements for winning
strategies. In Section~\ref{section_solvingcp}, we show how to reduce the
problem of solving parity games with costs to the problem of solving bounded
parity games with costs. In Section~\ref{section_elimmem}, we give a different
proof of the existence of positional strategies for (bounded) parity games
with costs, via scoring-functions. In Section~\ref{section_streett}, we study
Streett games with costs.

\section{Definitions}
\label{section_defs}

We denote the non-negative integers by $\nats$ and define $[n] = \set{0, 1,
\ldots, n-1}$ for every $n \ge 1$.

An \textit{arena}~$\arena=(V, V_0, V_1, E)$ consists of a finite, directed graph~$(V,
E)$ and a partition~$\{V_0, V_1\}$ of $V$ into the positions of Player~$0$
(drawn as circles) and the positions of Player~$1$ (drawn as rectangles). A
\textit{play} in $\arena$ starting in $v\in V$ is an infinite path~$\rho = \rho_0
\rho_1 \rho_2 \cdots$ through $(V, E)$ such that $\rho_0 = v$. 
To avoid the nuisance of dealing with finite plays, 
we assume every vertex to have an outgoing edge.

A \textit{game} $\game = ( \arena, \win )$ consists of an arena $\arena$ and a set
$\win \subseteq V^\omega$ of winning plays for Player~$0$. The set of winning
plays for Player~$1$ is $V^\omega \setminus \win$. We say that $\win$ is
\textit{prefix-independent}, if $\rho \in \win$ if and only if $w\rho \in \win$ for
every play prefix~$w$ and every infinite play~$\rho$.

A \textit{strategy} for Player~$i$ is a mapping $\sigma \colon V^*V_i \rightarrow V$
such that $(v, \sigma(wv)) \in E$ for all $wv \in V^* V_i$. We say that
$\sigma$ is \textit{positional} if $\sigma(wv) = \sigma(v)$ for every $wv \in V^*V_i$.
We often view positional strategies as a mapping~$\sigma \colon V_i \rightarrow V$. 
A play $\rho_0 \rho_1 \rho_2 \ldots$ is \textit{consistent} with $\sigma$ if
$\rho_{n+1} = \sigma( \rho_0 \cdots \rho_n)$ for every~$n$ with $\rho_n \in
V_i$. A strategy~$\sigma$ for Player~$i$ is a \textit{winning strategy} from a set of
vertices $W \subseteq V$ if every play that starts in some $v \in W$ and is
consistent with $\sigma$ is won by Player~$i$. 
The \textit{winning region}~$W_i(\game)$ of Player~$i$ in $\game$ is the set of vertices 
from which Player~$i$ has a winning strategy. 
We say that a strategy is \textit{uniform}, if it is winning from all $v \in W_i(\game)$. 
We always have $W_0(\game) \cap W_1(\game) = \emptyset$. 
On the other hand, if~$W_0(\game) \cup W_1(\game) = V$, then we say that $\game$ is \textit{determined}.
All games we consider in this work are determined. 
\textit{Solving} a game amounts to determining its winning regions and winning strategies.

A \textit{memory structure}~$\mem = (M, \init, \update)$ for an arena $(V, V_0, V_1,
E)$ consists of a finite set~$M$ of memory states, an initialization
function~$\init \colon V \rightarrow M$, and an update function~$\update
\colon M \times V \rightarrow M$. The update function can be extended to
$\update^+ \colon V^+ \rightarrow M$ in the usual way: $\update^+(\rho_0) =
\init(\rho_0)$ and $\update^+ (\rho_0 \cdots \rho_n \rho_{n+1}) =
\update(\update^+(\rho_0 \cdots \rho_n), \rho_{n+1})$. A next-move function
(for Player~$i$) $\nxt \colon V_i \times M \rightarrow V$ has to satisfy $(v,
\nxt(v, m)) \in E$ for all $v \in V_i$ and all $m \in M$. It induces a
strategy~$\sigma$ for Player~$i$ with memory~$\mem$ via
$\sigma(\rho_0\cdots\rho_n) = \nxt(\rho_n, \update^+(\rho_0 \cdots \rho_n))$.
A strategy is called \textit{finite-state} if it can be implemented by a memory
structure.

An arena $\arena = (V, V_0, V_1, E)$ and a memory structure $\mem = (M, \init,
\update)$ for $\arena$ induce the expanded arena $\arena\times\mem = (V \times
M, V_0 \times M, V_1 \times M, E' )$ where $((v,m), (v',m')) \in E'$ if and
only if $(v,v') \in E$ and $\update(m, v' ) = m'$. Every play $\rho$ in
$\arena$ has a unique extended play $\rho' = (\rho_0, m_0) (\rho_1, m_1)
(\rho_2, m_2) \ldots$ in $\arena \times \mem$ defined by $m_0 = \init( \rho_0
)$ and $m_{n+1} = \update(m_n, \rho_{n+1})$, i.e., $m_n = \update^+(\rho_0
\cdots \rho_n)$.

A game $\game = (\arena, \win)$ is \textit{reducible} to $\game' = (\arena', \win')$
via $\mem$, written $\game \le_{ \mem } \game'$, if $\arena' = \arena \times
\mem$ and every play $\rho$ in $\game$ is won by the player who wins the
extended play $\rho'$ in $\game'$, i.e., $\rho \in \win$ if and only if $\rho'
\in \win'$.

\begin{lem} 
\label{lemma_reductionlemma} 
Let $\game$ be a game with vertex set $V$ and $W \subseteq V$. If $\game \le_{
\mem } \game'$ and Player~$i$ has a positional winning strategy for $\game'$
from $\{(v, \init(v))\mid v \in W\}$, then she has a finite-state winning
strategy for $\game$ from $W$ which is implemented by $\mem$.
\end{lem}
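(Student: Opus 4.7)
The plan is to lift the positional winning strategy from the expanded game $\game'$ back to a finite-state strategy on $\game$ by using $\mem$ itself as the memory and reading off the move from the $V$-component of the positional strategy's output.

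More precisely, let $\sigma' \colon V_i \times M \to V \times M$ be the given positional winning strategy for Player~$i$ in $\game'$ from $\{(v, \init(v)) \mid v \in W\}$. I would define a next-move function $\nxt \colon V_i \times M \to V$ by projecting: for $(v, m) \in V_i \times M$, let $\nxt(v, m) = v'$ where $\sigma'(v, m) = (v', m')$. The edge condition $(v, \nxt(v,m)) \in E$ is inherited from the edge condition for $\sigma'$ in $\arena \times \mem$, since the definition of $E'$ requires $(v, v') \in E$. Together with $\mem$, this $\nxt$ induces a finite-state strategy $\sigma$ for Player~$i$ on $\game$ implemented by $\mem$.

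The core verification is that for every play $\rho = \rho_0 \rho_1 \rho_2 \cdots$ in $\game$ starting in some $v \in W$ and consistent with $\sigma$, the associated extended play $\rho' = (\rho_0, m_0)(\rho_1, m_1)(\rho_2, m_2) \cdots$ in $\game'$ is consistent with $\sigma'$ and starts in $(v, \init(v))$. The starting condition is immediate from $m_0 = \init(\rho_0)$. For consistency, I would proceed by induction on $n$: whenever $\rho_n \in V_i$, we have $m_n = \update^+(\rho_0 \cdots \rho_n)$ by definition of the extended play, so $\sigma(\rho_0 \cdots \rho_n) = \nxt(\rho_n, m_n)$; by construction of $\nxt$, writing $\sigma'(\rho_n, m_n) = (v', m')$ we get $\rho_{n+1} = v'$, and $m_{n+1} = \update(m_n, \rho_{n+1}) = m'$ follows from the edge condition in $\arena \times \mem$. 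Hence $(\rho_{n+1}, m_{n+1}) = \sigma'(\rho_n, m_n)$, as required.

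Once this correspondence is established, the conclusion is immediate from the definition of $\game \le_{\mem} \game'$: the extended play $\rho'$ is consistent with the winning strategy $\sigma'$ from $(v, \init(v))$, hence won by Player~$i$ in $\game'$, which by the reduction means $\rho$ is won by Player~$i$ in $\game$. Thus $\sigma$ is winning from every $v \in W$. I do not anticipate a real obstacle here; the only care needed is in the inductive bookkeeping that aligns the memory updates induced by $\mem$ with the deterministic transitions in $\arena \times \mem$, which is really just unfolding the definitions of the expanded arena and of the strategy induced by a memory structure.
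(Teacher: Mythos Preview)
Your proof is correct and is exactly the standard argument for this folklore lemma; the paper itself states the lemma without proof, so there is nothing to compare against beyond noting that your construction (project the positional strategy to its $V$-component to obtain $\nxt$, then verify that the extended play of any $\sigma$-consistent play is $\sigma'$-consistent) is precisely what one expects.
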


Especially, if Player~$i$ has a uniform positional winning strategy for
$\game'$, then she has a uniform finite-state winning strategy for $\game$
that is implemented by $\mem$.

Let $\arena = (V, V_0, V_1, E)$ and $i \in \set{0,1}$. The $i$-\textit{attractor} of
$F\subseteq V$ in $\arena$, denoted by $\att{i}{\arena}{F}$, is defined by
$\att{i}{\arena}{F} = \bigcup_{j=0}^{|V|} A_j$, where $A_0 = F$ and
\begin{align*}
A_{j+1}=A_j\,
\cup\,& \set{v \in V_i     \mid \exists v'\in A_j \text{ such that }(v,v') 
\in E}\\
\cup\,& \set{v \in V_{1-i} \mid \forall v', (v,v') \in E \text{ implies }
v' \in A_j}\enspace.
\end{align*}
Player~$i$ has a positional strategy such that every play that starts in
$\att{i}{\arena}{F}$ and is consistent with the strategy visits $F$. Such
strategies are called \textit{attractor strategies}.

A \textit{trap} for Player~$i$ is a set~$X$ of vertices such that the successors of
every vertex in $X \cap V_i$ are again in $X$ and every vertex in $X \cap
V_{1-i}$ has a successor in $X$. Player~$1-i$ has a positional strategy such
that every play that starts in a trap~$X$ and is consistent with the strategy
stays in $X$ forever. The complement of an attractor~$\att{i}{\arena}{F}$ is a
trap for Player~$i$. Furthermore, removing an attractor from an arena never
introduces terminal vertices.

The following observation will be useful later: if the set of winning plays
$\win$ in $\game$ is prefix-independent, then we have $W_i(\game) =
\att{i}{\arena}{W_i(\game)}$ and $W_i(\game)$ is a trap for Player~$1-i$.
Furthermore, no play consistent with a winning strategy for Player~$i$ will
ever leave $W_i(\game)$.

\subsection{Winning Conditions}
\label{subsection_winningconds}

In this subsection, we present the winning conditions we consider in this
paper. Fix an arena~$\arena$ with set of edges~$E$. A cost function for
$\arena$ is an edge-labelling~$\cost \colon E \rightarrow \{\epsilon,i\}$. Edges labelled with $i$ are called increment-edges while edges labelled by
$\epsilon$ are called $\epsilon$-edges accordingly. We extend the
edge-labelling to a cost function over plays obtained by
counting the number of increment-edges traversed during the play, i.e., $\cost(\rho) \in \nats \cup \set{\infty}$. The cost of a play prefix is defined analogously.

Note that our definition of a cost function only allows cost zero or one on an edge. Alternatively, one could allow arbitrary costs in $\nats$. This would not change our results, as we are interested in boundedness questions only. For the sake of simplicity, we refrain from using arbitrary costs in $\nats$ and use abstract costs~$\epsilon$ and $i$ instead.

\subsubsection{Cost-Parity Conditions}
\label{subsubsec_winningconds_costparity}

Let $\arena = (V, V_0, V_1, E)$ be an arena and let $\col \colon V \rightarrow
\nats$ be a coloring of its vertices by natural numbers. In all games we are
about to define in this subsection, we interpret the occurrence of a color as
request, which has to be answered by visiting a vertex of larger or equal even color at
an equal or later position. By imposing conditions on the responses we obtain several
different types of winning conditions. To simplify our notations, let
$\answer{c} = \set{c' \in \nats \mid c' \ge c \text{ and $c'$ is even}}$ be the
set of colors that answer a request of color~$c$. Note that $\answer{c}
\subseteq \answer{c'}$ for $c \ge c'$ and $c \in \answer{c}$ if $c$ is even.

Fix a cost function $\cost$ and consider a play~$\rho = \rho_0 \rho_1 \rho_2
\cdots$ and a position~$k\in\nats$. We define the cost-of-response at 
position~$k$ of $\rho$ by 
\begin{equation*}
\paritydist_\cost(\rho, k) = \min \set{ \cost (\rho_k \cdots \rho_{k'}) \mid 
k' \ge k \text{ and } \col(\rho_{k'}) \in \answer{\col(\rho_k)} }\enspace, 
\end{equation*}
where we use $\min \emptyset = \infty$, i.e., $\paritydist_\cost(\rho, k)$ 
is the cost of the infix of $\rho$ from position~$k$ to its first 
answer, and $\infty$ if there is no answer.

We say that a request at position~$k$
is answered with cost~$b$, if $\paritydist_\cost(\rho, k) = b$. Note that a
request at a position~$k$ with an even color is answered with cost zero.
Furthermore, we say that a request at position~$k$ is unanswered with
cost~$\infty$, if there is no position~$k' \ge k$ such that $\col(\rho_{k'})
\in \answer{\col(\rho_k)}$ and we have $\cost(\rho_k\rho_{k+1}\cdots ) =
\infty$, i.e., there are infinitely many increment-edges after position~$k$,
but no answer. Note that there is a third alternative: a request can be unanswered with finite cost, i.e., in case it is not answered, but the play~$\rho$ contains only finitely many increment-edges.

We begin defining winning conditions by introducing the parity condition,
denoted by $\parity(\col)$, which requires that all but finitely many requests
are answered. Equivalently, $\rho \in \parity(\col)$ if and only if the
maximal color that occurs infinitely often in $\rho$ is even. Both players
have uniform positional winning strategies in parity
games~\cite{Emersonjutla91, Mostowski91} and their winning regions can be
decided in $\np \cap \conp$ (and even in $\up \cap \coup$\footnote{A problem is in $\up$, if it can be decided by a non-deterministic polynomial-time Turing machine with at most one accepting run.}~\cite{Jurdzinski98}).

By bounding the costs between requests and their responses, we strengthen the
parity condition and obtain the cost-parity and the bounded cost-parity
condition. The former is defined as
\[\cp(\col, \cost) = \set{ \rho \in V^\omega \mid \limsup_{k\rightarrow 
\infty} \paritydist_\cost(\rho , k) < \infty } \enspace,\]
i.e., $\rho$ satisfies the cost-parity condition, if there exists a bound~$b
\in \nats$ such that all but finitely many requests are answered with cost
less than $b$. The bounded cost-parity condition, denoted by $\bcp(\col, \cost)$, is again
obtained by a strengthening: 
\begin{align*}
\bcp(\col, \cost) = \set{ \rho \in V^\omega \mid& 
\limsup_{k\rightarrow  \infty} \paritydist_\cost(\rho , k) < \infty \text{ and }\\
&\text{no request in $\rho$ is unanswered with cost $\infty$}
} \enspace,\end{align*}
i.e., $\rho$ satisfies the bounded cost-parity condition, if there exists a
bound~$b \in \nats$ such that all but finitely many requests are answered with
cost less than $b$, and there is no unanswered request of
cost~$\infty$. Note that this is \emph{not} equivalent to requiring that there
exists a bound~$b' \in \nats$ such that all requests are answered with cost
less than $b'$ (e.g., if there are unanswered requests in a play with
finitely many increment-edges).

\begin{rem}
\label{rem_winningconds_inclusions_parity}
We have $\bcp(\col, \cost) \subseteq \cp(\col, \cost) \subseteq \parity(\col)$ 
and $V^* \cdot \bcp(\col, \cost) = \cp(\col, \cost)$. Furthermore, 
$\cp(\col, \cost)$ and $\parity(\col)$ are prefix-independent while 
$\bcp(\col, \cost)$ is not. 
\end{rem}

A game~$\game = (\arena, \cp(\col, \cost))$ is called a parity game with
costs, and a game with winning condition~$\bcp(\col,\cost)$ is a bounded 
parity game with costs. 
Note that both cost-conditions defined here generalize the
classical parity conditions as well as the finitary, respectively bounded,
parity conditions of~\cite{ChatterjeeHenzingerHorn09}. 
Indeed, if $\arena$ contains no increment-edges, then $\cp(\col, \cost) = \bcp(\col, \cost) = \parity(\col)$,
the three conditions are equivalent. 
On the other hand, if $\arena$ contains no $\epsilon$-edges, 
then $\cp(\col, \cost)$ is equal to the
finitary parity condition over $\col$ and $\bcp(\col, \cost)$ is equal to the
bounded parity condition over $\col$. Hence, parity games with costs
generalize both parity and finitary parity games. Similarly, bounded parity
games with costs generalize both parity and bounded parity games.

Since (bounded) cost-parity conditions are Borel, we obtain determinacy of (bounded) parity games with costs via the Borel determinacy theorem.

\begin{lem}
\label{lemma_costparitydeterminacy}
(Bounded) parity games with costs are determined.
\end{lem}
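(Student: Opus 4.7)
The plan is to invoke Martin's Borel determinacy theorem, which guarantees determinacy of any two-player perfect-information game on a finitely branching arena whose winning condition is a Borel subset of $V^\omega$ (with $V$ carrying the discrete topology and $V^\omega$ the product topology). The arenas we consider are finite, so the task reduces to showing that both $\cp(\col,\cost)$ and $\bcp(\col,\cost)$ are Borel.

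For the cost-parity condition, unfolding the $\limsup$ yields
\[
\cp(\col,\cost)\;=\;\bigcup_{b\in\nats}\,\bigcup_{N\in\nats}\,\bigcap_{k\ge N}\,\set{\rho\in V^\omega\mid \paritydist_\cost(\rho,k)<b}.
\]
For each fixed $k$ and $b$, membership in the innermost set is witnessed by a finite prefix of $\rho$, namely one extending far enough to reach a position $k'\ge k$ with $\col(\rho_{k'})\in\answer{\col(\rho_k)}$ along an infix $\rho_k\cdots\rho_{k'}$ of cost strictly less than $b$. Thus each innermost set is open, placing $\cp(\col,\cost)$ at level $\Sigma^0_3$ of the Borel hierarchy.

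For the bounded variant I would write $\bcp(\col,\cost)=\cp(\col,\cost)\cap U$, where $U$ collects the plays in which no request is unanswered with cost $\infty$. The crucial observation is that $\cost(\rho_k\rho_{k+1}\cdots)=\infty$ either holds for every $k$ (when $\rho$ carries infinitely many increment-edges) or for none of them, so the universal quantifier over positions can be split cleanly:
\[
U\;=\;\set{\rho\mid \cost(\rho)<\infty}\,\cup\,\bigcap_{k\in\nats}\,\bigcup_{k'\ge k}\,\set{\rho\mid \col(\rho_{k'})\in\answer{\col(\rho_k)}}.
\]
The first component is $\Sigma^0_2$ (those plays whose tail uses only $\epsilon$-edges from some point on) and the second is a countable intersection of open cylinders. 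Hence $U$, and therefore $\bcp(\col,\cost)$, is Borel, and Martin's theorem yields determinacy of both game classes.

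The only delicate point is the treatment of the cost-$\infty$ clause in the bounded variant: one must exploit the independence of "$\cost(\rho_k\rho_{k+1}\cdots)=\infty$" from $k$ to factor this sub-condition out of the universal quantifier over positions and thereby land inside the Borel hierarchy rather than drifting into coanalytic territory. Beyond this bookkeeping, the argument is entirely routine.
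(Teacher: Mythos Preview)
Your proof is correct and follows the same strategy as the paper: show that both winning conditions are Borel and invoke Martin's theorem. The details differ slightly in two places. First, you observe that the innermost sets in the decomposition of $\cp(\col,\cost)$ are open (being determined by finite prefixes), which places the condition at level $\Sigma^0_3$; the paper treats the analogous sets $L_{b,k,k'}$ merely as closed and lands at $\Sigma^0_4$, acknowledging in a footnote that this is not optimal. Second, for the set $U$ of plays with no request unanswered with cost $\infty$, you give an explicit Borel decomposition by factoring out the position-independent clause ``$\cost(\rho)=\infty$'', whereas the paper simply observes that $U$ is recognizable by a parity automaton and hence Borel. Your route is a bit more self-contained; the paper's is shorter but leans on automata theory. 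One small terminological quibble: the sets $\bigcup_{k'\ge k}\{\rho\mid \col(\rho_{k'})\in\answer{\col(\rho_k)}\}$ are open, but not single cylinders; they are countable unions of basic cylinders.
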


\begin{proof}
We show that both conditions are Borel. Then, the result follows from the Borel determinacy theorem~\cite{Martin75}.

We have 
\[ \cp(\col, \cost) = \bigcup_{\substack{n \in \nats\\ b\in \nats}} \bigcap_{k \ge n} \bigcup_{k' \ge k} L_{b,k,k'} \]
where $L_{b,k,k'}$ is the set of plays where the request at position~$k$ is answered at position~$k'$ with cost at most $b$. Every $L_{b,k,k'}$ is closed, hence $\cp(\col, \cost)$ is in level~$\Sigma_4$ of the Borel hierarchy\footnote{Note that this is not optimal: the cost-parity condition can be encoded in weak MSO with the unbounding quantifier. Such languages are boolean combinations of $\Sigma_2$ languages~\cite{Bojanczyk11}, which is optimal.}.

Furthermore, $\bcp(\col, \cost)$ is equal to the intersection of $\cp(\col, \cost)$ and the set
\[\set{ \rho \in V^\omega \mid \text{no request in $\rho$ is unanswered with cost $\infty$} }\enspace.\]
The latter set is recognizable by a parity automaton and therefore Borel. Hence, closure of Borel sets under intersection yields the desired result.
\end{proof}

\begin{exa}
\label{example_costparity}
Consider the parity game with costs depicted in Figure~\ref{figure_arena}
where all vertices belong to $V_1$, and the label of a vertex denotes its name
(in the upper part) and its color (in the lower part). Player~$1$ wins from
$\set{a,b,c}$ by requesting color~$1$ at vertex~$a$ infinitely often and
staying at vertex~$b$ longer and longer, but also visiting $c$ infinitely
often (and thereby answering the request). Note that this strategy is not
finite-state. Indeed, one can easily prove that Player~$1$ does not have a
finite-state winning strategy for this game. Player~$0$ wins from every other
vertex, since Player~$1$ can raise only finitely many requests from these
vertices, albeit these requests are unanswered with cost $\infty$.

If we consider the game as a bounded parity game with costs, then Player~$1$
wins from every vertex but $g$ by moving to $g$ and then staying there ad
infinitum. Every such play contains a request of color~$1$ that is unanswered
with cost~$\infty$. From $g$, Player~$0$ wins, since there is only one play
starting from $g$, in which no request is ever raised.

\begin{figure}
\centering
\begin{tikzpicture}
\node[p1] at (0,0)              (a) {$a$ \nodepart{second} $1$};
\node[p1] at (1.5,0)    (b) {$b$ \nodepart{second} $0$};
\node[p1] at (3,0)              (c) {$c$ \nodepart{second} $2$};
\node[p1] at (4.5,0)    (d) {$d$ \nodepart{second} $1$};
\node[p1] at (6,0)              (e) {$e$ \nodepart{second} $0$};
\node[p1] at (7.5,0)    (f) {$f$ \nodepart{second} $1$};
\node[p1] at (9,0)              (g) {$g$ \nodepart{second} $0$};

\path
(a) edge node[below] {$\epsilon$} (b)
(b) edge[loop below]  node[below] {$i$} ()
(b) edge node[below] {$\epsilon$} (c)
(c.north) edge[bend right = 20] node[above] {$\epsilon$} (a.north)
(c) edge node[below] {$\epsilon$} (d)
(d) edge node[below] {$\epsilon$} (e)
(e) edge[loop below]  node[below] {$i$} ()
(e) edge node[below] {$\epsilon$} (f)
(f) edge node[below] {$\epsilon$} (g)
(g) edge[loop below]  node[below] {$i$} ();
\end{tikzpicture}
\caption{A (bounded) parity game with costs.}
\label{figure_arena}
\end{figure}
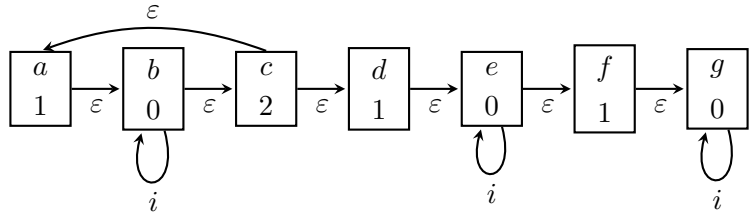
\end{exa}

\subsubsection{Cost-Streett Conditions}
\label{subsubsec_winningconds_coststrett}

Fix an arena~$\arena = (V, V_0, V_1, E)$. Let $\Gamma = (Q_c, P_c)_{c \in
[d]}$ be a collection of $d$ (Streett) pairs of subsets of $V$, i.e., $Q_c,
P_c \subseteq V$, and let $\costfam = (\cost_c)_{c \in [d]}$ be a collection
of $d$ cost functions for $\arena$. We think of visits to vertices in $Q_c$ as
requests, visits to $P_c$ as responses, and measure the cost of these
responses using $\cost_c$. Formally, for $c \in [d]$, a play $\rho = \rho_0
\rho_1 \rho_2 \cdots$, and a position~$k$ we define the cost-of-response by
\begin{equation*}
\streettdist_{\cost_c}(\rho, k)=\begin{cases}
0                                       & \text{if $\rho_k \notin Q_c$,}\\
\min\set{ \cost_c(\rho_k\cdots \rho_{k'}) \mid k' \ge k \text{ and } 
\rho_{k'} \in P_c} & \text{if $\rho_k \in Q_c$,}
\end{cases}
\end{equation*}
where we use $\min \emptyset = \infty$. We define $\streettdist_\costfam(\rho,
k) = \max\set{ \streettdist_{\cost_c}(\rho, k) \mid c\in [d] }$ and say that
the requests at position $k$ are answered with cost $b$, if
$\streettdist_\costfam(\rho, k) = b$, and that the requests are unanswered
with cost $\infty$, if $\streettdist_\costfam(\rho, k) = \infty$ and there are
infinitely many increment-edges after position $k$ (w.r.t.\ some $\cost_c$
such that $\rho_k \in Q_c$). This rules out the case where we have $\streettdist_\costfam(\rho, k) = \infty$ due to a request at position~$k$ that is not answered, but $\rho$ only traverses finitely many increment-edges. 

We consider the following winning conditions. The (classical) Streett
condition $\streett(\Gamma)$ requires for every $c$ that $P_c$ is visited
infinitely often if $Q_c$ is visited infinitely often, i.e., all but finitely
many requests are answered.

Again, by requiring a bound on the costs between requests and responses, we
strengthen the Streett condition: the cost-Streett condition 
\[\cs(\Gamma, \costfam) = \set{ \rho \in V^\omega \mid \limsup_{
k\rightarrow \infty} \streettdist_\costfam(\rho, k) < \infty }
\enspace\] 
requires the existence of a bound~$b$ such that all but finitely many requests
are answered with cost less than $b$. Finally, the bounded cost-Streett condition $\bcs(\Gamma, \costfam)$ requires
the existence of a bound~$b$ such that all but finitely many requests are
answered with cost less than $b$, and that there is no unanswered
request of cost~$\infty$. Formally, we define
\begin{align*}
\bcs(\Gamma, \costfam) = \set{ \rho \in V^\omega \mid& 
\limsup_{k\rightarrow \infty} \streettdist_\costfam(\rho, k) < \infty \text{ and }\\
&\text{no request in $\rho$ is unanswered with cost $\infty$}
} \enspace.\end{align*}

\begin{rem}
\label{rem_winningconds_inclusions_streett}
We have $ \bcs(\Gamma, \costfam) \subseteq \cs(\Gamma, \costfam) \subseteq
\streett(\Gamma)$ and $V^* \cdot \bcs(\Gamma, \costfam) = \cs(\Gamma, \costfam)$.
Furthermore, 
$\cs(\Gamma, \cost)$ and $\streett(\Gamma)$ are prefix-independent while 
$\bcs(\Gamma, \cost)$ is not.
\end{rem}

A game~$(\arena, \cs(\Gamma, \costfam))$ where $\Gamma$ and $\costfam$ have
the same size is called a Streett game with costs. 
As in the case for (bounded) cost-parity conditions,
the winning conditions defined here generalize the classical Streett condition
as well as the finitary, respectively, bounded Streett condition
of~\cite{ChatterjeeHenzingerHorn09}. Indeed, if $\arena$ contains no
increment-edges, then the three conditions are equivalent, i.e., $\cs(\Gamma,
\costfam) = \bcs(\Gamma, \costfam) = \streett(\Gamma)$. Similarly, if $\arena$
contains no $\epsilon$-edges, then $\cs(\Gamma, \costfam)$ is equal to the
finitary Streett condition over $\Gamma$ and $\bcp(\Gamma, \costfam)$ is equal
to the bounded Streett condition over~$\Gamma$. Hence, Streett games with
costs generalize both Streett and finitary Streett games. Similarly, bounded
Streett games with costs generalize both Streett and bounded Streett games.
Furthermore, just as classical Streett games subsume parity games, Streett
games with costs subsume parity games with costs, and bounded Streett games 
with costs subsume bounded parity games with costs.

Figure~\ref{fig_expressiveness} shows the expressiveness of the winning
conditions, e.g., the arrow from ``bounded parity'' to ``bounded Streett''
denotes that every bounded parity condition is also a bounded Streett
condition.

\begin{figure}
\begin{center}
\begin{tikzpicture}[node distance = 3cm]
\node at (0,  0)		(p) 	{parity};
\node at (0,  2.8)		(s)		{Streett};

\node at (-2.8, 1.5) 	(bcp)  	{bounded cost-parity};
\node at (-5.6, 0) 		(bp)  	{bounded parity};

\node at (2.8, 1.5)		(cp)	{cost-parity};
\node at (5.6, 0)		(fp)	{finitary parity};

\node at (-2.8, 4.3)	(bcs)	{bounded cost-Streett};
\node at (-5.6,  2.8)	(bs)	{bounded Streett};

\node at (2.8, 4.3)		(cs)	{cost-Streett};
\node at (5.6,  2.8)	(fs)	{finitary Streett};

\path
(p) edge (s)
(p) edge (cp)
(p) edge (bcp)
(cp) edge (cs)
(bcp) edge (bcs)
(fp) edge (cp)
(bp) edge (bcp)
(fs) edge (cs)
(bs) edge (bcs)
(s) edge (cs)
(s) edge (bcs)
(bp) edge (bs)
(fp) edge (fs);

\path[draw,dashed, rounded corners = .5cm, thick]
(-7,-.4) -- (-1,-.4) -- (-1,3.2) -- (0, 3.7) -- (1,3.2) -- (1,-.4) -- 
(7, -.4);

\end{tikzpicture}
\end{center}
\caption{Expressiveness of winning conditions; those below the dashed line are $\omega$-regular.}
\label{fig_expressiveness}
\end{figure}
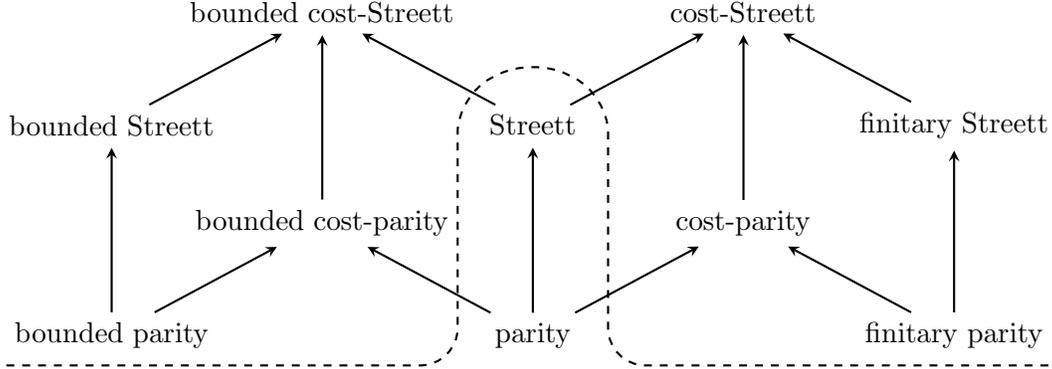

Finally, we obtain determinacy via the Borel determinacy 
theorem.

\begin{lem}
\label{remark_coststreettdeterminacy}
(Bounded) Streett games with costs are determined.
\end{lem}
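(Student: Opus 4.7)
The plan is to mimic the proof of Lemma~\ref{lemma_costparitydeterminacy}: show that both winning conditions are Borel and invoke the Borel determinacy theorem~\cite{Martin75}.

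First I would handle $\cs(\Gamma, \costfam)$. For each bound $b \in \nats$, each pair index $c \in [d]$, and each pair of positions $k' \ge k$, let $L_{b,k,k',c}$ be the set of plays $\rho$ such that either $\rho_k \notin Q_c$, or $\rho_{k'} \in P_c$ and $\cost_c(\rho_k \cdots \rho_{k'}) \le b$. Each $L_{b,k,k',c}$ is determined by a finite prefix and hence clopen. Setting $L_{b,k,k'} = \bigcap_{c \in [d]} L_{b,k,k',c}$, a play satisfies $\streettdist_\costfam(\rho, k) \le b$ iff $\rho \in \bigcup_{k' \ge k} L_{b,k,k'}$. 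Therefore
\[ \cs(\Gamma, \costfam) = \bigcup_{b \in \nats} \bigcup_{n \in \nats} \bigcap_{k \ge n} \bigcup_{k' \ge k} L_{b,k,k'} \enspace, \]
which places $\cs(\Gamma, \costfam)$ in level $\Sigma_4$ of the Borel hierarchy.

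For $\bcs(\Gamma, \costfam)$, I would write it as the intersection of $\cs(\Gamma, \costfam)$ with the set
\[ U = \set{\rho \in V^\omega \mid \text{no request in $\rho$ is unanswered with cost $\infty$}} \enspace. \]
A request at position~$k$ in some pair~$c$ is unanswered with cost~$\infty$ iff $\rho_k \in Q_c$, $\rho$ visits $P_c$ only finitely often after position~$k$, and $\cost_c$ accumulates infinitely many increment-edges after position~$k$. The negation of this property for every position and every pair is a conjunction over $c \in [d]$ of conditions of the form: if $Q_c$ is visited infinitely often and $\cost_c$ contributes infinitely many increment-edges, then $P_c$ is visited infinitely often; together with the symmetric handling of finitely many requests, which is automatically satisfied in that case. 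Each such conjunct is a boolean combination of conditions of the form ``$X$ is visited infinitely often,'' hence $\omega$-regular and in particular Borel. Thus $U$ is Borel, and since Borel sets are closed under intersection, so is $\bcs(\Gamma, \costfam)$. The main (minor) obstacle is simply being careful with the definition of ``unanswered with cost~$\infty$'' so as not to mistakenly exclude plays with only finitely many increment-edges; once this is handled, Borel determinacy yields the claim.
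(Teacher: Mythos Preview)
Your approach matches the paper's, which simply says ``analogously to the proof of Lemma~\ref{lemma_costparitydeterminacy}.'' However, two technical slips in your execution need fixing.

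First, in the decomposition of $\cs(\Gamma,\costfam)$ you set $L_{b,k,k'} = \bigcap_{c} L_{b,k,k',c}$ and claim $\streettdist_\costfam(\rho,k)\le b$ iff $\rho\in\bigcup_{k'\ge k}L_{b,k,k'}$. This swaps a quantifier: $\streettdist_\costfam(\rho,k)\le b$ means \emph{for each} $c$ there is \emph{some} $k'_c$ witnessing the response with cost $\le b$, but different pairs may need different witnesses~$k'_c$. The correct formula is
\[
\cs(\Gamma,\costfam)=\bigcup_{b}\bigcup_{n}\bigcap_{k\ge n}\bigcap_{c\in[d]}\bigcup_{k'\ge k} L_{b,k,k',c}\enspace,
\]
which is still a $\Sigma_4$ Borel set since the intersection over $c$ is finite.

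Second, your description of $U$ is not right. You say a request of pair~$c$ at position~$k$ is unanswered with cost~$\infty$ iff $\rho$ visits $P_c$ ``only finitely often'' after~$k$; it should be ``never'' at or after~$k$. More seriously, the negation you state (``if $Q_c$ is visited infinitely often and $\cost_c$ contributes infinitely many increments, then $P_c$ is visited infinitely often'') is too weak: a play that visits $Q_c$ exactly once, never visits $P_c$, and has infinitely many $\cost_c$-increments satisfies your condition vacuously yet lies outside~$U$. The correct characterization per pair~$c$ is: either $\cost_c(\rho)<\infty$, or every visit to $Q_c$ is eventually followed by a visit to $P_c$ (a request-response condition). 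Both disjuncts are $\omega$-regular, so $U$ is indeed Borel, and your conclusion stands once this is repaired.
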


\begin{proof}
Analogously to the proof of Lemma~\ref{lemma_costparitydeterminacy}.
\end{proof}

\section{Bounded Parity Games with Costs}
\label{section_solvingbcp}
In this section, we study bounded parity games with costs. We first show how
to solve such games, and then consider the memory requirements for winning
strategies for both players. Note that bounded parity games with costs
are a generalization of parity games, hence the algorithm we present
in the following subsection also has to solve parity games as a special case.

\subsection{Solving Bounded Parity Games with Costs via $\omega$-regular 
Games}

To solve bounded parity games with costs, we present a relaxation of the
bounded cost-parity condition, called $\pcrr$ which is a boolean combination of a parity, a co-B\"{u}chi, and a request-response~\cite{WallmeierHuettenThomas03} condition (hence its name). This condition essentially replaces the
bound~$b$ on the cost between a request and its response by just requiring an
answer to every request. Furthermore, for plays with finite cost it just requires the parity
condition to be satisfied, just as the bounded cost-parity condition does. The
$\pcrr$-condition is $\omega$-regular, thus both players have finite-state
winning strategies in games with $\pcrr$ winning conditions~\cite{BuechiLandweber69}. Using the fact that a finite-state winning strategy for
Player~$0$ answers every request within a fixed number of steps (and thereby
also with bounded cost), we are able to show that these two games have the
same winning regions. Finally, we show how to reduce the $\pcrr$-condition to
a parity condition. This completes our algorithm for solving bounded parity
games with costs and also yields upper bounds on the memory requirements of 
both players in bounded parity games with costs.

Let $\game = (\arena, \bcp(\col, \cost))$. First, we turn the cost function into a state property in order to be able to define cost-based winning conditions (which are sequences of vertices not edges): in the following, we assume that no
vertex of $\arena$ has both incoming increment- and $\epsilon$-edges. This can
be achieved by subdividing every increment-edge $e=(v,v')$: we add a new
vertex $\sub(e)$ and replace $e$ by $(v, \sub(e))$ (which is an
increment-edge) and by $(\sub(e), v')$ (which is an $\epsilon$-edge). Now,
only the newly added vertices have incoming increment-edges, but they do not
have incoming $\epsilon$-edges. Furthermore, it is easy to see that Player~$i$
wins from a vertex in the original game if and only if she wins from this 
vertex in the
modified game (where we color $\sub(e)$ by $\col(v')$). Finally, the
modification does not change the memory requirements, e.g., if Player~$0$
has a positional winning strategy for the modified game, then also for the
original game.

We say that a vertex is an increment-vertex, if it has an incoming
increment-edge (which implies that all incoming edges are increment-edges). 
Let
$I$ be the set of increment-vertices. Then, $\coB(I) = \set{\rho \mid
\cost(\rho) < \infty}$ is the set of infinite plays having finite cost,
i.e., those plays that visit only finitely many increment-vertices.
Furthermore, by $\rr(\col)$ we denote the set of infinite plays in which every
request is answered. We define
\[\pcrr(\col, I) = \left( \parity(\col) \cap \coB(I) \right) \cup \rr(\col)
\enspace,\]
which is $\omega$-regular, since it is a boolean combination of $\omega$-regular languages. Note that $\pcrr(\col, I)$ relaxes
$\bcp(\col,\cost)$ by giving up the bound on the cost between requests and
responses, in other words $\pcrr(\col, I) \supseteq \bcp(\col, \cost)$.

\begin{lem}
\label{lem_bcp2pcrr}
Let $\game = (\arena, \bcp(\col, \cost))$ and $\game' = (\arena, \pcrr(\col,
I))$, where $I$ is defined as above. A finite-state winning strategy for
Player~$i$ in $\game'$ from a set~$W$ of vertices is also a winning strategy for
Player~$i$ in $\game$ from $W$.
\end{lem}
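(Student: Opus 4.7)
The plan is to treat the two players separately. For Player~$1$ the finite-state hypothesis is unnecessary: the inclusion $\bcp(\col, \cost) \subseteq \pcrr(\col, I)$ noted just before the lemma gives $V^\omega \setminus \pcrr(\col, I) \subseteq V^\omega \setminus \bcp(\col, \cost)$, so every play consistent with a winning strategy for Player~$1$ in $\game'$ automatically lies outside $\bcp(\col, \cost)$ and is won by Player~$1$ in $\game$ as well.

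For Player~$0$, let $\sigma$ be a finite-state winning strategy implemented by $\mem = (M, \init, \update)$, and fix a play $\rho$ consistent with $\sigma$ starting in some $v \in W$, so $\rho \in \pcrr(\col, I)$. I would split on whether $\cost(\rho) < \infty$. In that case no request of $\rho$ can be unanswered with cost~$\infty$; moreover $\rho \in \coB(I)$, so $\pcrr$-membership forces $\rho \in \parity(\col) \cup \rr(\col)$, leaving only finitely many unanswered requests. Since every answered request contributes $\paritydist_\cost(\rho, k) \le \cost(\rho) < \infty$, one obtains $\limsup_k \paritydist_\cost(\rho, k) \le \cost(\rho)$, and hence $\rho \in \bcp(\col, \cost)$.

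The main case is $\cost(\rho) = \infty$: then $\rho \notin \coB(I)$ forces $\rho \in \rr(\col)$, so every request of $\rho$ is answered and the second condition of $\bcp$ holds trivially. To establish the quantitative $\limsup$-bound I would pass to the finite restricted product $H$ on vertex set $V \times M$, obtained from $\arena \times \mem$ by retaining only the outgoing edge chosen by $\sigma$ at each Player~$0$ vertex; every infinite path of $H$ from $\{(v_0, \init(v_0)) : v_0 \in W\}$ projects to a play consistent with $\sigma$ and thus lies in $\pcrr$. For each odd color $c$, let $A_c = \{(v', m') \in V \times M : \col(v') \in \answer{c}\}$. The key structural claim is that no cycle of $H$ contained in $H \setminus A_c$ visits an increment-vertex: otherwise, considering a request position $k$ of color $c$ in $\rho$ whose infix $\rho_k, \ldots, \rho_{k'-1}$ preceding the first answer at $k'$ has cost exceeding $|V|\cdot|M|$, pigeonhole yields positions $k < j_1 < j_2 < k'$ with the same increment-state $(w, m_w) \in H \setminus A_c$. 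Then replacing $\rho$ after position $j_2 - 1$ by the segment $\rho_{j_1}, \ldots, \rho_{j_2 - 1}$ repeated ad infinitum produces a play consistent with $\sigma$ (because $\sigma$ is positional on $H$) whose color-$c$ request at position $k$ is never answered and whose cost is infinite, contradicting $\sigma$'s winning $\pcrr$. Consequently $\paritydist_\cost(\rho, k) \le |V|\cdot|M| + 1$ for every request position $k$, so $\limsup_k \paritydist_\cost(\rho, k) < \infty$ and $\rho \in \bcp(\col, \cost)$.

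The main obstacle is deriving this uniform cost bound; the finite-state hypothesis is used precisely here, converting $\sigma$'s qualitative $\pcrr$-winning guarantee into a quantitative bound on request-to-response cost via a pigeonhole/cycle argument in the finite product graph~$H$. Without the finite-state assumption, Player~$0$ could in principle answer every request but with unbounded delay, so the hypothesis is essential.
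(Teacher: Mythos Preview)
Your argument is correct and follows essentially the same route as the paper: the Player~$1$ case is the trivial inclusion, and for Player~$0$ the heart of the matter is the pigeonhole/pumping step---if a request stays open through more than $|V|\cdot|M|$ increment-edges, two of the intervening increment-positions share the same $(v,m)$-state, and pumping that segment yields a $\sigma$-consistent play with an unanswered request of infinite cost, contradicting $\pcrr$. The paper splits cases according to which disjunct of $\pcrr(\col,I)$ the play satisfies rather than on whether $\cost(\rho)$ is finite, but these are equivalent decompositions.

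One minor remark on exposition: the ``key structural claim'' that no cycle of $H$ inside $H\setminus A_c$ visits an increment-vertex is stated more generally than what you actually prove or need. As written it is not obviously true (a cycle sitting in a part of $H$ not reachable from $W$ after a color-$c$ request would not yield any contradiction), and your subsequent argument does not establish it---you pass directly to a hypothetical high-cost request in the specific play $\rho$ and extract the loop from there. That direct argument is exactly what the paper does and is sufficient; the structural claim can simply be dropped.
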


\begin{proof}
The statement for $i = 1$ follows from the inclusion $V^\omega \setminus
\pcrr(\col, I) \subseteq V^\omega \setminus \bcp(\col, \cost)$.

Now, consider the case $i = 0$ and let $\sigma$ be a finite-state winning
strategy for Player~$0$ in $\game'$ from $W$. We argue that $\sigma$ is also a
winning strategy for Player~$0$ for $\game$ from $W$: let $\rho$ be consistent
with $\sigma$ and starting in $W$, which implies $\rho \in \pcrr(\col, I)$.

If $\rho$ satisfies $\parity(\col)$ and has only finitely many increments (say
$b$ many), then all but finitely many requests are answered with cost less
than $b + 1$ and there is no unanswered request of cost~$\infty$,
i.e., $\rho \in \bcp(\col, \cost)$.

Otherwise, $\rho$ satisfies $\rr(\col)$, i.e., every request in $\rho$ is
answered. We show that every request in $\rho$ is answered with cost at most
$b = |V| \cdot |\sigma|$ (where $|\sigma|$ is the size of the memory structure
implementing~$\sigma$), which implies that $\rho \in \bcp(\col, \cost)$.
Towards a contradiction, assume that there is a request that is answered with
cost greater than $b$. Then, there are two positions between the request and
its answer having the same vertex, an increment-edge in between them, and such
that the memory structure implementing~$\sigma$ assumes the same state at both
positions. Hence, using this loop forever is also a play that is consistent
with $\sigma$. However, this play contains an unanswered request of
cost~$\infty$ and therefore does not satisfy $\pcrr(\col, I)$. This yields the
desired contradiction to the fact that $\sigma$ is a winning strategy.
\end{proof}

\begin{cor}
\label{cor_bcp2pcrr_wr}
Let $\game$ and $\game'$ as in Lemma~\ref{lem_bcp2pcrr}. Then, $W_i(\game) =
W_i(\game')$ for $i \in \set{0,1}$.
\end{cor}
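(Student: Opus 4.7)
The plan is to derive the corollary from Lemma~\ref{lem_bcp2pcrr} together with determinacy of both games. The key point that makes Lemma~\ref{lem_bcp2pcrr} immediately useful is that $\pcrr(\col, I)$ is $\omega$-regular (as noted in the text, it is a boolean combination of a parity, a co-B\"uchi, and a request-response condition). By the B\"uchi--Landweber theorem, the game $\game'$ is determined and both players have \emph{finite-state} uniform winning strategies on their winning regions.

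First I would establish the inclusion $W_i(\game') \subseteq W_i(\game)$ for $i \in \set{0,1}$. Pick any $v \in W_i(\game')$; by B\"uchi--Landweber, Player~$i$ has a finite-state winning strategy in $\game'$ from $W_i(\game')$, in particular from $v$. Lemma~\ref{lem_bcp2pcrr} then immediately promotes this to a winning strategy for Player~$i$ in $\game$ from $v$, so $v \in W_i(\game)$.

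For the reverse inclusion $W_i(\game) \subseteq W_i(\game')$, I would argue by contradiction using determinacy of both games. Suppose $v \in W_i(\game) \setminus W_i(\game')$. Since $\game'$ is determined (it is $\omega$-regular), $v \in W_{1-i}(\game')$. Applying the previously established direction with the roles of the players swapped yields $v \in W_{1-i}(\game)$. But $\game$ is determined by Lemma~\ref{lemma_costparitydeterminacy} and $W_0(\game) \cap W_1(\game) = \emptyset$, contradicting $v \in W_i(\game)$.

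There is no real obstacle here; the only subtlety is making sure Lemma~\ref{lem_bcp2pcrr} is applied to a \emph{finite-state} witness, which is precisely where $\omega$-regularity of $\pcrr(\col, I)$ (and hence B\"uchi--Landweber) is invoked. The two inclusions together give $W_i(\game) = W_i(\game')$ for $i \in \set{0,1}$, as required.
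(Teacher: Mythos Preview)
Your argument is correct and is precisely the intended one: the paper states the result as an immediate corollary without an explicit proof, and the natural justification is exactly what you wrote, namely combining Lemma~\ref{lem_bcp2pcrr} with finite-state determinacy of the $\omega$-regular game~$\game'$ (B\"uchi--Landweber) and Borel determinacy of~$\game$ (Lemma~\ref{lemma_costparitydeterminacy}). Your care in noting that Lemma~\ref{lem_bcp2pcrr} must be applied to a \emph{finite-state} witness is the only point worth making explicit.
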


We now show how to reduce $\game' = (\arena, \pcrr(\col,I))$ to a 
parity game only linearly larger than $\game$. 
Let $O$ be the set of odd colors in $\col(V)$. We define the memory
structure~$\mem = (M, \init, \update)$ with $M = O \cup \set{ \bot}$,
\[\init(v) = \begin{cases}
\col(v) &\text{if $\col(v)$ odd,}\\
\bot    &\text{otherwise,}
\end{cases}\]
$\update(\bot, v) = \init(v)$, and 
\[\update(c, v) = \begin{cases}
\max(\col(v), c) &\text{if $\col(v)$ odd,}\\
\bot   			&\text{if $\col(v) \in \answer{c}$, } \\
c 				&\text{otherwise.}
\end{cases}\]
Intuitively, $\update^+(w)$ is the largest unanswered request in $w$, and is
$\bot$ if every request in $w$ is answered. Furthermore, let $\ell$ be an odd
color that is larger than every color in $\col(V)$. Now, we define a
coloring~$\col_\mem$ of the arena~$\arena \times \mem$ via
\[\col_\mem(v, m) = \begin{cases}
\ell + 1  & \text{if $m =\bot$,}\\
\ell   & \text{if $m \not=\bot$ and $v \in I$},\\ 
\col(v)   & \text{otherwise.}
\end{cases}\]
So, having all requests answered (i.e., being in memory state~$\bot$) is most
desirable for Player~$0$ while visiting increment-vertices (i.e., vertices in
$I$) while having an open request is most desirable for Player~$1$. If neither
of these occurs infinitely often, then the old coloring~$\col$ determines the
winner (without taking the memory states into account).

\begin{lem}
\label{lem_pcrr2parity}
Let $\game ' = (\arena, \pcrr(\col, I))$ and $\game'' = (\arena \times \mem,
\parity(\col_\mem))$. Then, $\game' \le_\mem \game''$.
\end{lem}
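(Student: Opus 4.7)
The plan is to verify the two parts of reducibility separately: the arena identity $\arena' = \arena \times \mem$ holds by construction of $\game''$, so the substantive task is to show, for every play $\rho = \rho_0 \rho_1 \cdots$ in $\game'$ and its extended play $\rho' = (\rho_0, m_0)(\rho_1, m_1) \cdots$ in $\game''$, that $\rho \in \pcrr(\col, I)$ if and only if $\rho' \in \parity(\col_\mem)$.

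The key preliminary step is to establish, by induction on prefix length, the intended semantics of the memory: $m_n = \bot$ exactly when every request in $\rho_0 \cdots \rho_n$ has been answered by position $n$, and otherwise $m_n$ equals the largest color of an unanswered request in $\rho_0 \cdots \rho_n$. While $m_n \ne \bot$, the value of $m_n$ is non-decreasing and can return to $\bot$ only upon seeing an even color $c' \ge m_n$; such a color lies in $\answer{c}$ for every $c \le m_n$, so it simultaneously answers \emph{every} currently open request, which is exactly what makes the drop to $\bot$ consistent with the invariant.

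Given this invariant, I split into three cases on the tail behavior of $\rho'$. \textbf{Case 1:} if $m_n = \bot$ infinitely often, the maximal color $\ell + 1$ of $\col_\mem$ appears infinitely often, so $\rho' \in \parity(\col_\mem)$; moreover every request in $\rho$ at position $k$ is answered, since picking any later $n$ with $m_n = \bot$ forces that request to be resolved by position $n$, so $\rho \in \rr(\col) \subseteq \pcrr(\col, I)$. \textbf{Case 2:} if $m_n \ne \bot$ eventually and $\rho_n \in I$ infinitely often, then from some point on only colors $\ell$ (at $I$-vertices) and $\col(\rho_n) < \ell$ (elsewhere) appear in $\col_\mem$, so the maximum color seen infinitely often is $\ell$ and $\rho' \notin \parity(\col_\mem)$; by monotonicity $m_n$ stabilizes at some $c \ne \bot$, and any answer to the associated color-$c$ request would force a return to $\bot$, so this request is never answered, yielding $\rho \notin \rr(\col)$, while $\rho \notin \coB(I)$ gives $\rho \notin \pcrr(\col, I)$. \textbf{Case 3:} if $m_n \ne \bot$ eventually and $\rho_n \in I$ only finitely often, then both $\ell$ and $\ell + 1$ occur only finitely often in $\col_\mem$, so from some point on $\col_\mem(\rho_n, m_n) = \col(\rho_n)$, giving $\rho' \in \parity(\col_\mem)$ iff $\rho \in \parity(\col)$; combined with $\rho \in \coB(I)$ and $\rho \notin \rr(\col)$ (as in Case 2), this is equivalent to $\rho \in \pcrr(\col, I)$.

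The main obstacle is the memory-semantics invariant, and specifically its consequence that $m_n$ returning to $\bot$ forces \emph{every} open request to be answered at once; without this one could fear that repeated visits to $\bot$ only ensure that the currently maximal outstanding request is answered, rather than all of them. All three cases then follow routinely once the invariant is in hand.
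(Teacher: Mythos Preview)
Your proof is correct and follows essentially the same approach as the paper. The paper also relies on the invariant that $m_j$ tracks the largest unanswered request (with $\bot$ meaning all requests are answered) and makes the same case distinctions; the only cosmetic difference is that the paper argues the two implications separately, whereas you partition all plays into three exhaustive cases and establish the biconditional directly in each---your explicit emphasis on why a return to $\bot$ clears \emph{every} open request is a point the paper leaves implicit.
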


\begin{proof}
Let $\rho' = v_0 v_1 v_2 \cdots$ be a play in $\arena$ and $\rho'' = (v_0,
m_0) (v_1, m_1) (v_2, m_2) \cdots$ be its extended play in $\arena \times
\mem$. By construction, $m_j$ is the largest unanswered request in $v_0 \cdots
v_j$. We have to show that the same player wins both $\rho'$ in $\game'$ and
$\rho''$ in $\game''$.

Assume $\rho' \in \pcrr(\col, I)$. If $\rho' \in \rr(\col)$, then every
request is answered, i.e., $m_j$ is infinitely often equal to $\bot$. These
vertices have the largest color in $\game''$, which is even. Hence, $\rho''
\in \parity(\col_\mem)$. On the other hand, if $\rho' \in \parity(\col) \cap
\coB(I)$ but $\rho' \notin \rr(\col)$, then $\rho'$ and $\rho''$ each have a
suffix (starting after the last occurrence of an increment-vertex or the last unanswered request,
whichever comes last) such that
these suffixes have the same sequence of colors. Hence, $\rho''$ satisfies
$\parity(\col_\mem)$.

Conversely, assume $\rho'' \in \parity(\col_\mem)$. If $\ell+1$ is the maximal
color seen infinitely often, then $m_j$ is infinitely often equal to $\bot$,
which implies that every request in $\rho'$ is answered, i.e., $\rho' \in
\rr(\col) \subseteq \pcrr(\col, I)$. On the other hand, if the maximal color
seen infinitely often is smaller than $\ell+1$ (but still even, since we
assume Player~$0$ wins $\rho''$), then there are only finitely many
increment-vertices in $\rho'$ and the plays~$\rho'$ and $\rho''$ each have a
suffix such that these suffixes have the same sequence of colors. Hence,
$\rho'$ satisfies $\parity(\col)$. Altogether, we have $\rho' \in 
\parity(\col) \cap \coB(I) \subseteq \pcrr(\col, I)$. \end{proof}

\begin{cor}
\label{cor_memreq_bcp}
In bounded parity games with costs, both players have uniform finite-state
winning strategies of size $d+1$, where $d$ is the number of odd colors in the
game. 
\end{cor}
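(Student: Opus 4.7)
The plan is to chain together the three ingredients already established: the reduction from the bounded cost-parity game to the $\pcrr$-game (Lemma~\ref{lem_bcp2pcrr} and Corollary~\ref{cor_bcp2pcrr_wr}), the reduction $\game' \le_\mem \game''$ to a parity game (Lemma~\ref{lem_pcrr2parity}), and the classical result that parity games admit uniform positional winning strategies for both players (Emerson--Jutla, Mostowski). The key observation for the memory bound is that the memory structure $\mem = (M, \init, \update)$ constructed just before Lemma~\ref{lem_pcrr2parity} has state set $M = O \cup \set{\bot}$, so $|M| = d+1$, where $d = |O|$ is the number of odd colors appearing in $\arena$.

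First I would apply Lemma~\ref{lem_pcrr2parity} to get the reduction $\game' \le_\mem \game''$, then use positional determinacy of parity games to obtain uniform positional winning strategies for both players in $\game''$. By the uniform version of Lemma~\ref{lemma_reductionlemma} (stated just after that lemma), each player then has a uniform finite-state winning strategy for $\game'$ implemented by $\mem$, hence of size $d+1$. Second, I would transport these strategies back to $\game$: Lemma~\ref{lem_bcp2pcrr} states that a finite-state winning strategy for Player~$i$ in $\game'$ from a set $W$ is also winning in $\game$ from $W$, and Corollary~\ref{cor_bcp2pcrr_wr} guarantees $W_i(\game') = W_i(\game)$, so uniformity is preserved. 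The composition yields uniform finite-state winning strategies for $\game$ of size $d+1$ for both players.

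There is essentially no technical obstacle here; the corollary is a bookkeeping step that composes two already-proved reductions and invokes the positional determinacy of parity games. The only point requiring a little care is tracking uniformity through both reductions, but each ingredient (Lemma~\ref{lemma_reductionlemma}, Lemma~\ref{lem_bcp2pcrr}, and Corollary~\ref{cor_bcp2pcrr_wr}) either preserves uniformity or identifies the relevant winning regions, so this causes no difficulty.
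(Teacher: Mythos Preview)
Your proposal is correct and follows essentially the same approach as the paper's own proof, which simply observes that the reduction from $\pcrr$ games to parity games (via the memory structure $\mem$ with $|M|=d+1$) yields uniform finite-state winning strategies of size $d+1$ for both players, and then invokes Lemma~\ref{lem_bcp2pcrr} to transfer them to the bounded parity game with costs. Your write-up just spells out the intermediate steps (Lemma~\ref{lemma_reductionlemma}, Corollary~\ref{cor_bcp2pcrr_wr}, and the uniformity tracking) more explicitly than the paper does.
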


\begin{proof}
The reduction from games with winning condition $\pcrr(\col, I)$ to parity
games yields uniform finite-state winning strategies of size $d+1$ for such
games. Now apply Lemma~\ref{lem_bcp2pcrr}.
\end{proof}

In the next subsection, we show this bound to be tight for Player~$1$ and show
that Player~$0$ even has positional winning strategies.

The reduction from $\pcrr$ games to parity games and Lemma~\ref{lem_bcp2pcrr}
show that solving a parity game suffices to solve a bounded parity game with
costs and proves the following theorem. Here, $n$ is the number of vertices,
$m$ is the number of edges, and $d$ is the number of colors in the game.

\begin{thm}
\label{theorem_solvingbcp}
Given an algorithm that solves parity games in time $T(n,m,d)$, there is an
algorithm that solves bounded parity games with costs in time~$O(T(dn,
dm,d+2))$.
\end{thm}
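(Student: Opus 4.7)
The plan is to compose the three reductions already assembled in this section. Starting from a bounded parity game with costs $\game=(\arena,\bcp(\col,\cost))$, I first perform the subdivision preprocessing described at the beginning of the subsection so that the cost information becomes a state property; this at most doubles the vertex and edge counts, preserves winning regions and memory requirements, and so I may assume from now on that $\arena$ already has the form required by Lemma~\ref{lem_pcrr2parity}, with $n$ vertices, $m$ edges, and $d$ colors.

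Next, I invoke Corollary~\ref{cor_bcp2pcrr_wr} to pass to the $\omega$-regular game $\game'=(\arena,\pcrr(\col,I))$ without changing winning regions, followed by Lemma~\ref{lem_pcrr2parity} to reduce $\game'$ to the parity game $\game''=(\arena\times\mem,\parity(\col_\mem))$. The memory $\mem$ has $|O|+1\le d+1$ states, so the product arena has at most $(d+1)n=O(dn)$ vertices and $O(dm)$ edges; the coloring $\col_\mem$ uses the original colors together with the two fresh colors $\ell$ and $\ell+1$, giving at most $d+2$ distinct colors overall. Applying the assumed parity-game algorithm to $\game''$ therefore costs $O(T(dn,dm,d+2))$, which is exactly the bound claimed in the theorem.

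To close the loop, I need to translate solutions back. A positional winning strategy for Player~$i$ in $\game''$ lifts, by Lemma~\ref{lemma_reductionlemma}, to a finite-state strategy in $\game'$ implemented by $\mem$, and by Lemma~\ref{lem_bcp2pcrr} this strategy is also winning in $\game$; the winning regions are recovered via $W_i(\game)=\{v\mid (v,\init(v))\in W_i(\game'')\}$. I do not anticipate a real obstacle: the substantive content lies in the already-proved Lemmas~\ref{lem_bcp2pcrr} and~\ref{lem_pcrr2parity}, and what remains is the size bookkeeping summarised above together with verifying that the subdivision step is transparent on winning regions and strategies over the original vertices, which is immediate since a strategy on the subdivided arena induces a strategy on the original one by shortcutting through the fresh vertices.
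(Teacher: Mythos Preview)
Your proposal is correct and follows exactly the route the paper intends: the theorem is stated there without a standalone proof, merely as the consequence of chaining Lemma~\ref{lem_bcp2pcrr} (equivalently Corollary~\ref{cor_bcp2pcrr_wr}) with the reduction of Lemma~\ref{lem_pcrr2parity}, and you have simply made the size bookkeeping explicit. One cosmetic remark: the subdivision step adds one vertex per increment-edge, so it need not literally ``double'' the vertex count when $m\gg n$; but since you then take $n,m,d$ to denote the post-subdivision parameters (as the paper implicitly does), this does not affect the stated bound.
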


Furthermore, since solving parity games is in $\np \cap \conp$ and the blowup
in our reduction is polynomial, we obtain the following remark (note that this was recently improved to $\up \cap \coup$~\cite{DBLP:conf/lpar/MogaveroMS13}).

\begin{rem}
\label{remark_compcomplexity_bcp}
The following problem is in $\np \cap \conp$: given a bounded parity game with 
costs~$\game$, $i\in\set{0,1}$, and a vertex $v$, is $v \in W_i(\game)$?
\end{rem}

Let us conclude by considering the special case of a bounded parity game with
costs~$\game$ in which every edge is an increment-edge, i.e., where $\game$ is
a bounded parity game. 
These games, called ``bounded parity games'' in~\cite{ChatterjeeHenzingerHorn09} 
can be solved in polynomial time. 
In this case, $\pcrr(\col, \cost)$ is
equal to $\rr(\col)$, which is a request-response
condition~\cite{WallmeierHuettenThomas03} where the sets of requests and
responses form a hierarchy, induced by the order on the colors. It is easy to
derive from the reduction to B\"{u}chi games~\cite{WallmeierHuettenThomas03}
that such games can be solved in polynomial time. Hence, we have recovered the
result of~\cite{ChatterjeeHenzingerHorn09} on bounded parity games as a
special case of our algorithm, although the running time of this algorithm is
worse than the running time of the algorithm presented
in~\cite{ChatterjeeHenzingerHorn09}.

\subsection{Memory Requirements in Bounded Parity Games with Costs}

In this subsection, we determine the exact memory requirements for both
players in bounded parity games with costs. We begin by considering Player~$0$
and improve on Corollary~\ref{cor_memreq_bcp}.

\begin{lem}
\label{lem_memory_bcp_pl0}
In bounded parity games with costs, Player~$0$ has uniform positional winning
strategies.
\end{lem}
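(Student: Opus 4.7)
My plan is to promote the positional winning strategy obtained in the expanded parity game~$\game''$ to a positional strategy in the original game~$\game$, by exploiting the specific structure of the memory~$\mem$.

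By Corollary~\ref{cor_memreq_bcp}, Lemma~\ref{lem_pcrr2parity}, and positional determinacy of parity games, Player~$0$ has a uniform positional winning strategy~$\sigma^\ast$ in $\game'' = (\arena \times \mem, \parity(\col_\mem))$. I aim to show that $\sigma^\ast$ can be chosen so that the move $\sigma^\ast(v, m)$ depends only on $v$ (for winning states $(v,m)$), whence it descends to a positional strategy $\sigma(v) := \sigma^\ast(v, m)$ on $\arena$ that is winning in $\game$ by Lemma~\ref{lem_bcp2pcrr}.

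To this end I would equip $M = O \cup \{\bot\}$ with the total order~$\succeq$ defined by $\bot \succ c$ for every odd $c \in O$ and $c \succ c'$ for odd $c < c'$, so that $\succeq$-larger memory states are more favourable to Player~$0$ (fewer or smaller open requests). Directly from the definitions of $\update$ and $\col_\mem$ I would verify two monotonicity facts: first, $m \succeq m'$ implies $\update(m, v) \succeq \update(m', v)$; second, $\col_\mem(v, m)$ depends on $m$ only via the dichotomy $m = \bot$ versus $m \ne \bot$, with $m = \bot$ yielding the top even priority $\ell + 1$. Together these yield upward-closure of $W_0(\game'')$ in the memory component, by a coupling argument in which a play from $(v, m')$ shadows a winning play from $(v, m)$ with $m' \succeq m$: the memory states remain $\succeq$-ordered along the coupling and priorities along the real play are pointwise equal to, or bumped up to $\ell + 1$ from, those of the shadow, so Player~$0$'s parity outcome is preserved. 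Consequently, at each $v \in V_0 \cap W_0(\game)$ the slice $\{m : (v,m) \in W_0(\game'')\}$ has a $\succeq$-minimum $m_v$, and the candidate positional strategy is $\sigma(v) := \sigma^\ast(v, m_v)$.

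The main obstacle is showing that $\sigma$ is indeed winning, equivalently that the extended play of any $\sigma$-play in $\game''$ satisfies $\parity(\col_\mem)$. A single-shadow coupling with $\sigma^\ast$ starting from $(v_0, m_{v_0})$ breaks after one step, because once memory states diverge the positional $\sigma^\ast$ may prescribe different moves at the common vertex. To overcome this I would re-anchor the shadow at every position: at step $j$ the actual memory $m_j$ satisfies $m_j \succeq m_{v_j}$ by monotonicity and upward-closure, so the move $\sigma(v_j) = \sigma^\ast(v_j, m_{v_j})$ keeps the extended play inside $W_0(\game'')$, and the actual priority at step $j$ is either $\col_\mem(v_j, m_{v_j})$ or is bumped up to $\ell + 1$. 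A global parity analysis then combines these local comparisons, using the specific role of $\ell + 1$ as the top and even priority in $\col_\mem$, to conclude that the limsup priority along the actual extended play is even. Carrying out this global step cleanly is where I expect the bulk of the work to lie.
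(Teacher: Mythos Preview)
Your plan coincides with the paper's proof: order $M$, establish monotonicity of $\update$ and the resulting closure of $W_0(\game'')$ in the memory coordinate, define the positional strategy via the worst winning memory state $m_v$ (the paper writes $\max(v)$, using the reversed order), and then argue that the extended play $\rho'' = (v_j, m_j)_j$ of any $\sigma$-play satisfies $\parity(\col_\mem)$.

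The one place where your plan is not yet an argument is precisely the step you flag, and your pointwise comparison alone will not close it: the re-anchored shadow $(v_j, m_{v_j})_j$ is in general \emph{not} a play in $\arena \times \mem$ (its second component need not evolve via $\update$), so you cannot invoke that $\sigma^\ast$ is winning to conclude that the shadow satisfies parity, and then the comparison ``actual priority equals shadow priority or $\ell+1$'' has nothing to anchor to. The paper resolves this by contradiction. If the maximal color seen infinitely often in $\rho''$ is odd, then $m_j \neq \bot$ from some position $n$ on; the key observation your outline is missing is that then the sequence $(m_{v_j})_{j \ge n}$ is monotone and hence eventually constant, because $\update(c,v) \neq \bot$ implies $\update(c,v) \ge c$ in the natural order on $M$ (together with the closure of $W_0(\game'')$ this gives $m_{v_{j+1}} \ge \update(m_{v_j}, v_{j+1}) \ge m_{v_j}$). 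From the stabilisation point onward the shadow \emph{is} a genuine $\sigma^\ast$-consistent play starting in $W_0(\game'')$ and therefore satisfies $\parity(\col_\mem)$; since on this suffix neither $m_j$ nor $m_{v_j}$ equals $\bot$, the actual and shadow priority sequences coincide, contradicting the assumption on $\rho''$.
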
 

\begin{proof}
Due to Lemma~\ref{lem_bcp2pcrr}, it suffices to prove the statement for
games~$\game' = (\arena, \pcrr(\col, I))$. Recall that we reduced such a game
to a parity game~$\game'' = (\arena \times \mem, \parity(\col_\mem))$ using a
memory structure~$\mem$ that keeps track of the largest open request.
Specifically, Lemma~\ref{lem_pcrr2parity} reads as follows: $v_0 \in
W_0(\game')$ if and only if $(v_0,\init(v_0)) \in W_0(\game'')$.

We order $M = O \cup \set{\bot}$ with the natural order on integers for $O$,
where $\bot$ is the minimal element. Player~$0$'s winning region in $\game''$
is downwards-closed, i.e., $(v,m) \in W_0(\game'')$ and $m' < m$ implies
$(v,m') \in W_0(\game'')$, which can be shown by mimicking a winning strategy
from $(v,m)$ to also win from $(v,m')$. Thus, for $v \in W_0(\game')$, we
define 
\[\max(v) = \max\set{m \in M \mid (v,m) \in W_0(\game'')}\enspace,\] 
which is well-defined as $(v,\init(v)) \in W_0(\game'')$.

Now, let $\sigma''$ be a uniform positional winning strategy for Player~$0$ in
the parity game~$\game''$. We define a positional strategy $\sigma'$ for
$\game'$ by using $\max(v)$, i.e., the worst memory state Player~$0$ could be
in at vertex~$v$ while still being able to win from there. Given a vertex~$v
\in W_0(\game')$, let $\sigma''(v, \max(v)) = (v', m')$. Using this, we define 
$\sigma'(v) = v'$. We show that $\sigma'$ is a uniform winning strategy for
Player~$0$ in $\game'$. Consider a play $\rho' = v_0 v_1 v_2 \cdots$ starting
in $v_0 \in W_0(\game')$ consistent with $\sigma'$, and $\rho'' = (v_0,
m_0)(v_1, m_1)(v_2, m_2) \cdots$ its extended play in $\arena \times \mem$. A
straightforward induction shows that for every $j$, we have $(v_j, m_j) \in
W_0(\game'')$, so $\max(v_j) \ge m_j$. We have to show $\rho' \in \pcrr(\col,I)$.

By Lemma~\ref{lem_pcrr2parity}, $\rho' \in \pcrr(\col,I)$ if and only if
$\rho'' \in \parity(\col_\mem)$. Assume towards a contradiction that the maximal
color seen infinitely often in $\rho''$ is odd. This implies that the memory
state $\bot$ appears finitely often, so after a position, say $n$, all memory
states are different from $\bot$. Furthermore, from position $n$, we
additionally have $\max(v_j) \le \max(v_{j+1})$; this follows from the
observation that if $\update(c,v) \neq \bot$, then $c \le \update(c,v)$.
Consider $\rho^* = (v_n, \max(v_n)) (v_{n+1}, \max(v_{n+1}))(v_{n+2},
\max(v_{n+2})) \cdots$. Since the sequence $(\max(v_j))_{j \ge n}$ is
non-decreasing, it is ultimately constant, say from position $n' \ge n$. 
The suffix starting from $n'$ is consistent with $\sigma''$ and starts in $W_0(\game'')$, so it satisfies
$\parity(\col_\mem)$ since $\sigma''$ is a winning strategy. Consequently,
$\rho^*$ contains finitely many increment-vertices, so $\rho''$ as well. After
the last increment-vertex, $\rho^*$ and $\rho''$ have the same colors, but
$\rho''$ does not satisfy $\parity(\col_\mem)$, a contradiction. 
\end{proof}

To conclude this subsection, we prove that the upper bound~$d+1$ on the memory
requirements of Player~$1$ proved in Corollary~\ref{cor_memreq_bcp} is tight.

\begin{lem}
\label{lem_memory_bcp_pl1}
For every $d \ge 1$, there is a bounded parity game with costs~$\game_d$ such 
that
\begin{itemize}

\item the arena of $\game_d$ is of linear size in $d$ and there are $d$ odd
colors in $\game_d$,

\item Player~$1$ has a uniform finite-state winning strategy for $\game_d$ 
from every vertex, which is implemented with $d+1$ memory states, but

\item there is a vertex from which Player~$1$ has no winning strategy that is 
implemented with less than $d+1$ memory states.

\end{itemize}		
\end{lem}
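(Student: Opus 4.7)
The plan is to construct $\game_d$ on linearly many vertices in $d$ with exactly $d$ odd colors, to show that Player~$1$ wins from every vertex via the memory structure $\mem$ from the proof of Corollary~\ref{cor_memreq_bcp} (which gives the upper bound of $d+1$ states for free), and then to establish the matching lower bound from a distinguished initial vertex $v_0$ via a pigeonhole argument over the internal memory states of any strategy that pretends to win from $v_0$ with fewer than $d+1$ states.

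The core of $\game_d$ is a single Player~$1$ vertex $v$ with $d+1$ outgoing edges, one per possible value of the memory $m \in M = \{\bot, 1, 3, \ldots, 2d-1\}$. The remainder of the arena, owned essentially entirely by Player~$0$, consists of request vertices $r_c$ of color $c$ for each odd $c \in \{1, 3, \ldots, 2d-1\}$ together with matching answer vertices of even color $c+1$, wired up so that from $v_0$ Player~$0$ can produce, for every $m \in M$, a play prefix $\pi_m$ that reaches $v$ with the tracked maximum unanswered request equal exactly to $m$. Each sub-arena entered from $v$ is designed so that the unique winning move at $v$ depends injectively on the entering memory value; the technical ingredient is to pair, inside each sub-arena, a Player~$1$-controlled increment self-loop with a Player~$0$-controlled probe whose even color is tuned so that the probe closes away exactly the ``wrong'' memory values while leaving the ``right'' one untouched, so that Player~$0$'s best response inside the $m$-th sub-arena is successful iff the entering memory differs from $m$.

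For the upper bound, applying Corollary~\ref{cor_memreq_bcp} to $\game_d$ yields a uniform finite-state winning strategy for Player~$1$ of size $d+1$, together with a straightforward case analysis showing $W_1(\game_d) = V$ from the design above. For the lower bound, fix any finite-state winning strategy $\sigma$ for Player~$1$ from $v_0$, implemented by a memory structure $\mem_\sigma = (M_\sigma, \init_\sigma, \update_\sigma)$. Each prefix $\pi_m$ induces an internal state $s_m = \update_\sigma^+(\pi_m) \in M_\sigma$. If $|M_\sigma| < d+1$, the pigeonhole principle yields distinct $m, m' \in M$ with $s_m = s_{m'}$, so $\sigma$ prescribes the same successor $\nxt(v, s_m)$ at $v$ after both prefixes; but by the injective design of the sub-arenas at most one of these two continuations is winning, so one of the two resulting plays is lost, contradicting the fact that $\sigma$ wins from $v_0$.

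The main obstacle is the construction of the sub-arenas implementing the ``entering memory is exactly $m$'' criterion. The naive instantiation as a single increment self-loop of color $m-1$ only gives the weaker ``entering memory $\ge m$'' test, which Player~$1$ could uniformly satisfy by always taking the smallest-index edge and would only yield a two-state lower bound; upgrading this to an injective winning criterion requires the careful interplay of an increment trap with Player~$0$-controlled probes of carefully chosen even colors, and this is where the bulk of the case analysis lives. One must also verify that the connectivity of the setup gadget actually realises every prefix $\pi_m$ without inadvertently handing Player~$0$ a globally winning move elsewhere in $\game_d$.
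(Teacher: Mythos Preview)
Your proposal is a plan rather than a proof: the central object---the family of sub-arenas at $v$ realising an injective ``the winning move depends on the exact value of the largest open request'' criterion---is never constructed, and you yourself flag this as ``the main obstacle'' where ``the bulk of the case analysis lives.'' Without that construction there is nothing to evaluate. More to the point, it is not clear such gadgets exist with linearly many vertices. Your own remark that the naive increment self-loop only tests ``entering memory $\ge m$'' is the heart of the difficulty: in the bounded cost-parity condition a larger open request dominates smaller ones, so any trap that catches Player~$0$ when the open request equals $m$ will typically also catch her when it exceeds $m$. Making the test two-sided (Player~$1$ wins iff the open request is \emph{exactly} $m$) requires giving Player~$0$ a way to exploit a too-large pending request inside the $m$-th sub-arena, and your sketch of ``probes of carefully chosen even colors'' does not explain how this is achieved without simultaneously letting Player~$0$ escape in the intended case $m$.

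The paper sidesteps this entirely by replacing the injective-move argument with a pumping argument. Its arena is a Player~$0$ hub of color~$0$ with $d$ blades; blade~$c$ (for each odd $c$) consists of a Player~$1$ vertex of color~$c-1$ with an increment self-loop, followed by a vertex of color~$2d$, then a vertex of color~$c$, then back to the hub. For the lower bound one considers the single play prefix in which Player~$0$ enters the blades in increasing order $1,3,\ldots,2d-1$. After blade~$2j-1$ the open request is $2j-1$; the next self-loop vertex has color~$2j$, which answers that request, so any winning strategy for Player~$1$ must eventually leave every blade. This yields $d+1$ visits to the hub with internal memory states $m_0,\ldots,m_d$. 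If $m_j = m_{j'}$ for some $j<j'$, the segment between them can be pumped, producing a play consistent with Player~$1$'s strategy whose maximal recurring color is $2d$ with bounded cost between successive occurrences---hence a play in $\bcp(\col,\cost)$, contradicting the assumption that the strategy wins. No exact-value gadget is needed: the monotone test you dismissed as too weak is precisely what forces Player~$1$ to leave each blade, and distinctness of the $m_j$ follows not from different required moves at a single vertex but from the fact that any repetition yields a pumpable loop that is good for Player~$0$.
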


\begin{proof}
We begin by describing the game by an example: Figure~\ref{fig_lowerboundsBP1}
depicts the game~$\game_4$, where the numbers in the vertices denote their
colors. Since each edge is an increment-edge, we do not label them as such in
the picture. The arena consists of a hub vertex colored by $0$ and four
disjoint blades, which are identified by the odd color of their outermost
vertex, i.e., by the colors $1,3,5$ and $7$ (which is $2 \cdot 4 - 1$). From
the hub, Player~$0$ can enter the blade for an odd color $c$ at a vertex of color~$c-1$
(which is even) which has a self-loop and an edge to a vertex of color~$8$
(which answers every request in the game). This vertex has only one outgoing
edge to a vertex of color~$c$ (this is the identifying color). Again, this
vertex has only one successor, the hub. In general, the arena of $\game_d$ has
$d$ blades, one for each color in $\set{1, 3, \ldots, 2d-1}$, the hub has
color $0$, and the second vertex in each blade has color $2d$ and thereby
answers every request. Furthermore, every edge is an increment-edge.

\begin{figure}
\begin{center}
\begin{tikzpicture}[scale = .9]
\begin{scope}
\tikzset{p0/.style = {shape = circle,    draw, thick, minimum size = 0.4cm}}
\tikzset{p1/.style = {shape = rectangle, minimum size =.4cm, draw, thick}}

\node[p0] 	at (0,0) 	(center)	{$0$};

\node[p1]	at (1.0,1.0)(11)		{$0$};
\node[p0]	at (2,2)	(12)		{$8$};
\node[p0]	at (3,3)	(13)		{$1$};

\node[p1]	at (1,-1)	(21)		{$2$};
\node[p0]	at (2,-2)	(22)		{$8$};
\node[p0]	at (3,-3)	(23)		{$3$};

\node[p1]	at (-1,-1)	(31)		{$4$};
\node[p0]	at (-2,-2)	(32)		{$8$};
\node[p0]	at (-3,-3)	(33)		{$5$};

\node[p1]	at (-1,1)	(41)		{$6$};
\node[p0]	at (-2,2)	(42)		{$8$};
\node[p0]	at (-3,3)	(43)		{$7$};

\end{scope}

\path[thick]
(center)	edge				(11)
(11)		edge				(12)
(12)		edge				(13)
(13)		edge[bend left]		(center)
(11)		edge[loop above]	()

(center)	edge				(21)
(21)		edge				(22)
(22)		edge				(23)
(23)		edge[bend left]		(center)
(21)		edge[loop right]	()

(center)	edge				(31)
(31)		edge				(32)
(32)		edge				(33)
(33)		edge[bend left]		(center)
(31)		edge[loop below]	()

(center)	edge				(41)
(41)		edge				(42)
(42)		edge				(43)
(43)		edge[bend left]		(center)
(41)		edge[loop left]	();

\end{tikzpicture}
\end{center}
\caption{The bounded parity game with costs~$\game_4$ (every edge is an increment-edge).}
\label{fig_lowerboundsBP1}
\end{figure}
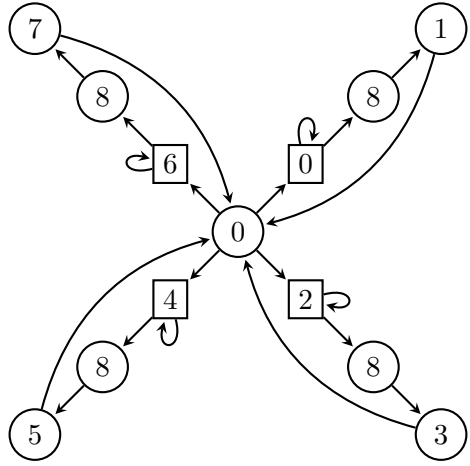

At the hub, Player~$0$ picks a blade (say of color~$c$) and then Player~$1$
decides whether to use the self-loop or to return to the hub. Note that
Player~$0$ loses, if she enters the blade of color~$c$ while there is an open
request of some color~$c' > c$, since Player~$1$ can use the self-loop of the
blade and thereby prevent an answer to the request~$c'$. On the other hand, if
Player~$1$ decides to leave the blade, all requests are answered and then
color~$c$ is requested. Note that this request is never answered to by moving
to the hub.

First, we show that Player~$1$ has a uniform finite-state winning strategy
from every vertex that is implementable with $d +1$ memory states. The memory
structure keeps track of the largest open request, i.e., we use states~$1, 3,
\ldots, 2d -1$ and an additional state~$\bot$ that is reached, if there is no
open request. Now, assume the current memory state is $m$ and the play is in a
vertex of Player~$1$, which is uniquely identified by its even color~$c$. If
$m = \bot$, then Player~$1$ moves to the (unique) successor of color~$2d$.
Now, assume $m \not= \bot$, i.e., $m$ is some odd color. If $m < c$, then
Player~$1$ again leaves $c$ by moving to the unique successor of color~$2d$.
If $m > c$, then Player~$1$ uses the self-loop forever.

Now, consider a play that is consistent with this strategy. If the current
memory state is $\bot$, then a request is raised within the next three moves
and the play returns to the hub, which implies that the memory is updated to
some odd color~$m$. From there, Player~$0$ has to move to some blade, say for
color~$c$ (which is odd). If $m > c$, then Player~$1$ uses the self-loop at
the vertex of color~$c - 1$ forever. The resulting play is winning for him,
since the request of $c$ is unanswered with cost~$\infty$. On the other hand,
if $m < c$, then Player~$1$ moves to the vertex of color~$c$ and then back to
the hub. While doing this, the memory is updated to a larger state, namely
$c$. Hence, the memory states along a play consistent with the strategy
described above are increasing, which means that at some point Player~$0$ has
to enter a blade for color $c < m$, where $m$ is the current memory state,
i.e., also an open request. Then, Player~$1$ will win by using the self-loop
of this blade. Hence, the strategy described above is a winning strategy from
every vertex and is implemented using $d+1$ memory states.

It remains to show that the upper bound~$d+1$ is tight. To this end, consider
a finite-state strategy~$\tau$ for Player~$1$ that is winning from the hub,
say $\tau$ is implemented by $(M, \init, \update)$. We show that $M$ contains
at least $d+1$ memory states. To this end, we define a sequence $m_0, m_1,
\ldots, m_d$ of $d+1$ memory states, as follows. Define $m_0 = \init(v)$,
where $v$ is the hub. Now, Consider the play where Player~$0$ moves from the 
hub to
the blade with color~$1$. Since $\tau$ is a winning strategy, Player~$1$ will
use the self-loop of this blade only finitely often, i.e., the hub is reached
again. We denote this play prefix by $w_1$ (which is consistent with $\tau$)
and define $m_1 = \update^+(w_1)$. Consider now the play where after $w_1$,
Player~$0$ moves to the blade with color~$3$. Again, Player~$1$ will use the
self-loop only finitely often and the hub is reached again. We denote the
prolongation of $w_1$ through this blade by $w_2$ and define $m_2 =
\update^+(w_1w_2)$. This process is continued for each blade in ascending
order. Since Player~$1$ has to leave each blade we obtain a sequence $m_0,
m_1, \ldots, m_d$ of memory states assumed at the visits of the hub and a play
prefix~$w_1w_2 \cdots w_{d}$ that is consistent with $\tau$, starts in the
hub, and satisfies $\update^+(w_1 w_2 \cdots w_j) = m_j$ for every $j\ge 1$.
Furthermore, each $w_1 w_2 \cdots w_j$ ends in the hub.

We argue that the states $m_0, m_1, \ldots, m_d$ are pairwise distinct. Assume
towards contradiction there are $j < j' \le d$ such that $m_j = m_{j'}$. Then,
the play $\rho = w_1 \cdots w_{j} \cdot ( w_{j+1} \cdots w_{j'} )^\omega$ is
consistent with $\tau$. However, the maximal color seen infinitely often
during $\rho$ is $2d$ (which answers every request), and there is a uniform
bound on the distance between the occurrences of $2d$. Hence, the play is
winning for Player~$0$ in $\game_d$, contradicting the fact that $\tau$ is a
winning strategy for Player~$1$. Hence, the states $m_j$ are indeed pairwise
distinct. Thus, every winning strategy has at least $d+1$ memory states.
\end{proof}

Recall that every edge in $\game_d$ is an increment-edge, i.e., $\game_d$ is a
bounded parity game. In~\cite{ChatterjeeHenzingerHorn09} an upper bound of two
on the memory requirements of Player~$1$ is claimed for bounded parity games.
The games presented here refute this claim: there is no constant bound on the
memory needed for Player~$1$ in bounded parity games.

\section{Solving Parity Games with Costs via Bounded Parity Games with Costs}
\label{section_solvingcp}

In this section, we show that being able to solve bounded parity games with
costs suffices to solve parity games with costs. Our algorithm is based on the
following lemma which formalizes this claim by relating the winning regions of
Player~$0$ in a parity game with costs and the bounded parity game with costs in the same arena. The algorithm presented here is equal to the one presented to solve finitary parity games by solving bounded parity games~\cite{ChatterjeeHenzingerHorn09}. However, our correctness proof is more general, since it has to deal with plays of finite cost, which do not exist in finitary and bounded parity games.

We begin by relating the winning regions of a parity game with costs and the bounded parity game with costs in the same arena.

\begin{lem}
\label{lem_cp2bcp_fixpoint}
Let $\game = (\arena, \cp(\col, \cost))$ and let $\game' = (\arena, \bcp(\col,
\cost))$.
\begin{enumerate}

\item\label{lem_cp2bcp_fixpoint_playerzerosubset} 
$W_0( \game' ) \subseteq W_0(\game)$.

\item\label{lem_cp2bcp_fixpoint_playerzeroempty} 
If $W_0( \game' ) = \emptyset$, then  $W_0( \game) = \emptyset$.

\end{enumerate}
\end{lem}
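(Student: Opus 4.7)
Part~(\ref{lem_cp2bcp_fixpoint_playerzerosubset}) is immediate from the inclusion $\bcp(\col,\cost) \subseteq \cp(\col,\cost)$ recorded in Remark~\ref{rem_winningconds_inclusions_parity}: any strategy for Player~$0$ that forces every consistent play into $\bcp(\col,\cost)$ a fortiori forces every consistent play into $\cp(\col,\cost)$, so every winning strategy for Player~$0$ in $\game'$ from a set $W \subseteq V$ is also winning in $\game$ from $W$.

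For Part~(\ref{lem_cp2bcp_fixpoint_playerzeroempty}), my plan is to prove the contrapositive $W_0(\game) \ne \emptyset \Rightarrow W_0(\game') \ne \emptyset$, which, using the determinacy provided by Lemma~\ref{lemma_costparitydeterminacy}, is equivalent to the implication $W_1(\game') = V \Rightarrow W_1(\game) = V$. So let $\tau'$ be a uniform finite-state winning strategy for Player~$1$ in $\game'$, which exists by Corollary~\ref{cor_memreq_bcp}. I will lift $\tau'$ to a (possibly infinite-memory) winning strategy $\tau$ for Player~$1$ in $\game$ by running $\tau'$ in \emph{phases} with periodic memory resets, so that every completed phase $n$ exhibits a witness position with $\paritydist_\cost$ value at least $n$, forcing $\limsup_k \paritydist_\cost(\rho,k)$ to diverge along every consistent play~$\rho$.

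Concretely, at the start of phase $n \ge 1$ Player~$1$ re-initialises $\tau'$'s memory to $\init(v)$ at the current vertex $v$ and subsequently follows $\tau'$; the phase ends at the first position $K$ for which there is some $k$ within the current phase such that $\col(\rho_k)$ is odd, no position in $[k,K-1]$ answers the request at $k$, and $\cost(\rho_k \cdots \rho_K) \ge n$. The main obstacle is to show that every phase terminates after finitely many steps. If phase $n$ did not terminate, the hypothetical infinite continuation obtained by following $\tau'$ forever from the phase-start vertex would be consistent with $\tau'$ and therefore, by uniformity, would lie outside $\bcp(\col,\cost)$. Either $\limsup_k \paritydist_\cost = \infty$ along this continuation, in which case some request is eventually answered with cost at least $n$ and the phase-end condition triggers at the answering position; or some request is unanswered with cost $\infty$, in which case the cost accumulated on that request crosses $n$ at a finite position and again triggers the phase end. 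Either alternative contradicts non-termination.

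It then remains to verify that $\tau$ is winning. Along any play $\rho$ consistent with $\tau$, the phases form an infinite sequence of consecutive disjoint intervals, and phase $n$ ends at some position $K_n$ exposing a witness $k_n$ with $\cost(\rho_{k_n} \cdots \rho_{K_n}) \ge n$ and no answer to the request at $k_n$ before $K_n$. Hence the first answer to this request in the full play $\rho$, if it exists, occurs at a position $\ge K_n$, so monotonicity of $\cost$ yields $\paritydist_\cost(\rho,k_n) \ge n$. Since the phases are consecutive, $k_n \to \infty$, and therefore $\limsup_k \paritydist_\cost(\rho,k) = \infty$, meaning $\rho \notin \cp(\col,\cost)$ and Player~$1$ wins $\rho$ in $\game$. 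Thus $W_1(\game) = V$ and, by determinacy of $\game$, $W_0(\game) = \emptyset$, as required.
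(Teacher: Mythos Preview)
Your approach for Part~(\ref{lem_cp2bcp_fixpoint_playerzeroempty}) is the same as the paper's---restart $\tau'$ in phases with a growing threshold---but your termination argument contains a genuine gap: it is \emph{not} true that every phase terminates. Take a phase starting at a vertex from which the play consistent with $\tau'$ traverses fewer than $n$ increment-edges in total and thereafter cycles through $\epsilon$-edges on vertices of odd colour. This play violates $\parity(\col)$ and hence lies outside $\bcp(\col,\cost)$, so $\tau'$ may well produce it; yet no request ever accumulates cost $\ge n$, and your phase-end trigger never fires. Your Case~1 reasoning (``$\limsup_k \paritydist_\cost = \infty$ implies some request is eventually answered with cost at least $n$'') is exactly what fails here: the $\limsup$ is infinite only because infinitely many requests are \emph{never} answered, each with $\paritydist_\cost = \infty$ but with fewer than $n$ increment-edges following it.

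The paper's proof uses the same restart strategy but explicitly treats the case where the counter is incremented only finitely often. In that case the play has a suffix~$\rho'$ consistent with $\tau'$ along which no request is open for $b$ increment-edges; this forces every request in $\rho'$ to be answered with cost $\le b$ or to be unanswered with only finitely many increment-edges following. Since $\tau'$ is winning in $\game'$, $\rho' \notin \bcp(\col,\cost)$, and the only remaining way this can happen is that the maximal colour seen infinitely often in $\rho'$ is odd, so $\rho \notin \parity(\col) \supseteq \cp(\col,\cost)$. Your argument is easily repaired along the same lines: drop the claim that every phase terminates and instead argue that if some phase runs forever, the resulting play violates the parity condition.
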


\begin{proof} 
(\ref{lem_cp2bcp_fixpoint_playerzerosubset}) 
This follows from the inclusion $\bcp(\col,\cost) \subseteq 
\cp(\col,\cost)$.

(\ref{lem_cp2bcp_fixpoint_playerzeroempty})
Due to determinacy, if $W_0(\game') = \emptyset$, then we have $W_1(\game') =
V$. Due to Corollary~\ref{cor_memreq_bcp}, Player~$1$ has a uniform
finite-state strategy~$\tau'$ that is winning from every vertex~$v$ in $\game'$. Consider a
play consistent with $\tau'$: either, for every $b\in\nats$, there is a request that is
open for the next $b$ increment-edges (the request could be the same for every
$b$), or the maximal color seen infinitely often is odd (i.e., there are
infinitely many unanswered requests). 

To win in the cost-parity game, Player~$1$ has to keep for every $b$ a different request open for at least $b$ increment-edges (or violate the parity condition). We define a strategy~$\tau$ for Player~$1$ in $\game$ that achieves this by restarting $\tau'$ every time a bound $b$ is exceeded. To this end, $\tau$ is guided by a counter~$b$ which is initialized with 1 and the strategy $\tau$ behaves like $\tau'$ until a request is open for $b$ increment-edges. If this is the case, $b$ is incremented and $\tau$ behaves like $\tau'$ does when it starts from the current vertex (forgetting the history of the play constructed so far).

Formally, $\tau$ is implemented by the infinite memory structure~$\mem = (M, \init, \update)$ where $M = \nats \times V^+$, $\init(v) = (1,v)$, and $\update((b,w),v)$ is defined as follows: if $w$ contains a request that is open for more than $b$ increment-edges, then $\update((b,w),v) = (b+1, v)$, i.e., the counter is incremented and the play prefix in the second component is reset. On the other hand, if $w$ does not contain a request that is open for more than $b$ increment-edges, then $\update((b,w),v) = (b, wv)$, i.e., the counter is unchanged and the vertex~$v$ is appended to the play prefix. Note that we always have $\update^*(\rho_0 \cdots \rho_k) = (b, \rho_{k'} \cdots \rho_k)$ for some non-empty suffix~$\rho_{k'} \cdots \rho_k$ of $\rho_{0} \cdots \rho_k$. Hence, we can define $\tau(\rho_0 \cdots \rho_k) = \tau'(\rho_{k'} \cdots \rho_k)$.

We show that $\tau$ is winning in $\game$ from every vertex, which implies
$W_0(\game) = \emptyset$. Let $\rho$ be a play that is consistent with
$\tau$ and distinguish two cases: if the counter in the first component of the memory states reached during $\rho$ is incremented infinitely
often, then $\rho$ contains for every~$b\in\nats$ a request that is open for at least
$b$ increment-edges, so $\rho \notin \cp(\col,\cost)$. On the other hand, if
the counter is incremented only finitely often (say to value $b$), then
there is a suffix~$\rho'$ of $\rho$ that is consistent with the strategy
$\tau'$. Since the counter is not incremented during $\rho'$, every request in
$\rho'$ is either answered with cost at most~$b$ or not answered, but only
followed by at most $b$ increment-edges. Hence $\tau'$ ensures that the maximal
color seen infinitely often in $\rho'$ is odd, i.e., $\rho' \notin
\parity(\col)$, so $\rho \notin \parity(\col)$, and a fortiori $\rho \notin
\cp(\col,\cost)$. 
\end{proof}

We have seen in
Example~\ref{example_costparity} that Player~$1$ needs infinite memory to win cost-parity games. Indeed, the winning strategy for Player~$1$ described in the example
proceeds as the strategy~$\tau$ described above. It requests color~$1$ at vertex~$a$, uses the
loop at vertex~$b$ to keep the request unanswered for several steps and then
forgets about this request. At this point, a new request has to be raised by
moving from $b$ back to $a$, thereby answering the old request at vertex~$c$.
This request is then kept unanswered for more increment-edges than the previous
one, and this goes on ad infinitum.

To conclude this subsection, we show how Lemma~\ref{lem_cp2bcp_fixpoint} can
be used to solve parity games with costs. Let $\game = (\arena,\cp(\col,
\cost))$. The following algorithm proceeds by iteratively removing parts of
$\arena$ that are included in the winning region of Player~$0$ in $\game$: we
have just proven that the winning region of Player~$0$ in the bounded parity
game with costs in $\arena$ is a subset of her winning region in the parity
game with costs in the same arena. Thus we can remove it and its attractor.
This is repeated until Player~$0$'s winning region in the bounded parity game
with costs is empty. In this case, her winning region in the parity game with
costs is empty as well, again due to Lemma~\ref{lem_cp2bcp_fixpoint}. This
idea is implemented in the following algorithm.

\begin{algorithm}
\begin{algorithmic}
 
 \STATE {$j \leftarrow 0$}; $W_0^j \leftarrow \emptyset$; $\arena_j \leftarrow \arena$ 
 \REPEAT
 \STATE {$j \leftarrow j + 1$}
 \STATE {$X_j \leftarrow W_0(\arena_{j-1},\bcp(\col, \cost))$}
 \STATE {$W_0^j \leftarrow W_0^{j-1} \cup \att{0}{\arena_{j-1}}{X_j}$}
 \STATE {$\arena_j \leftarrow \arena_{j-1} \setminus
 \att{0}{\arena_{j-1}}{X_j}$}
 \UNTIL {$X_j = \emptyset$}
 \RETURN{$W_0^j$}
\end{algorithmic} 
\caption{A fixed-point algorithm for solving $(\arena, \cp(\col, \cost))$.}
\label{algorithm_fixpoint}
\end{algorithm}

\begin{exa} 
Running on the parity game with costs of
Figure~\ref{figure_arena}, Algorithm~\ref{algorithm_fixpoint} computes
$X_1=\set{g}$ and $W_0^1 = \set{f,g}$, $X_2 = \set{e}$ and $W_0^2 =
\set{d,e,f,g}$, and $X_3 = \emptyset$. Thus, it returns $W_0^2$, which is the
winning region of Player~$0$ in the game. 
\end{exa}

Next, we show the algorithm to be correct and bound its number of iterations. 

\begin{lem}
\label{lem_fixpointalgo_correctness}
Let $\game$ be a parity game with costs with $n$ vertices. 
Algorithm~\ref{algorithm_fixpoint} returns the winning region~$W_0(\game)$ 
after at most $n+1$ iterations.
\end{lem}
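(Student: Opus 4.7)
My plan is to prove three properties separately: termination after at most $n+1$ iterations, soundness ($W_0^j \subseteq W_0(\game)$ at every iteration), and completeness ($W_0(\game) \subseteq W_0^j$ at the terminating iteration). Termination is immediate: whenever $X_j \neq \emptyset$, we have $\att{0}{\arena_{j-1}}{X_j} \supseteq X_j \neq \emptyset$, so $|\arena_j| < |\arena_{j-1}|$; starting from $|\arena_0| = n$, at most $n$ non-terminating iterations can occur before the algorithm exits.

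For soundness I would use induction on $j$, the base case $W_0^0 = \emptyset$ being trivial. For the inductive step I combine three strategies for Player~$0$: the attractor strategy from $\att{0}{\arena_{j-1}}{X_j}$ towards $X_j$ in $\arena_{j-1}$; a winning strategy from $X_j$ in $(\arena_{j-1}, \bcp(\col,\cost))$, which by Lemma~\ref{lem_cp2bcp_fixpoint}(\ref{lem_cp2bcp_fixpoint_playerzerosubset}) also wins $\cp$ in $\arena_{j-1}$; and, by the induction hypothesis, a winning strategy for $\game$ from $W_0^{j-1}$. The composite strategy follows the attractor until $X_j$ is reached, then the $\cp$-winning strategy in $\arena_{j-1}$ (which keeps plays inside $W_0(\arena_{j-1}, \cp) \subseteq \arena_{j-1}$ because $\cp$ is prefix-independent, so that winning region is a trap for Player~$1$); and if Player~$1$ ever moves the play into $W_0^{j-1}$, Player~$0$ switches to the IH strategy. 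An easy induction using that traps compose shows $\arena_{j-1}$ is itself a trap for Player~$0$ in $\arena$, so Player~$0$ cannot leave $\arena_{j-1}$ on her own; switching to the IH strategy can therefore only be triggered by Player~$1$, and when triggered the suffix lies in $W_0^{j-1}$, where the IH yields the win. Prefix-independence of $\cp$ (Remark~\ref{rem_winningconds_inclusions_parity}) means Player~$0$ wins the entire play regardless of whether and when the switch happens.

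For completeness, at termination $X_j = W_0(\arena_{j-1}, \bcp) = \emptyset$, so Lemma~\ref{lem_cp2bcp_fixpoint}(\ref{lem_cp2bcp_fixpoint_playerzeroempty}) yields $W_0(\arena_{j-1}, \cp) = \emptyset$. By determinacy (Lemma~\ref{lemma_costparitydeterminacy}), Player~$1$ has a uniform winning strategy in $(\arena_{j-1}, \cp)$; since $\arena_{j-1}$ is a trap for Player~$0$ in $\arena$, Player~$1$ can always stay inside $\arena_{j-1}$ and Player~$0$ is forced to remain there, so this strategy transfers verbatim to $\game$ and wins from every vertex of $\arena_{j-1} = V \setminus W_0^j$. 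Hence $W_0(\game) \subseteq W_0^j$.

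The principal obstacle is the case analysis in soundness: one must verify that the composite strategy is well-defined on every history it encounters, and that its plays either stay in $\arena_{j-1}$ (where the attractor strategy reaches $X_j$ and the $\cp$-winning strategy takes over, keeping the play in $W_0(\arena_{j-1},\cp)$ forever) or are diverted by Player~$1$ into $W_0^{j-1}$ (where the induction hypothesis handles the suffix). Given prefix-independence of $\cp$ and the trap structure produced by attractor removal, this amounts to careful bookkeeping rather than a conceptual difficulty.
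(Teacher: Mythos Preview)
Your proof is correct and follows the same overall skeleton as the paper's (termination by strict shrinking of the arena, soundness by assembling attractor and $\bcp$-winning strategies across the layers $X_j$, completeness via Lemma~\ref{lem_cp2bcp_fixpoint}(\ref{lem_cp2bcp_fixpoint_playerzeroempty}) and the trap property of $V \setminus W_0^t$). The completeness argument is essentially identical to the paper's.

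The one genuine organizational difference is in the soundness direction. You proceed by induction on $j$, building at each stage a history-dependent composite strategy (attract to $X_j$; then play the $\bcp$-winning strategy viewed as a $\cp$-winning strategy in $\arena_{j-1}$; fall back to the induction hypothesis whenever Player~$1$ leaves $\arena_{j-1}$). The paper instead defines a single \emph{positional} strategy on all of $W_0^t$ at once---on each $X_j$ it uses a uniform positional $\bcp$-winning strategy for $(\arena_{j-1},\bcp(\col,\cost))$ (supplied by Lemma~\ref{lem_memory_bcp_pl0}), and on each $\att{0}{\arena_{j-1}}{X_j}\setminus X_j$ a positional attractor strategy---and then argues directly that any consistent play must eventually be trapped in a single $X_{j^*}$, where the $\bcp$-strategy wins the suffix. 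Both arguments are sound; yours is perhaps more modular, while the paper's has the advantage that it immediately yields a uniform \emph{positional} winning strategy for Player~$0$ in $\game$, which is exactly the content of Corollary~\ref{cor_memory_cp_pl1}. Your inductive construction only produces a finite-state strategy (the phase information is extra memory), so if you want to recover the corollary you would still need to argue separately that the layers can be glued positionally.
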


\begin{proof}
Let $\game = ((V, V_0, V_1, E), \cp(\col, \cost))$ and let $t$ be the last
iteration of Algorithm~\ref{algorithm_fixpoint} with $X_t \not= \emptyset$, i.e., the algorithm returns $W_0^t$. We
have $t \le |V|$, since $\emptyset = W_0^0 \subsetneq W_0^1 \subsetneq \cdots
\subsetneq W_0^t \subseteq V$ is a strictly increasing chain. Hence, the
algorithm terminates after at most $|V|+1$ iterations.

Next, we show $W_0^t \subseteq W_0(\game)$: to this end, we define a
strategy~$\sigma$ for Player~$0$ on $W_0^t$ as follows: on the sets~$X_j$ computed by the algorithm, which are
winning regions of Player~$0$ in a bounded parity game with costs, we play
using some uniform positional winning strategy for this game, which always
exists due to Lemma~\ref{lem_memory_bcp_pl0}. On the
attractors~$\att{0}{\arena_{j-1}}{X_j}$ we play using some positional
attractor strategy. Thus, $\sigma$ is a positional strategy that is defined
for every vertex in $W_0^t$. Next, we show that it is indeed a uniform
positional winning strategy from $W_0^t$.

Every winning region~$X_j$ is a trap for Player~$1$ in $\arena_{j-1}$. Hence,
in the whole arena~$\arena$, Player~$1$ can leave $X_j$ only to vertices in
some $W_0^{j'}$ with $j'<j$. Player~$0$ on the other hand only moves to vertices
in $W_0^{j}$. Similarly, if the play is in the attractor of some $X_j$, then it
reaches $X_j$ after at most $|V|$ steps or Player~$1$ moves to some $W_0^{j'}$
for some $j' <j$. Hence, every play~$\rho$ consistent with $\sigma$ has a
suffix~$\rho'$ that visits only vertices from some $X_j$ and is consistent
with the winning strategy for the corresponding bounded parity game with
costs. So, $\rho' \in \bcp(\col, \cost)$, which implies $\rho \in \cp(\col,
\cost)$. Thus, $\sigma$ is a winning strategy for Player~$0$ in $\game$ from $W_0^t$.

It remains to consider Player~$1$, i.e., to show that $V \setminus W_0^t
\subseteq W_1(\game)$. Note that $V \setminus W_0^t$ is the set of vertices of
$\arena_t$ and that we have $W_0(\arena_t, \bcp(\col, \cost)) = \emptyset$.
Hence, by
Lemma~\ref{lem_cp2bcp_fixpoint}(\ref{lem_cp2bcp_fixpoint_playerzeroempty}) we
conclude $W_0(\arena_t, \cp(\col, \cost)) = \emptyset$, i.e., Player~$1$ wins
the parity game with costs in the arena~$\arena_t$ from every vertex. Since $V
\setminus W_0^t$ is a trap for Player~$0$ (it is the complement of an
attractor), it follows that Player~$1$ wins the parity game with costs in the
arena~$\arena$ from every vertex in $V \setminus W_0^t$. \end{proof}

\begin{cor}
\label{cor_memory_cp_pl1}
In parity games with costs, Player~$0$ has uniform positional winning
strategies.
\end{cor}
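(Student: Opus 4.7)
The plan is to observe that the strategy $\sigma$ constructed in the proof of Lemma~\ref{lem_fixpointalgo_correctness} is already positional and uniform, so the corollary follows essentially for free from what has been established. Concretely, $\sigma$ is defined on each set $X_j = W_0(\arena_{j-1}, \bcp(\col, \cost))$ computed by Algorithm~\ref{algorithm_fixpoint} as a uniform positional winning strategy for the bounded parity game with costs in $\arena_{j-1}$, which exists by Lemma~\ref{lem_memory_bcp_pl0}, and on each attractor layer $\att{0}{\arena_{j-1}}{X_j} \setminus X_j$ as a positional attractor strategy toward $X_j$. Both ingredients are positional, and since the sets $\att{0}{\arena_{j-1}}{X_j}$ for $j = 1, \ldots, t$ partition $W_0^t = W_0(\game)$, their union is a well-defined positional strategy on all of $W_0(\game)$.

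The only verification needed is that this $\sigma$ wins from every vertex of $W_0(\game)$, and this is exactly what Lemma~\ref{lem_fixpointalgo_correctness} already shows: every $\sigma$-consistent play starting in $W_0^t$ eventually stabilizes inside some $X_j$, because Player~$1$ can only leave $X_j$ to vertices in some $W_0^{j'}$ with $j' < j$, so the component index is non-increasing along the play and hence ultimately constant; the corresponding suffix is consistent with a positional winning strategy for $\bcp(\col, \cost)$ on $X_j$ and therefore lies in $\bcp(\col, \cost) \subseteq \cp(\col, \cost)$, and prefix-independence of $\cp(\col, \cost)$ (Remark~\ref{rem_winningconds_inclusions_parity}) lifts this to the entire play.

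There is no real obstacle: the corollary simply records a property of the strategy already built in the proof of Lemma~\ref{lem_fixpointalgo_correctness}. The only point to double-check is that the positional strategies chosen on the different $X_j$'s glue together without conflict, which is immediate since the attractor layers $\att{0}{\arena_{j-1}}{X_j}$ are pairwise disjoint and exhaust $W_0(\game)$.
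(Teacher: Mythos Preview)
Your proposal is correct and matches the paper's approach exactly: the corollary is stated without separate proof because the strategy~$\sigma$ built in the proof of Lemma~\ref{lem_fixpointalgo_correctness} is already explicitly shown there to be a uniform positional winning strategy from $W_0^t = W_0(\game)$. Your elaboration on why the pieces glue together without conflict is accurate and simply spells out what the paper leaves implicit.
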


Using Lemma~\ref{lem_fixpointalgo_correctness},
Theorem~\ref{theorem_solvingbcp}, and the fact that
Algorithm~\ref{algorithm_fixpoint} terminates after at most $n+1$ iterations,
and therefore has to solve at most $n$ bounded parity games with costs, we
obtain the following result where again $n$ is the number of vertices, $m$ is
the number of edges, and $d$ is the number of colors in the game.

\begin{thm}
\label{theorem_solvingcp}
Given an algorithm that solves parity games in time $T(n,m,d)$, there is an 
algorithm that solves parity games with costs in time~$O(n \cdot T(dn, 
dm,d+2))$.
\end{thm}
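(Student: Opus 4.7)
The plan is to invoke Algorithm~\ref{algorithm_fixpoint} and combine the iteration bound from Lemma~\ref{lem_fixpointalgo_correctness} with the complexity bound from Theorem~\ref{theorem_solvingbcp}. By Lemma~\ref{lem_fixpointalgo_correctness}, Algorithm~\ref{algorithm_fixpoint} correctly computes $W_0(\game)$ and terminates after at most $n+1$ iterations, where $n$ is the number of vertices of $\game$. Hence it suffices to bound the cost of a single iteration.

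Each iteration of the loop performs two tasks on a subarena $\arena_{j-1}$ of $\arena$: (i) it solves the bounded parity game with costs $(\arena_{j-1}, \bcp(\col, \cost))$ in order to compute $X_j = W_0(\arena_{j-1},\bcp(\col, \cost))$, and (ii) it computes the $0$-attractor $\att{0}{\arena_{j-1}}{X_j}$ and removes it from $\arena_{j-1}$. For (i), observe that $\arena_{j-1}$ has at most $n$ vertices, at most $m$ edges, and uses at most $d$ colors, so by Theorem~\ref{theorem_solvingbcp} this task takes time $O(T(dn, dm, d+2))$. For (ii), the attractor can be computed in time $O(n+m)$ by a standard backward search, and removing vertices/edges is within the same bound; both are dominated by the cost of solving the bounded parity game with costs (since $T(dn,dm,d+2) \geq \Omega(dn+dm)$ for any reasonable parity solver).

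Multiplying the cost per iteration by the bound of at most $n+1$ iterations yields an overall running time of $O(n \cdot T(dn,dm,d+2))$, which is the claimed bound. There is no real obstacle here: correctness is entirely supplied by Lemma~\ref{lem_fixpointalgo_correctness}, and the complexity estimate is obtained by stacking the per-iteration bound of Theorem~\ref{theorem_solvingbcp} on top of the $O(n)$ iteration count, noting that the subarena parameters are bounded by those of $\arena$ and that the auxiliary attractor computations are absorbed into the parity-game-solving term.
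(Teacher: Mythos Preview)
Your argument is correct and matches the paper's own justification: the theorem is obtained directly by combining the iteration bound of Lemma~\ref{lem_fixpointalgo_correctness} for Algorithm~\ref{algorithm_fixpoint} with the per-iteration cost given by Theorem~\ref{theorem_solvingbcp}, absorbing the attractor computations into the dominant term. There is nothing to add.
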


Also, we obtain the same computational complexity as for bounded parity games
with costs. Here, we rely on the characterization of the winning region~$W_0(\game)$ of a parity game with costs~$\game$ as computed by
Algorithm~\ref{algorithm_fixpoint}: the sets $X_j$ can be determined in $\np$
(respectively in $\conp$) due to Remark~\ref{remark_compcomplexity_bcp} and
the attractors can be computed in (deterministic) linear time (note that this was recently improved to $\up \cap \coup$~\cite{DBLP:conf/lpar/MogaveroMS13}).

\begin{rem}
\label{remark_compcomplexity_cp}
The following problem is in $\np \cap \conp$: given a parity game with 
costs~$\game$, $i\in\set{0,1}$, and a vertex $v$, is $v \in W_i(\game)$?
\end{rem}

In the previous section we have shown that one can recover a
polynomial time algorithm for deciding the winning regions of bounded parity
games. Hence, using Algorithm~\ref{algorithm_fixpoint}, we obtain the same for
finitary parity games as well. Hence, we also recover polynomial time
decidability of finitary parity games as a special case of our algorithm.
However, this is not surprising since Algorithm~\ref{algorithm_fixpoint} is
the same one used in~\cite{ChatterjeeHenzingerHorn09} to solve finitary parity
games via solving bounded parity games.

\section{Positional Winning Strategies for Bounded Parity Games with Costs\\ 
via Scoring Functions}
\label{section_elimmem}

In Lemma~\ref{lem_memory_bcp_pl0}, we have shown how to eliminate the
memory introduced in the reduction from $\pcrr$ games to parity games,
which proved the existence of uniform positional winning strategies
for Player~$0$ in bounded parity games with costs. Using these
strategies as building blocks, we also proved the existence of uniform
positional winning strategies for Player~$0$ in parity games with
costs. Intuitively, the memory used in the reduction from bounded
parity with costs to $\pcrr$ keeps track of the largest open request,
but Player~$0$ does not need this information to implement her winning
strategy as proved in Lemma~\ref{lem_memory_bcp_pl0}. Instead, she can
always play assuming the worst situation that still allows her to
win. Thus, we have shown that the memory introduced by this reduction
can always be eliminated.

In this section we generalize this construction to memory
structures that are not necessarily of the form used in the reduction: we show how to turn an arbitrary uniform finite-state winning strategy for Player~$0$ in a bounded parity game with costs into a positional one. To this end, we define a quality measure for play prefixes and then show that
always playing like in the worst possible situation is a positional winning
strategy. This gives an alternative proof of half-positional determinacy\footnote{A game is half-positionally determined, if one of the players has a positional winning strategy from every vertex of her winning region.} of
(bounded) parity games with costs and presents a general framework that we
believe to be applicable to other winning conditions as well.

We begin by defining a so-called scoring function for bounded parity games
with costs that measures the quality of a play prefix (from Player~$0$'s
vantage point) by keeping track of the largest unanswered request, the number
of increment-edges traversed since it was raised, and how often each odd color
was seen since the last increment-edge. This information is gathered in a so-called score-sheet, which is then used to measure the quality of the play prefix. We begin by defining score sheets.

For the remainder of this section, we
fix a bounded parity game with costs $\game = (\arena, \bcp(\col, \cost))$
with arena~$\arena = (V, V_0, V_1, E)$, and an arbitrary uniform finite-state winning strategy~$\sigma$ for Player~$0$ in $\game$ which we want to turn into a uniform positional winning strategy. Let $\col(V) \subseteq \set{0, 1,\ldots, \ell}$, where we assume $\ell$ to be odd. Furthermore, let $d = \frac{\ell+1}{2}$ be the number of odd colors in $\set{0,1,\ldots,\ell}$.
Finally, let $t = |V|\cdot |\sigma|$, where $|\sigma|$ denotes the size of the
memory structure implementing $\sigma$.

A proper (score-) sheet is a vector $(c, n, s_\ell, s_{\ell-2}, \ldots, s_3,
s_1)$ where $c$ is an odd color in $\set{1, 3, \ldots, \ell}$, $n \le t$, and
$s_{c'} \le t$ for every $c'$. Finally, we use two non-proper sheets denoted by
$\bot$ and $\top$. The reversed ordering of the score values $s_\ell,
s_{\ell-2}, \ldots, s_3, s_1$ in the sheets is due to the max-parity
condition, in which larger colors are more important than smaller ones. This
is reflected by the fact that we compare sheets in the lexicographical order
induced by $<$ on its components and add $\bot$ as minimal and $\top$ as
maximal element. For example, $(3, 3, 0, 1, 1) \lexordstrict (3,3, 1, 0, 3)$
and $\bot \lexordstrict s \lexordstrict \top$ for every sheet~$s \not= \bot,
\top$. As usual, we write $s \lexord s'$ if $s = s'$ or $s \lexordstrict s'$.

Next, we show how to update sheets along a play to use them as a quality
measure for play prefixes. Let $s = (x_1, \ldots, x_{d+2})$ be a proper
sheet. We say that $s$ is full in coordinate~$1$, if $x_1 = \ell$ (recall that
$\ell$ is the largest possible value in the first coordinate), and that $s$ is
full in coordinate $k>1$, if $x_k = t$ (recall that $t$ is the largest
possible value in all but the first coordinate). Let $k$ be a coordinate and
let $s = (x_1,\ldots, x_{d+2})$ be a proper sheet.

\begin{itemize}

\item If $1$ is the largest coordinate smaller than or equal to $k$ that is
not full in $s$, then incrementing $s$ at coordinate~$k$ yields the
sheet~$(x_1+2, 0, \ldots, 0)$. If $k>1$, then we say that there is an overflow
in coordinates $2, \ldots, k$.

\item If $k' > 1$ is the largest coordinate smaller than or equal to $k$ that
is not full in $s$, then incrementing $s$ at coordinate~$k$ yields the
sheet~$(x_1, \ldots, x_{k'-1}, x_{k'}+1, 0, \ldots, 0)$. If $k' < k$, then we
say that there is an overflow in coordinates $k' +1 , \ldots, k$.

\item If there is no coordinate $k'$ smaller than or equal to $k$ that is not
full in $s$, then incrementing $s$ at coordinate~$k$ yields the sheet~$\top$
and we say that there is an overflow in coordinates $1, \ldots, k$.

\end{itemize}

\begin{exa}
Assume we have $\ell = 5$ and $t = 3$ and consider $s = (3, 3, 0, 1, 3)$.
Then, $s$ is full in coordinate~$2$ and $5$, but not in coordinates~$1$, $3$,
and $4$. Incrementing $s$ at coordinate~$1$ or $2$ yields the sheet~$(5, 0, 0,
0, 0)$ (note that there is an overflow of coordinate~$2$ in the second case),
incrementing at $3$ yields $(3, 3, 1, 0, 0)$ while incrementing at $4$ or $5$
yields $(3,3,0,2,0)$ (and there is an overflow of coordinate~$5$ in the second
case). 
\end{exa}

Next, we show that the increment-operation and a reset-operation are
compatible with the ordering. Recall that we compare sheets lexicographically.

\begin{rem}
\label{remark_sheetpropertiesabstract} 
Let $x = (x_1, \ldots, x_{d+2}) \lexord
y = (y_1, \ldots, y_{d+2})$ be two sheets and let $k$ be a coordinate.

\begin{enumerate}
\item	\label{remark_sheetpropertiesabstract_inccong}
Let $x'$ (respectively $y'$) be obtained by incrementing $x$ (respectively 
$y$) at coordinate $k$. Then, $x' \lexord y'$.

\item \label{remark_sheetpropertiesabstract_resetcong}
Let $x'' = (x_1, \ldots, x_k, 0, \ldots, 0)$ and $y'' = (y_1, \ldots, y_k, 0, 
\ldots, 0)$. Then, $x'' \lexord y''$.

\end{enumerate}
\end{rem}

Now, we want to assign a sheet to every play prefix. To this end, we define 
the initial sheet~$\sheet(v)$ of a vertex~$v$ by 
\begin{equation*}
\sheet(v)=
\begin{cases}
\bot                                    & \text{if $\col(v)$ is even,}\\
(\col(v), 0, 0, \ldots, 0)              & \text{if $\col(v)$ is odd.}
\end{cases}
\end{equation*}
Now, let $\sheet(wv)$ for $w \in V^*$ and $v \in V$ be already defined and let
$(v,v')$ be an edge. If $\sheet(wv) = \top$, then $\sheet(wvv') = \top$, and
if $\sheet(wv) = \bot$, then $\sheet(wvv') = \sheet(v')$. Now, assume we have
$\sheet(wv) = (c, n, s_\ell, \ldots, s_1)$, i.e., $c$ is the largest open request in $wv$. We have to distinguish several
cases. 
\begin{itemize}

\item If $\col(v') > c$, then $\sheet(wvv') = \sheet(v')$, i.e., if $\col(v')$ is even larger than $c$ and odd, then the first component is updated to $\col(v')$ and all others are reset to zero. If $\col(v')$ is even and larger than $c$, then all requests are answered and the sheet is reset to $\bot$.

\item If $\col(v') \le c$ and $\cost(v,v') = i$ (i.e., the largest open request is still $c$ but an increment-edge is traversed), then $\sheet(wvv')$ is
obtained from $\sheet(wv)$ by incrementing the second coordinate (the one
associated with costs).

\item if $\col(v') \le c$, $\cost(v,v') = \epsilon$ and $\col(v')$ even, then
the scores for the colors that are answered by $\col(v')$ are reset to zero,
i.e., $\sheet(wvv') = (c, n, s_\ell, \ldots, s_{\col(v')+1}, 0, \ldots, 0)$,

\item if $\col(v') \le c$, $\cost(v,v') = \epsilon$ and $\col(v')$ odd, then
$\sheet(wvv')$ is obtained from $\sheet(wv)$ by incrementing the coordinate
storing the score for $\col(v')$. 

\end{itemize}
Note that the increments in the second and fourth case of the definition might trigger overflows in case the respective coordinates are full in $\sheet(wv)$.

Let $\sheet(w) = (c, n, s_\ell, \ldots, s_1)$. To simplify our notation in the
following proofs, we define $\req(w) = c$, $\creq(w) = n$, and $\score_{c'}(w)
= s_{c'}$. If $\sheet(w) = \bot$ or $\sheet(w) = \top$, then we leave these
functions undefined. Furthermore, let $\last(w)$ denote the last vertex of a
non-empty finite play $w$. 

In the following, we show three properties of the scoring function that
are used to prove our main result. We begin by showing that it is a congruence.

\begin{lem}
\label{lemma_congurence}	

If $\last(x) = \last(y)$ and $\sheet(x) \lexord \sheet(y)$, then $\sheet(xv)
\lexord \sheet(yv)$ for every $v \in V$.
\end{lem}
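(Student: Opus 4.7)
The plan is to argue by case analysis on $\sheet(x)$, $\sheet(y)$, and on how $\col(v)$ compares with the largest open request recorded in each sheet. The boundary cases go through quickly: if $\sheet(y) = \top$, then $\sheet(yv) = \top$ and the conclusion is immediate; if $\sheet(y) = \bot$, then $\sheet(x) \lexord \sheet(y)$ forces $\sheet(x) = \bot$ as well and both sheets advance to $\sheet(v)$; and if $\sheet(x) = \bot$ while $\sheet(y)$ is proper, then $\sheet(xv) = \sheet(v)$ is either $\bot$ (trivially dominated) or $(\col(v), 0, \ldots, 0)$, in which case I would compare against each of the four possible values of $\sheet(yv)$, using that the first coordinate of $\sheet(yv)$ is always at least $c_y$ (or equals $\top$).

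The remaining case is where $\sheet(x) = (c_x, n_x, s_\ell^x, \ldots, s_1^x)$ and $\sheet(y) = (c_y, n_y, s_\ell^y, \ldots, s_1^y)$ are both proper with $\sheet(y) \neq \top$; here $\sheet(x) \lexord \sheet(y)$ yields $c_x \le c_y$. I would split on how $\col(v)$ sits relative to $c_x$ and $c_y$. If $\col(v) > c_y$, both sheets reset to $\sheet(v)$. If $c_x < \col(v) \le c_y$, only $\sheet(x)$ resets, so $\sheet(xv) = \sheet(v)$, while $\sheet(yv)$ retains first coordinate at least $c_y$ (or equals $\top$); this settles $\sheet(xv) \lexord \sheet(yv)$ trivially when $\col(v)$ is even, and otherwise from the observation that $(\col(v), 0, \ldots, 0)$ is the lexicographic minimum among sheets with first coordinate at least $\col(v)$. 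If $\col(v) \le c_x$, then neither sheet resets and the \emph{same} formal update is applied to both---increment at coordinate~$2$ (increment-edge), increment at the score coordinate associated with $\col(v)$ ($\epsilon$-edge with $\col(v)$ odd), or the partial reset $(x_1, \ldots, x_k, 0, \ldots, 0)$ with $k$ the coordinate storing $s_{\col(v)+1}$ ($\epsilon$-edge with $\col(v)$ even)---and $\sheet(xv) \lexord \sheet(yv)$ follows directly from Remark~\ref{remark_sheetpropertiesabstract}.

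The main obstacle I anticipate is the middle sub-case above: one has to verify that no update rule can drop the first coordinate of a proper sheet below its current value of $c_y$, which amounts to a small inspection that overflow during an increment can only push that coordinate up by~$2$ or produce~$\top$, and that the partial-reset rule leaves coordinate~$1$ untouched. With this observation, the chain $c_x < \col(v) \le c_y$ transfers to the post-sheets in the form needed for the lexicographic comparison.
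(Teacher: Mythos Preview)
Your argument is correct. Both your proof and the paper's proceed by case analysis on the relation between $\col(v)$ and the first coordinates of the two sheets, handling the boundary cases $\bot$ and $\top$ separately and invoking Remark~\ref{remark_sheetpropertiesabstract} when the same formal update is applied to both sheets.

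The one organizational difference worth noting: the paper first splits on whether $\req(x) = \req(y)$ or $\req(x) < \req(y)$, and in the latter case treats all of $\col(v) < \req(y)$ together via an ad-hoc analysis of what can happen to each first coordinate (its case~2(c)). You instead split directly on whether $\col(v) \le c_x$, $c_x < \col(v) \le c_y$, or $\col(v) > c_y$. This lets you apply Remark~\ref{remark_sheetpropertiesabstract} uniformly whenever $\col(v) \le c_x$, regardless of whether $c_x = c_y$ or $c_x < c_y$, since in that range the same increment or partial-reset operation is applied to both sheets. Your organization is slightly cleaner for this reason; the paper's, on the other hand, isolates the transparent case $c_x = c_y$ up front. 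For the boundary case $\sheet(x) = \bot$ with $\sheet(y)$ proper, the paper appeals directly to Remark~\ref{remark_sheetprops}(\ref{remark_sheetprops_sheetgreaterlastvertex}) (namely $\sheet(yv) \ge \sheet(v)$ when $\col(v)$ is odd), which is exactly the observation your four-way comparison would reproduce; you may as well cite it and skip the casework there.
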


Before we begin the proof we state the following useful facts.

\begin{rem}
\label{remark_sheetprops}
Let $w \in V^*$ and $v \in V$.
\begin{enumerate}

\item\label{remark_sheetprops_update}
If $\req(w) \not= \req(wv)$, then $\sheet(wv) = \sheet(v)$.

\item\label{remark_sheetprops_sheetgreaterlastvertex}
If $\col(v)$ is odd, then $\sheet(wv) \ge \sheet(v)$.

\end{enumerate}
\end{rem}

Now, we are ready to prove Lemma~\ref{lemma_congurence}.

\begin{proof}
If $\sheet(x) = \sheet(y)$, then $\sheet(xv) = \sheet(yv)$, since the sheets
of $xv$ and $yv$ only depend on the sheets of $x$ and $y$ (which are equal)
and the last edges of $xv$ and $yv$ (which are also equal).

So, consider the case $\sheet(x) \lexordstrict \sheet(y)$. First, assume we
have $\sheet(x) = \bot$, which implies $\sheet(xv) = \sheet(v)$. If $\col(v)$
is even, then $\sheet(v) = \bot$ and we are done, since $\bot$ is the minimal
element. Otherwise, applying
Remark~\ref{remark_sheetprops}(\ref{remark_sheetprops_sheetgreaterlastvertex})
to $yv$ yields the desired result. As a last special case assume we have
$\sheet(y) = \top$, which implies $\sheet(yv) = \top$. As $\top$ is the
maximal element, we have $\sheet(xv) \lexord \sheet(yv)$.

We are left with the case $\bot \lexordstrict \sheet(x) \lexordstrict
\sheet(y) \lexordstrict \top$ and have to consider two subcases:

\begin{enumerate}
\item First, assume we have $\req(x) = \req(y)$. We consider several subcases.
\begin{enumerate}

\item If $\col(v) > \req(x) = \req(y)$, then $\sheet(xv) = \sheet(yv) =
\sheet(v)$.

\item If $\col(v) \le \req(x) = \req(y)$ and $\cost(\last(x),v) = i$, then
both $\sheet(xv)$ and $\sheet(yv)$ are obtained by incrementing $\sheet(x)$
and $\sheet(y)$ respectively at the second coordinate. Hence,
Remark~\ref{remark_sheetpropertiesabstract}(\ref{remark_sheetpropertiesabstract_inccong})
yields the desired result.

\item If $\col(v) \le \req(x) = \req(y)$, $\cost(\last(x),v) = \epsilon$, and
$\col(v)$ is even, then both $\sheet(xv)$ and $\sheet(yv)$ are obtained by
reseting the scores for every $c'$ smaller than $\col(v)$. Hence,
Remark~\ref{remark_sheetpropertiesabstract}(\ref{remark_sheetpropertiesabstract_resetcong})
yields the desired result.

\item If $\col(v) \le \req(x) = \req(y)$, $\cost(\last(x),v) = \epsilon$, and
$\col(v)$ is odd, then both $\sheet(xv)$ and $\sheet(yv)$ are obtained by
incrementing $\sheet(x)$ and $\sheet(y)$ respectively at the coordinate
storing the score for $\col(v)$. Hence,
Remark~\ref{remark_sheetpropertiesabstract}(\ref{remark_sheetpropertiesabstract_inccong})
yields the desired result. \end{enumerate}

\item Now, assume we have $\req(x) < \req(y)$. Note that $\sheet(xv) = \top$
is impossible in this case, since the first coordinate of $x$ is not full, as
it is strictly smaller than $\req(y)$. We again have to consider several
subcases:

\begin{enumerate}
\item If $\col(v) > \req(y) > \req(x)$, then $\sheet(xv) = \sheet(yv) =
 \sheet(v)$.

\item If $\col(v) = \req(y) > \req(x)$, then we have $\req(xv) > \req(x)$ and
an application of
Remark~\ref{remark_sheetprops}(\ref{remark_sheetprops_update}) and
\ref{remark_sheetprops}(\ref{remark_sheetprops_sheetgreaterlastvertex}) (to
$yv$) yields the desired result.

 \item Finally, assume we have $\col(v) < \req(y)$. We again have to consider
three subcases.
\begin{enumerate}

\item If we have $\sheet(yv) = \top$, then we are done.

\item Assume we have $\req(yv) > \req(y)$. Then the following inequalities hold:
$\req(xv) \le \req(x) +2 \le \req(y) < \req(yv)$, where the first one is due
to the fact that the first component of $\sheet(xv)$ can only increase due
to an increment, and the second one due to $\req(x) < \req(y)$. Hence, we have
$\sheet(xv) \lexordstrict \sheet(yv)$ in this case.

\item Finally, consider the case where $\req(yv) = \req(y)$. If $\req(xv) <
\req(yv)$, then we are done. So, assume we have $\req(xv) = \req(yv)$. Then,
the first component of $\sheet(x)$ is increased to obtain $\sheet(xv)$, which
implies that all other components of $\sheet(xv)$ are equal to zero. Hence, we
have $\sheet(xv) \lexord \sheet(yv)$.\qedhere

\end{enumerate} 
\end{enumerate}
\end{enumerate}
\end{proof}\smallskip

\noindent We continue by showing that the sheets of a play $\rho$ being bounded is a
sufficient condition for $\rho$ satisfying the bounded cost-parity condition.

\begin{lem}
\label{lemma_boundedwins}
If the sheets of all prefixes of a play~$\rho$ are strictly smaller than
$\top$, then $\rho \in \bcp(\col, \cost)$. \end{lem}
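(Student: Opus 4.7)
I would prove the contrapositive: if $\rho \notin \bcp(\col, \cost)$, then some prefix of $\rho$ has sheet $\top$. The failure of $\bcp$ splits into (a)~some request in $\rho$ is unanswered with cost $\infty$, or (b)~$\limsup_{k\to\infty} \paritydist_\cost(\rho, k) = \infty$, and I~handle these in turn.

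For case~(a), let $k$ be such a position with $c = \col(\rho_k)$ odd. Since no color in $\answer{c}$ appears after~$k$, the sheet cannot reset to $\bot$ after~$k$ and $\req(\rho_0 \cdots \rho_j) \ge c$ for every $j \ge k$. Since $\req$ is monotone non-decreasing and bounded by~$\ell$, it stabilizes at some $c^*$ from some position~$p$ onwards, so coordinate~$1$ is frozen after~$p$ and any cascade from coordinate~$2$ into coordinate~$1$ is precluded. All but at most $d$ increment-edges after~$p$ bump coordinate~$2$ (the exceptions are the at most $d$ ``wasted'' edges whose target has color strictly above the current $\req$, which happens only at a $\req$-jump and hence not after~$p$). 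Since infinitely many increment-edges follow~$p$, coordinate~$2$ must eventually overflow---producing either a cascade into coordinate~$1$ (violating stabilization) or $\top$ (the sought contradiction).

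For case~(b), by~(a) I may assume that no unanswered request has cost~$\infty$, so $\cost(\rho) < \infty$ whenever any request is unanswered. I~split~(b) into (b1)~infinitely many unanswered requests and (b2)~only finitely many unanswered but infinitely many answered with unbounded finite cost. In~(b1), a $\bot$-reset at position~$j$ answers every earlier request, so $\bot$ occurs only finitely often and $\req$ again stabilizes at some $c^*$; some odd color $c' \le c^*$ must then appear infinitely often without any subsequent even color $\ge c'$, so its score coordinate bumps unboundedly without ever being reset, forcing a cascade through the score ladder into coordinates~$2$ and~$1$ and then to~$\top$. In~(b2), I pick a request at~$k$ with cost-of-response exceeding a bound $B := d(t + 1) + d$; on the window $[k, k'-1]$ to its answer~$k'$ no color in $\answer{\col(\rho_k)}$ appears, so no $\bot$-reset is triggered, and a cascade-counting argument bounds the absorbable increment-edges in such a window by~$B$, yielding a contradiction.

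The main obstacle is the cascade-counting argument, because score-coordinate increments do not count as cost yet can cascade into the cost coordinate and from there into coordinate~$1$, and even-color score resets further complicate the dynamics. I~would formalize this via a potential function on proper sheets essentially measuring the lexicographic distance to~$\top$, and verify that each cost-increment, score bump, and $\req$-jump strictly decreases this potential while score resets can only affect it in a controlled way, yielding the uniform bound on increment-edges between consecutive $\bot$-resets.
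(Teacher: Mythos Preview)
Your proof is correct in outline, but you have made the argument harder than necessary by your choice of case split. The paper uses a cleaner dichotomy that collapses your cases~(a) and~(b2) into a single case and thereby dissolves what you call ``the main obstacle''.

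The paper's two cases are:
\begin{enumerate}
\item For every $b$ there exists a request that remains open for at least $b$ increment-edges (the request may depend on~$b$).
\item Otherwise; then, since $\rho \notin \bcp(\col,\cost)$, the maximal color seen infinitely often is odd, and moreover $\rho$ has only finitely many increment-edges (else any unanswered request would witness Case~1).
\end{enumerate}
Your~(a) and~(b2) both fall under the paper's Case~1, and your~(b1) is the paper's Case~2.

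For Case~1 the paper's argument is short and avoids both your stabilization reasoning and your potential function. During the window in which the chosen request is open, the first coordinate is never reset, and the pair~$(\req, \creq)$ is lexicographically non-decreasing and strictly increases at every increment-edge: either coordinate~$2$ is bumped, or a larger odd color is visited and coordinate~$1$ jumps, or coordinate~$2$ overflows into coordinate~$1$. Since this pair takes at most $d(t+1)$ values below~$\top$, choosing $b$ larger than this forces coordinate~$1$ to overflow, giving~$\top$. This single observation replaces your stabilization argument in~(a) and the entire potential-function machinery you propose for~(b2); indeed the lexicographic rank of $(\req,\creq)$ \emph{is} the potential you are looking for, and there is no need to involve the score coordinates at all.

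For Case~2 the paper argues essentially as you do in~(b1). One small inaccuracy in your sketch: the phrase ``without ever being reset'' is not literally true, since $s_{c'}$ is reset whenever it overflows into $s_{c'+2}$. The correct statement is that the truncated sheet $(\req, \creq, s_\ell, \ldots, s_{c'})$ is lexicographically non-decreasing after the relevant point and strictly increases at each visit to~$c'$; since $c'$ is visited infinitely often, this forces~$\top$.
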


\begin{proof}
We prove the converse, i.e., if $\rho \notin \bcp(\col, \cost)$, then there is
a prefix of $\rho$ whose sheet is $\top$. First, assume that for every $b$
there is a request (say of color~$c$) that is open for at least $b$
increment-edges. Then, the second component of the sheets is incremented every
time an increment-edge is traversed before the request is answered. Also, the
first component is increased every time the second component overflows or
every time a larger odd color is visited. Note that it is not reset in this
interval, as the request of $c$ is not answered. Hence, if we pick $b$ large
enough, the first component overflows as well. This yields the sheet $\top$.

Now assume the maximal color seen infinitely often, call it $c$, is odd. We
may assume that $\rho$ has only finitely many increment-edges, as we are in
the first case otherwise. Pick a position of $\rho$ such that the maximal
color appearing after this position is $c$ and such that no increment-edge is
traversed after this position. After this position, the coordinate storing the
score for $c$ is incremented again and again. Furthermore, this coordinate
(and all to the left of this one) are only reset in case of an overflow, which
means that there is a coordinate to the left that is incremented. Thus, every
coordinate to the left of the one storing the score for $c$ is incremented
again and again, too. Hence, the first component overflows at some point,
which yields the sheet $\top$. \end{proof}

Recall that the entries in all but the first component of a (proper) sheet are
bounded by $t = |V|\cdot|\sigma|$, where $\sigma$ is a uniform finite-state
winning strategy for Player~$0$ in $\game = (\arena, \bcp(\col, \cost))$.
Next, we show that this strategy keeps the sheets smaller than~$\top$.

\begin{lem}
\label{lemma_fsbounds}
Let $\rho$ be starting in $W_0(\game)$ and consistent with
$\sigma$. Then, the sheets of all prefixes of $\rho$ are strictly smaller than 
$\top$.
\end{lem}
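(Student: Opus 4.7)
The plan is to proceed by contradiction: assume that some prefix $w$ of $\rho$ satisfies $\sheet(w) = \top$, and use this assumption to construct an infinite play $\rho^*$ that is consistent with $\sigma$ and not in $\bcp(\col,\cost)$, contradicting that $\sigma$ is a winning strategy from $W_0(\game)$.

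First I would analyze how the sheet reaches $\top$ at $w$. The final step of $w$ must be an overflow cascade: an attempted increment at some coordinate $k^*$ while all coordinates $1, \ldots, k^*$ are already full. In particular coordinate $k^*$ had value $t$ just before that step, so at least $t$ prior ``triggering'' events for $k^*$ (together with the final overflow event) occurred inside a single no-reset phase, yielding $t + 1$ events in all. By descending the cascade chain to a suitable $k^*$, these $t + 1$ events
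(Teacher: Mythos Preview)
Your proof is truncated mid-sentence, so I can only evaluate the strategy. The overall direction---reach $\top$, trace back the overflow, extract $t+1$ events to pump---is right, but the sketch omits the crucial case distinction that the paper's proof makes explicit, and without it the construction of the contradicting play~$\rho^*$ is not determined.

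The paper proceeds in the opposite (``bottom-up'') order. Before looking at sheets at all, it proves two standalone pumping properties of any play consistent with~$\sigma$ from $W_0(\game)$: (i) no request is followed by more than $t$ increment-edges without an answer, and (ii) no infix contains $t+1$ occurrences of an odd color~$c$ without a vertex of larger color. Each is a one-line pigeonhole on (vertex, memory-state) pairs; in case~(i) the pumped loop contains an increment-edge and yields a consistent play with an unanswered request of cost~$\infty$, while in case~(ii) the pumped loop yields a consistent play whose maximal recurring color is odd. With (i) and (ii) in hand, an induction on odd colors shows that no score coordinate~$s_c$ ever overflows, hence (using (i)) the cost coordinate does not overflow, hence the request coordinate does not either.

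Your top-down argument has to make the same case split eventually: the $t+1$ ``triggering events'' you extract at coordinate~$k^*$ are increment-edge traversals if $k^*=2$ and odd-color visits via $\epsilon$-edges if $k^*\ge 3$, and these call for \emph{different} pumped plays (one violating the cost bound, the other violating parity). Moreover, increments at~$k^*$ may themselves be overflows from $k^*+1$ rather than direct events, so ``descending the cascade chain'' must recurse until it reaches a coordinate at which all $t+1$ increments are direct; you should argue why such a coordinate exists and why the relevant no-reset phase is preserved along the descent. None of this is visible in your sketch, and it is exactly the content that the paper's two preliminary pumping facts encapsulate.
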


\begin{proof}
First, we show that every request in $\rho$ is answered with cost less than or
equal to $t$ or followed by at most $t$ increment-edges. Towards a
contradiction, assume there is a request that is followed by $t +1$
increment-edges, but no answer before the last of these increment-edges. Then,
there are two positions in this interval that have the same vertex, the memory
structure implementing $\sigma$ assumes the same state after both positions,
and there is at least one increment-edge between these positions. Hence, there
is also a play consistent with $\sigma$ and starting in $W_0(\game)$ that
contains an unanswered request with cost $\infty$. However, this contradicts
the fact that $\sigma$ is a winning strategy.

Similarly, one can show that $\rho$ has no infix that contains $t+1$ vertices
of some odd color~$c$, but no vertex of a larger color. Using the second
property, a simple induction over the number of odd colors (starting with $1$)
shows that no coordinate storing a score~$s_c$ overflows during $\rho$. Using
this and the first property shows that the second coordinate does not overflow
either. Finally, if the second component does not overflow, then the first
component does not overflow either, since it stores the largest unanswered
request in this case. Hence, the sheets of $\rho$ are strictly smaller than
$\top$. 
\end{proof}

Now we are able to prove our main technical result of this section: using the
score-sheets we can turn an arbitrary uniform finite-state winning strategy into a
positional one. For every $v \in V$, let $P_v$ denote the set of play prefixes
that begin in $W_0(\game)$, are consistent with $\sigma$, and end in $v$. Due
to Lemma~\ref{lemma_fsbounds}, the sheets of the prefixes in $P_v$ are
strictly smaller than $\top$. Hence, for every nonempty $P_v$ there exists a
play prefix $\max_v\in P_v$ such that $\sheet(w) \le \sheet(\max_v) < \top$
for every $w \in P_v$. We define a positional strategy $\sigma'$ by
$\sigma'(wv) = \sigma (\max_v)$.

\begin{lem}
\label{lem_pos}
The strategy~$\sigma'$ is a uniform positional winning strategy for Player~$0$ in
 $\game$. 
\end{lem}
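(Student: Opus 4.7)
The plan is to invoke Lemma~\ref{lemma_boundedwins}: fix a play $\rho = \rho_0 \rho_1 \rho_2 \cdots$ starting in $W_0(\game)$ and consistent with $\sigma'$, and show that every prefix of $\rho$ has sheet strictly below $\top$, which yields $\rho \in \bcp(\col,\cost)$. To this end, I would prove by induction on $n$ the invariant
\[
P_{\rho_n} \neq \emptyset \quad \text{and} \quad \sheet(\rho_0 \cdots \rho_n) \lexord \sheet(\max_{\rho_n}).
\]
Since the prefix $\max_{\rho_n}$ begins in $W_0(\game)$ and is consistent with $\sigma$, Lemma~\ref{lemma_fsbounds} gives $\sheet(\max_{\rho_n}) \lexordstrict \top$, and the invariant then transports this strict bound to the prefix $\rho_0 \cdots \rho_n$. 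Together with Lemma~\ref{lemma_boundedwins}, this will complete the argument.

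The base case $n=0$ is immediate: the trivial prefix~$\rho_0$ lies in $P_{\rho_0}$, so $P_{\rho_0}$ is nonempty and $\sheet(\rho_0) \lexord \sheet(\max_{\rho_0})$. For the inductive step, assume the invariant at $n$. Both $\rho_0 \cdots \rho_n$ and $\max_{\rho_n}$ end at $\rho_n$, and the hypothesis gives $\sheet(\rho_0 \cdots \rho_n) \lexord \sheet(\max_{\rho_n})$. Applying Lemma~\ref{lemma_congurence} with $v = \rho_{n+1}$ yields
\[
\sheet(\rho_0 \cdots \rho_n \rho_{n+1}) \lexord \sheet(\max_{\rho_n}\, \rho_{n+1}).
\]
It then suffices to show that $\max_{\rho_n}\, \rho_{n+1} \in P_{\rho_{n+1}}$, because by definition of $\max$ this implies $\sheet(\max_{\rho_n}\, \rho_{n+1}) \lexord \sheet(\max_{\rho_{n+1}})$ and in particular $P_{\rho_{n+1}} \neq \emptyset$.

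To verify membership in $P_{\rho_{n+1}}$, note that $\max_{\rho_n}\, \rho_{n+1}$ starts in $W_0(\game)$ (inherited from $\max_{\rho_n}$), ends in $\rho_{n+1}$, and uses the legitimate edge $(\rho_n, \rho_{n+1})$ from $\rho$. Consistency with $\sigma$ is the only nontrivial point, and splits on ownership of $\rho_n$: if $\rho_n \in V_1$, then Player~$0$ makes no choice and consistency of $\max_{\rho_n}$ with $\sigma$ lifts immediately; if $\rho_n \in V_0$, the design of $\sigma'$ is exactly what is needed, since consistency of $\rho$ with $\sigma'$ yields
\[
\rho_{n+1} \;=\; \sigma'(\rho_0 \cdots \rho_n) \;=\; \sigma(\max_{\rho_n}),
\]
so $\max_{\rho_n}\, \rho_{n+1}$ extends $\max_{\rho_n}$ by exactly the move prescribed by $\sigma$. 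The Player~$0$ case is the main conceptual point and is precisely where the choice of always mimicking the ``worst'' compatible history pays off; once this case is handled, the induction closes, Lemma~\ref{lemma_boundedwins} gives $\rho \in \bcp(\col,\cost)$, and since $\rho_0$ was an arbitrary vertex of $W_0(\game)$, the positional strategy $\sigma'$ is uniform.
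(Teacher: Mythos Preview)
Your proof is correct and follows exactly the approach the paper takes: an induction using Lemma~\ref{lemma_congurence} to maintain the bound $\sheet(\rho_0\cdots\rho_n) \lexord \sheet(\max_{\rho_n})$, followed by Lemma~\ref{lemma_fsbounds} and Lemma~\ref{lemma_boundedwins}. The paper compresses the entire induction into a single sentence, whereas you have spelled out the invariant, the base case, and the case split on the owner of $\rho_n$; in particular, your verification that $\max_{\rho_n}\rho_{n+1}\in P_{\rho_{n+1}}$ makes explicit the point the paper leaves implicit.
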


\begin{proof}
An inductive application of Lemma~\ref{lemma_congurence} shows that we have
$\sheet(\rho_0\cdots\rho_n) \le \sheet(\max_{\rho_n})$ for every $n$ and every
play $\rho$ that is consistent with $\sigma'$. Hence, the sheets of $\rho$ are
strictly smaller than $\top$, which implies $\rho \in \bcp(\col)$ due to
Lemma~\ref{lemma_boundedwins}. \end{proof}

In the preliminary version of this work~\cite{FijalkowZimmermann12}, we
presented a similar construction, the main difference being that we did not
use overflows there, but updated a sheet to $\top$ if a full coordinate is
incremented. This construction can also shown to be correct, but the proof of
the analogue of Lemma~\ref{lem_pos} in~\cite{FijalkowZimmermann12} (called
Lemma~15 there) has a gap (the claim in its last line is incorrect). This gap
can be closed using pumping arguments which rely on properties of the bounded
cost-parity condition. Since one of our aims in this section is to give a
general framework that works for other winning conditions as well, we
refrained from presenting the fix and instead changed the definition of the
sheets (adding overflows) to achieve this goal. Indeed, our construction only
relies on the following properties: 
\begin{enumerate} 

\item The score-sheets
constitute a finite total order. 

\item The score-sheet function is a
congruence w.r.t.\ this order. 

\item If the score-sheets of a play are
strictly smaller than the maximal element, then it is winning for Player~$0$.

\item A finite-state winning strategy allows only plays whose score-sheets are
strictly smaller than the maximal element. 

\end{enumerate} 
It follows that for every winning condition for which one can define a scoring
function meeting these conditions, one can turn a finite-state winning
strategy into a positional one. For example, one could extend the sheets
presented above by a new first coordinate that counts how often the second
coordinate (the largest open request) overflows, which corresponds to requests
that are open for \emph{many} increment-edges. Since a finite-state winning
strategy for a parity game with costs bounds the number of such requests, our
framework is applicable to parity games with costs as well. 

On the other hand, our framework cannot applicable to (bounded) Streett games with costs since they are not positionally determined. This impossibility manifests itself in the fact that one cannot totally order the costs of the different requests of a Streett condition while satisfying the other three properties listed above. This is in contrast to (bounded) parity games with costs where larger requests are more important than smaller ones, since answering the larger ones also answers smaller ones. This is reflected in the lexicographic ordering of the sheets. Interestingly, our result above relies on having just one cost function that is used for every request. If we allow different cost functions for different colors, then both players need memory to implement their winning strategies (cf.\ Section~\ref{section_conc}), i.e., our framework cannot be applicable.

Finally, by relaxing the first requirement (the score-sheets being totally ordered) to allow the sheets being partially ordered, then one obtains a finite-state winning strategy whose size is at most the size of the largest anti-chain in the partial order of the sheets (see \cite{NeiderRabinovichZimmermann14} for an application of this idea).

\section{Streett Games with Costs}
\label{section_streett}

In this section, we present an algorithm to solve Streett and bounded Streett
games with costs following the same ideas as in the section about (bounded)
parity games with costs, and prove $\exptime$-completeness of the
corresponding decision problems. From our algorithm, we also obtain upper
bounds on the memory requirements of both players, which are complemented by lower bounds.

The main result of this section is the following theorem. Here, $n$ is the
number of vertices, $m$ is the number of edges, and $d$ is the number of
Streett pairs in the game.

\begin{thm}
\label{theorem_mainstreett}
Given an algorithm that solves Streett games in time $T(n,m,d)$, there is
\begin{enumerate}
\item an algorithm that solves bounded Streett games with costs in
time~$O(T(2^dn, 2^dm, 2d))$.

\item an algorithm that solves Streett games with costs in time~$O(n \cdot 
T(2^dn, 2^dm, 2d))$.
\end{enumerate}
\end{thm}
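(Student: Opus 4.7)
The plan is to mirror the two-step development of Sections~\ref{section_solvingbcp} and~\ref{section_solvingcp} for parity. Part~(1) reduces bounded Streett games with costs to classical Streett games via an $\omega$-regular relaxation and a memory structure, and part~(2) derives the cost-Streett version by plugging this into an iterative fixpoint algorithm analogous to Algorithm~\ref{algorithm_fixpoint}.

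For part~(1), after subdividing every increment-edge so that ``being an increment-vertex for $\cost_c$'' becomes a state property, I would introduce an $\omega$-regular relaxation of the form
\[\pcrr_\streett(\Gamma, \costfam) = \bigl(\streett(\Gamma) \cap \coB\bigr) \cup \rr(\Gamma),\]
where $\rr(\Gamma)$ requires every Streett request to be answered eventually and $\coB$ asserts that plays traverse only finitely many increment-vertices of cost functions whose pair currently has an open request. The Streett analogue of Lemma~\ref{lem_bcp2pcrr} then follows by the same pumping argument used there: a finite-state winning strategy $\sigma$ of Player~$0$ for the $\pcrr_\streett$-game cannot allow any request to remain open for more than $|V|\cdot|\sigma|$ increment-edges, since iterating the resulting loop would yield a play consistent with $\sigma$ violating both clauses of $\pcrr_\streett$. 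To reduce $\pcrr_\streett$ to a classical Streett condition, I would use a memory structure whose states are the subsets $S\subseteq[d]$ of currently open pair indices (inserting $c$ on visits to $Q_c$, removing it on visits to $P_c$), giving $|M|=2^d$ and an expanded arena of size $(2^d n, 2^d m)$. On this arena I would install $2d$ Streett pairs: $d$ of them encode $\rr(\Gamma)$ (``whenever index $c$ enters $S$ it is eventually removed'') and the remaining $d$ encode the conjunction of the original Streett condition with the constraint that no $\cost_c$-increment-vertex is visited infinitely often while $c\in S$, so that the boolean combination $(\streett\cap\coB)\cup\rr$ is captured exactly.

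For part~(2), I would prove the Streett analogue of Lemma~\ref{lem_cp2bcp_fixpoint}: the inclusion $W_0(\arena,\bcs(\Gamma,\costfam))\subseteq W_0(\arena,\cs(\Gamma,\costfam))$ is immediate from $\bcs\subseteq\cs$, and if $W_0(\arena,\bcs(\Gamma,\costfam))=\emptyset$ then $W_0(\arena,\cs(\Gamma,\costfam))=\emptyset$ follows from the same ``restart on overflow'' construction, in which Player~$1$'s uniform finite-state winning strategy for the bounded game is re-launched by an infinite-memory strategy with an ever-growing cost budget. Plugging this into Algorithm~\ref{algorithm_fixpoint}, with bounded Streett games with costs in place of bounded parity games with costs, returns $W_0(\game)$ after at most $n+1$ iterations, each costing $O(T(2^d n, 2^d m, 2d))$.

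The main obstacle is the construction in part~(1): unlike the parity reduction, where the lexicographic nature of max-parity let us overlay the analogues of $\rr$ and $\coB$ using just two extra colors on top of the original parity condition, the Streett case requires each of the new pairs to simultaneously encode an original pair's request-response obligation and the open-request accounting on the memory expansion. Getting the pair count down to exactly $2d$ — and not landing in a genuine generalized-Streett or Muller condition that would not directly feed the assumed Streett solver $T(\cdot,\cdot,\cdot)$ — is where the technical care of the proof will concentrate.
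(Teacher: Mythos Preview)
Your overall plan is right, and part~(2) matches the paper. The gap is in part~(1): your proposed relaxation
\[\pcrr_\streett(\Gamma,\costfam) = \bigl(\streett(\Gamma)\cap\coB\bigr)\cup\rr(\Gamma)\]
with a \emph{global} top-level disjunction is not a relaxation of $\bcs(\Gamma,\costfam)$. Take two pairs: pair~$1$ is requested and answered infinitely often, each time incurring one $\cost_1$-increment while open; pair~$2$ is requested once, never answered, but $\cost_2$ has no increment-edges. This play is in $\bcs$ (every request is answered with cost at most~$1$ or is unanswered with finite cost), yet it violates $\rr(\Gamma)$ (pair~$2$ is open forever) and violates your $\coB$ (infinitely many $\cost_1$-increments with pair~$1$ open). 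Hence Player~$1$'s winning strategy in your $\pcrr_\streett$-game need not win the $\bcs$-game, and the winning regions can differ; your analogue of Lemma~\ref{lem_bcp2pcrr} fails in the $i=1$ direction.

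The paper's fix is to push the disjunction inside the conjunction over pairs:
\[\scrr(\Gamma,I)=\bigcap_{c\in[d]}\Bigl[\bigl(\streett(Q_c,P_c)\cap\coB(I_c)\bigr)\cup\rr(Q_c,P_c)\Bigr],\]
which \emph{is} a superset of $\bcs(\Gamma,\costfam)$. This per-pair structure is also what dissolves the ``main obstacle'' you identify: after expanding by the $2^d$-state memory of open requests and writing $F_c$ for the set of expanded vertices where request~$c$ is not open, each factor rewrites purely algebraically as
\[\bigl(\streett(Q_c,P_c)\cap\coB(I_c)\bigr)\cup\buechi(F_c)=\streett(Q_c,P_c\cup F_c)\cap\streett(I_c,F_c),\]
using $\coB(I_c)=\streett(I_c,\emptyset)$ and $\streett(A,B)\cup\buechi(F)=\streett(A,B\cup F)$. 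Intersecting over $c$ gives a genuine Streett condition with exactly $2d$ pairs on an arena of size $(2^d n,2^d m)$, and no Muller or generalized-Streett detour is needed.
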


Fix an arena~$\arena =(V, V_0, V_1, E)$, a collection~$\Gamma = (Q_c, P_c)_{c \in [d]}$ of Streett pairs, and a collection~$\costfam = (\cost_c)_{c \in [d]}$ of cost functions, both compatible with $\arena$. We begin by considering bounded games and again assume for every $c \in [d]$ that no
vertex of $\arena$ has both an incoming increment-edge (w.r.t.\ $\cost_c$) and an incoming
$\epsilon$-edge (again, w.r.t.\ $\cost_c$). Having different types of incoming
edges with respect to different cost functions is allowed. This property can
again be established by subdividing edges. Assuming this, let $I_c$ denote the
vertices with incoming increment-edges w.r.t.\ $\cost_c$. Then, $\coB(I_c) =
\set{\rho \mid \cost_c(\rho) < \infty}$ is the set of plays with finitely many
increment-edges w.r.t.\ $\cost_c$. Let $I = (I_c)_{c \in [d]}$. Furthermore,
we define $\rr(Q_c, P_c)$ to be the set of plays in which every request of pair~$c$ is eventually answered. Finally, we define
\begin{equation*}
 \scrr(\Gamma, I) = \bigcap_{c \in [d]} \big[\, \big( \streett(Q_c, P_c) \cap 
\coB(I_c) \big) \;\cup\; \rr(Q_c, P_c)\, \big]  \enspace,
\end{equation*}
which is $\omega$-regular as a boolean combination of $\omega$-regular languages. This condition is a relaxation of the bounded
cost-Streett condition, as we have $\scrr(\Gamma) \supseteq \bcs(\Gamma)$.

\begin{lem}
\label{lemma_bcs2scrr}
Let $\game = (\arena, \bcs(\Gamma, \costfam))$ and $\game' = (\arena,
\scrr(\Gamma, I))$, where $I$ is defined as above. A winning strategy for
Player~$i$ in $\game'$ from a set~$W$ is also a winning strategy for
Player~$i$ in $\game$ from $W$. Especially, $W_i(\game) = W_i(\game')$ for $i
\in \set{0,1}$.
\end{lem}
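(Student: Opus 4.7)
The plan is to mimic the structure of the proof of Lemma~\ref{lem_bcp2pcrr}, with the key work now done pair-by-pair on the Streett pairs.

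First, I would handle Player~$1$ by establishing the inclusion $\bcs(\Gamma, \costfam) \subseteq \scrr(\Gamma, I)$, from which the Player~$1$ statement follows by a standard contrapositive argument. Let $\rho \in \bcs(\Gamma, \costfam)$ and fix some pair $c \in [d]$. If every request of pair $c$ along $\rho$ is answered, then $\rho \in \rr(Q_c, P_c)$ and we are done. Otherwise, some request of pair $c$ is unanswered; by definition of $\bcs$ this request must not have cost $\infty$, so only finitely many increment-edges w.r.t.\ $\cost_c$ occur after it, hence $\rho \in \coB(I_c)$. Moreover, the bound on $\limsup_k \streettdist_\costfam(\rho, k)$ implies that all but finitely many requests of pair $c$ are answered with bounded cost; thus $Q_c$ visited infinitely often entails $P_c$ visited infinitely often, so $\rho \in \streett(Q_c, P_c)$. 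In either case the $c$-th conjunct of $\scrr(\Gamma, I)$ is satisfied.

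For Player~$0$, let $\sigma$ be a finite-state winning strategy in $\game'$ from $W$, implemented by a memory structure of size $|\sigma|$, and consider a play $\rho$ consistent with $\sigma$ starting in $W$, so $\rho \in \scrr(\Gamma, I)$. I would verify both clauses of $\bcs(\Gamma, \costfam)$. The absence of unanswered requests of cost $\infty$ is direct: if some request of pair $c$ were unanswered with infinitely many increments w.r.t.\ $\cost_c$ afterwards, then $\rho$ would satisfy neither $\rr(Q_c, P_c)$ nor $\coB(I_c)$, contradicting $\rho \in \scrr(\Gamma, I)$. For the $\limsup$ bound, I would argue pair-by-pair that $\streettdist_{\cost_c}(\rho, k)$ is uniformly bounded for large $k$. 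If the $c$-th conjunct of $\scrr$ is satisfied via $\streett(Q_c, P_c) \cap \coB(I_c)$, then only finitely many $\cost_c$-increments occur, so $\streettdist_{\cost_c}(\rho, k) = 0$ for all sufficiently large $k$. If it is satisfied via $\rr(Q_c, P_c)$, I would run a pumping argument: suppose some request at position $k$ with $\rho_k \in Q_c$ is answered first at $k'$ with more than $|V| \cdot |\sigma|$ increment-edges w.r.t.\ $\cost_c$ in between. By pigeonhole on vertex-memory pairs within that infix, there are positions $k \le j_1 < j_2 < k'$ with the same vertex, the same memory state, and at least one $\cost_c$-increment between them. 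Then the play obtained by following $\rho$ up to $j_2$ and then iterating the loop $\rho_{j_1}\cdots\rho_{j_2}$ forever is consistent with $\sigma$, starts in $W$, fails $\rr(Q_c, P_c)$ (the request at $k$ is never answered) and fails $\coB(I_c)$ (infinitely many $\cost_c$-increments), contradicting the fact that $\sigma$ is winning in $\game'$. Hence $\streettdist_{\cost_c}(\rho, k) \le |V| \cdot |\sigma|$ for all $k$, and taking the maximum over the finitely many pairs bounds $\streettdist_\costfam(\rho, k)$, giving $\rho \in \bcs(\Gamma, \costfam)$.

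Finally, the equality $W_i(\game) = W_i(\game')$ follows from the two strategy transfers together with determinacy (Lemma~\ref{remark_coststreettdeterminacy} and the $\omega$-regular determinacy of $\game'$): any winning strategy for one player in $\game'$ transfers to a winning strategy in $\game$ from the same set, so $W_i(\game') \subseteq W_i(\game)$ for both $i$, and the disjointness of winning regions forces equality. The main obstacle is the Player~$0$ pumping argument, which requires being careful that the pumped infix contains a $\cost_c$-increment and that the pumped play simultaneously violates both $\rr(Q_c, P_c)$ and $\coB(I_c)$ for the same $c$; this is ensured by choosing $c$ as the index of the violating request before pumping.
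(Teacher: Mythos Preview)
Your proposal is correct and follows exactly the approach the paper indicates: it explicitly says the proof is analogous to that of Lemma~\ref{lem_bcp2pcrr} using finite-state determinacy of $\omega$-regular games, and your pair-by-pair adaptation of that argument (inclusion $\bcs\subseteq\scrr$ for Player~$1$, pumping on vertex--memory pairs for Player~$0$) is precisely this. Note that, as in Lemma~\ref{lem_bcp2pcrr}, the Player~$0$ direction genuinely needs the finite-state hypothesis you introduce, even though the statement of Lemma~\ref{lemma_bcs2scrr} omits the qualifier ``finite-state''; the equality of winning regions is then recovered, as you do, via finite-state determinacy of the $\omega$-regular game~$\game'$.
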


The proof of this lemma is similar to the one for Lemma~\ref{lem_bcp2pcrr} and
relies on finite-state determinacy of $\omega$-regular games. 

Next, we show
how to reduce $(\arena, \scrr(\Gamma, I))$ to a classical Streett game: first,
we add a memory structure~$\mem$ of size~$2^d$ that keeps track of the open
requests during a play (cp.~\cite{HornThomasWallmeier08}) and let $F_c$ denote the vertices in which request~$c$
is not open. Then, we have $(\arena, \scrr(\Gamma, I)) \le_\mem (\arena \times
\mem, L)$ with
\[ L =  \bigcap_{c \in [d]} \big[\, \big( \streett(Q_c, P_c) \cap \coB(I_c) 
\big) \;\cup\; \buechi(F_c)\, \big]  \enspace,\]
i.e., we have reduced the request-response conditions~$\rr(Q_c, P_c)$ to
B\"{u}chi conditions\footnote{$\buechi(F_c)$ is the set of plays visiting
$F_c$ infinitely often.}. Finally, we have
\begin{align*}
L =&  \bigcap_{c \in [d]} \big[\, \big( \streett(Q_c, P_c) \cap \coB(I_c) 
\big) \;\cup\; \buechi(F_c)\, \big]\\
=& \bigcap_{c \in [d]} \big[\, \big( \streett(Q_c, P_c) \cap \streett(I_c, \emptyset)  \big) \;\cup\; \buechi(F_c)\, \big]\\
=& \bigcap_{c \in [d]} \big[\, \big( \streett(Q_c, P_c) \;\cup\; \buechi(F_c)\big) \cap \big( \streett(I_c, \emptyset) \;\cup\; \buechi(F_c) \big) \big]\\
=& \bigcap_{c \in [d]} \big[ \streett(Q_c, P_c \cup F_c) \;\cap\; 
\streett(I_c, F_c) \big]  \enspace,
\end{align*} 
which is a Streett condition. Thus, we have reduced $(\arena, \scrr(\Gamma,
I))$ to a Streett game in an arena that is exponential in $d$ with $2d$
Streett pairs. This proves the first claim of
Theorem~\ref{theorem_mainstreett}. Furthermore, we obtain the following upper
bound on the size of finite-state winning strategies. Here, we use the fact
that Player~$0$ has finite-state winning strategies of size $d!$ in Streett
games with $d$ pairs (which is tight), while Player~$1$ has positional winning
strategies~\cite{DziembowskiJurdzinskiWalukiewicz97}. Note that the lower
bound of $d!$ for Player~$0$ is also a lower bound for her in (bounded)
Streett games with costs, since classical Streett games are a special case of
both.

\begin{rem}
\hfill
\begin{enumerate}
\item In bounded Streett games with costs, Player~$0$ has uniform finite-state
winning strategies of size~$2^d ((2d)!)$, where $d$ is the number of Streett
pairs.

\item In bounded Streett games with costs, Player~$1$ has uniform finite-state
winning strategies of size~$2^d$, where $d$ is the number of Streett pairs.
\end{enumerate} \end{rem}

\noindent It remains to show a lower bound on the memory requirements for Player~$1$.

\begin{lem}
For every $d \ge 1$, there is a bounded Streett game with costs~$\game_d$ with a designated vertex~$v$ such 
that
\begin{itemize}

\item the arena of $\game_d$ is of linear size in $d$ and $\game_d$ has $2d$ Streett pairs,

\item Player~$1$ has a uniform finite-state winning strategy for $\game_d$ 
from $v$, which is implemented with $2^d$ memory states, but

\item Player~$1$ has no winning strategy from $v$ that is 
implemented with less than $2^d$ memory states.

\end{itemize}		
\end{lem}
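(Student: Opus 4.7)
The plan is to construct $\game_d$ as a Streett analogue of the parity game $\game_d$ from Lemma~\ref{lem_memory_bcp_pl1}, exploiting the independence of Streett requests so that tracking the full subset of open requests (rather than merely the largest one) is forced on Player~$1$. The arena will consist of a hub~$v$ owned by Player~$0$ together with $O(d)$ further vertices, organised as (i)~$d$ ``raise'' gadgets, where from~$v$ Player~$0$ can visit a vertex in $Q_c$ via a $\cost_c$-increment edge and return, and (ii)~a sequential ``challenge'' gadget with $d$ binary choice vertices $u_1,\dots,u_d$ owned by Player~$1$. At $u_c$, Player~$1$ decides whether to route the play through a response vertex $p_c\in P_c$ or to skip it, thereby encoding a subset $S\subseteq [d]$ of pairs she claims are currently open. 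The additional $d$ Streett pairs (bringing the total to $2d$) are used in a verification component whose role is to make Player~$1$'s only winning continuation depend on correctly identifying the actual open subset.

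For the upper bound I would introduce a memory structure $\mem$ of size $2^d$ whose states track the exact subset of currently open requests, updated by adding~$c$ at $Q_c$-visits and removing~$c$ at $P_c$-visits. Using this, I would define a next-move function that, at each challenge vertex~$u_c$, picks the response based on the current subset~$T$ so as to keep at least one well-chosen request open with further $\cost_c$-increments, and I would verify directly that every play consistent with the induced strategy violates $\bcs$ from~$v$.

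The main obstacle is the lower bound. I would argue it via a pigeonhole on memory states. For every $T\subseteq [d]$, one exhibits a play prefix $w_T$ that starts at~$v$, is consistent with any Player~$1$ strategy, and reaches $u_1$ with open set exactly~$T$: Player~$0$ simply visits the raise gadgets for $c\in T$ in some order before entering the challenge gadget. Given now a winning strategy~$\tau$ for Player~$1$ implemented by $(M,\init,\update)$ with $|M|<2^d$, the pigeonhole principle yields $T\neq T'$ with $\update^+(w_T)=\update^+(w_{T'})$. Hence $\tau$ prescribes identical binary decisions along both continuations, yielding the same claimed subset~$S$. I would then show that Player~$0$ has a continuation from at least one of these two histories under which every request is eventually answered with uniformly bounded cost, placing the resulting play in $\bcs$ and contradicting the fact that $\tau$ is winning from~$v$. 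The hardest step will be designing the verification gadget tightly enough that any mismatch between~$S$ and the true open subset provides Player~$0$ with such a winning continuation, while keeping the arena linear in~$d$.
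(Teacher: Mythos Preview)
Your proposal leaves the decisive piece of the construction unspecified. You explicitly write that ``the hardest step will be designing the verification gadget'', and indeed without it your arena does \emph{not} force $2^d$ memory on Player~$1$: if Player~$1$'s task at the challenge vertices is merely to skip some $p_c$ with $c$ in the actually open set $T$, then remembering a \emph{single} element of $T$ suffices, which costs only $d$ memory states, not $2^d$. Your extra $d$ Streett pairs are supposed to penalise Player~$1$ for skipping some $c\notin T$, but you give no mechanism for this, and it is not obvious how to build one of linear size that interacts correctly with the bounded-cost semantics. Until this gadget is actually exhibited, neither the upper-bound strategy nor the pigeonhole lower bound can be checked, because both depend on the precise shape of the verification component.

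The paper avoids this difficulty entirely by a different encoding. Instead of $d$ ``main'' pairs plus $d$ ``verification'' pairs, it uses $2d$ symmetric pairs $(Q_c,P_c)_{c\in[2d]}$ with $Q_c=\{q_c\}$ and $P_c=\{s_{c'}\mid c'\neq c\}$. Player~$0$ first makes $d$ binary choices, at vertex~$v_j$ visiting either $q_{2j}$ or $q_{2j+1}$; Player~$1$ then makes $d$ binary choices, at vertex~$v_j'$ moving to $p_{2j}$ or $p_{2j+1}$; from each $p_c$, Player~$0$ may drop into the sink~$s_c$, which answers every request except~$c$. Thus Player~$1$ wins from the sink~$s_c$ iff $q_c$ was visited, and her unique winning policy is to mirror Player~$0$'s $d$ bits exactly. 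This makes the verification \emph{implicit} in the response sets, so no separate gadget is needed, and both the $2^d$ upper bound (store the $d$ bits) and the lower bound (two distinct bit-strings collapsing to one memory state let Player~$0$ reach a sink~$s_c$ with $q_c$ unvisited) fall out in a few lines. Your overall plan---encode $2^d$ histories, pigeonhole on memory states, exhibit a winning reply for Player~$0$ on a confused pair---matches the paper's, but the concrete arena you need is the one just described, not the hub-with-raise-gadgets variant you sketched.
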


\begin{proof}
The arena~$\arena_d$ of $\game_d$ is depicted in Figure~\ref{fig_streett} where we do not indicate the costs, since every edge is an increment-edge (for every cost function). The winning condition is given as $\Gamma_d = (Q_c, P_c)_{c\in [2d]}$ where $Q_c = \set{q_c}$ and $P_c = \set{s_{c'} \mid c'\neq c}$. The designated vertex~$v$ we consider is $v_0$. Note that every play ends up in one of the sink vertices~$s_c$. 

\begin{figure}
\begin{center}
\begin{tikzpicture}[scale = 1]
\begin{scope}
\tikzset{p0/.style = {shape = ellipse,    draw, thick, minimum size = .6cm}}
\tikzset{p1/.style = {shape = rectangle, minimum size =.6cm, draw, thick}}

\node[p0] 		at (0,0)		(c1)	{$v_0$};
\node[p0] 		at (2,0)		(c2)	{$v_1$};

\node[p0] 		at (1,1)		(q1)	{$q_0$};
\node[p0] 		at (1,-1)		(q2)	{$q_1$};
\node[p0] 		at (3,1)		(q3)	{$q_2$};
\node[p0] 		at (3,-1)		(q4)	{$q_3$};
\node[p0] 		at (5,1)		(q2n-1)	{\tiny $q_{2d-2}$};
\node[p0] 		at (5,-1)		(q2n)	{\tiny $q_{2d-1}$};

\node[p1]		at (6,0)		(c1')	{$v_0'$};
\node[p1]		at (8,0)		(c2')	{$v_1'$};

\node[p0] 		at (7,1)		(p1)	{$p_0$};
\node[p0] 		at (7,-1)		(p2)	{$p_1$};
\node[p0] 		at (9,1)		(p3)	{$p_2$};
\node[p0] 		at (9,-1)		(p4)	{$p_3$};
\node[p0] 		at (11,1)		(p2n-1)	{\tiny $p_{2d-2}$};
\node[p0] 		at (11,-1)		(p2n)	{\tiny $p_{2d-1}$};

\node[p0] 		at (8,2)		(s1)	{$s_0$};
\node[p0] 		at (8,-2)		(s2)	{$s_1$};
\node[p0] 		at (10,2)		(s3)	{$s_2$};
\node[p0] 		at (10,-2)		(s4)	{$s_3$};
\node[p0] 		at (12,2)		(s2n-1)	{\tiny $s_{2d-2}$};
\node[p0] 		at (12,-2)		(s2n)	{\tiny $s_{2d-1}$};

\node			at (3.999,1)	()		{$\cdots$};
\node			at (3.999,-1)	()		{$\cdots$};
\node			at (9.999,1)	()		{$\cdots$};
\node			at (9.999,-1)	()		{$\cdots$};

\end{scope}

\path[thick]
(c1) 		edge				(q1)
(c1) 		edge				(q2)
(q1) 		edge				(c2)
(q2) 		edge				(c2)
(c2) 		edge				(q3)
(c2) 		edge				(q4)
(q2n-1)		edge				(c1')
(q2n)		edge				(c1')
(c1')		edge				(p1)
(c1')		edge				(p2)
(p1) 		edge				(c2')
(p2) 		edge				(c2')
(c2') 		edge				(p3)
(c2') 		edge				(p4)

(p1)		edge				(s1)
(p2)		edge				(s2)
(p3)		edge				(s3)
(p4)		edge				(s4)
(p2n-1)		edge				(s2n-1)
(p2n)		edge				(s2n)

(s1)		edge[loop right]	()
(s2)		edge[loop right]	()
(s3)		edge[loop right]	()
(s4)		edge[loop right]	()
(s2n-1)		edge[loop right]	()
(s2n)		edge[loop right]	()
;

\end{tikzpicture}
\end{center}
\caption{The arena $\arena_d$ for the bounded Streett game with costs~$\game_d$ (every edge is an increment-edge).}
\label{fig_streett}
\end{figure}
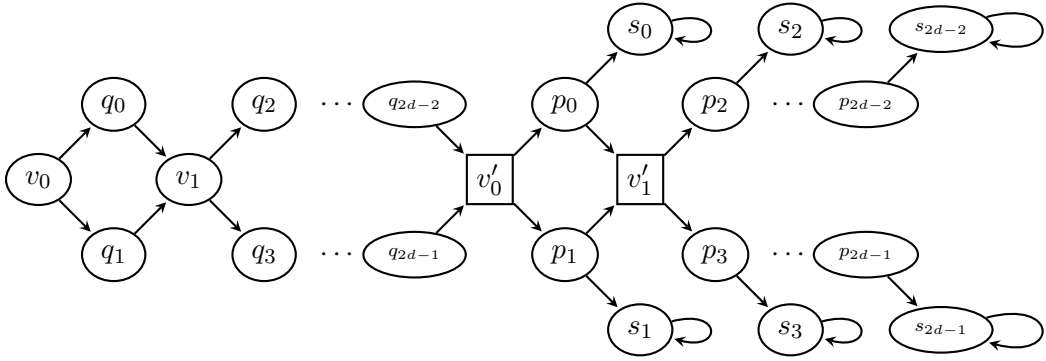

First, we show that Player~$1$ has a winning strategy from $v_0$ that is implemented with $2^d$ memory states. Assume the play is currently ending in vertex~$v_c'$. This play passed through $v_c$ where Player~$0$ either moved to $q_{2c}$ or to $q_{2c+1}$. In the first case, Player~$1$ moves to $p_{2c}$, in the second case to $p_{2c+1}$. Now, consider a play that is consistent with this strategy. It will eventually end up in one of the sink vertices~$s_c$ after visiting the vertex~$p_c$, which implies that the vertex~$q_c$ is visited during the play, too. As $q_c \in Q_c$, a request of condition~$c$ is open, which is never answered, since only sink vertices are in $P_c$, but the sink $s_c$ is not in $P_c$. As every edge is an increment-edge w.r.t.\ $\cost_c$, the play contains an unanswered request of cost~$\infty$, i.e., it is winning for Player~$1$. Note that this strategy can be implemented by memorizing the $d$ binary choices Player~$0$ makes at the vertices~$v_j$, which can be done using $2^d$ memory states.

Now, consider a finite-state strategy~$\tau$ for Player~$1$ implemented with less than $2^d$ memory states. Then, there are two different play prefixes~$w,w'$ leading from $v_0$ to $v_0'$ such that the memory structure reaches the same memory state after processing these two prefixes. Since these prefixes differ, there is a $c$ such that $w$ visits $q_{2c}$ and $w'$ visits $q_{2c+1}$. Now, consider the prolongations of these prefixes where Player~$1$ plays according to $\tau$ and Player~$0$ does not move to the sink vertices in order to reach vertex~$v_{c}'$. Since the memory structure reaches the same memory state after processing $w$ and $w'$, it behaves the same after processing these prolongations. Hence, $\tau$ makes the same move after both prolongations, say it moves to $p_{2c}$ (the case $p_{2c+1}$ is analogous). Now consider the prolongation of $w'$: Player~$1$ moves to $p_{2c}$ and then Player~$0$ can move to the sink vertex~$s_{2c}$, where the requests of every condition but $2c$ are answered to. However, a request of condition~$2c$ is not open during this play, since $w'$ visited $q_{2c+1}$ and therefore did not visit $q_{2c}$. Hence, every request is answered and no new ones are raised in the sink. Hence, the play is winning for Player~$0$. Thus, the strategy~$\tau$ cannot be winning for Player~$1$.
\end{proof}
Note that the game $\game_d$ presented above is even a bounded Streett game~\cite{ChatterjeeHenzingerHorn09}.

Now, we consider Streett games with costs: we again show that solving the bounded
variant suffices to solve such games.

\begin{lem}
\label{lemma_csfixpoint}
Let $\game = (\arena, \cs(\Gamma, \costfam))$ and let 
$\game' = (\arena, \bcs(\Gamma, \costfam))$.
\begin{enumerate}

\item\label{lemma_csfixpoint_playerzerosubset} $W_0(\game') \subseteq 
W_0(\game)$.

\item\label{lemma_csfixpoint_playerzeroempty} If $W_0(\game') = \emptyset$, 
then  $W_0(\game) = \emptyset$.
\end{enumerate}
\end{lem}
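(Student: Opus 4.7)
The plan is to mirror the argument of Lemma~\ref{lem_cp2bcp_fixpoint}, replacing parity by Streett throughout.

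Part~(\ref{lemma_csfixpoint_playerzerosubset}) is immediate from the inclusion $\bcs(\Gamma, \costfam) \subseteq \cs(\Gamma, \costfam)$ of Remark~\ref{rem_winningconds_inclusions_streett}: any winning strategy for Player~$0$ in $\game'$ is also winning in $\game$.

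For Part~(\ref{lemma_csfixpoint_playerzeroempty}), apply determinacy of bounded Streett games with costs (Lemma~\ref{remark_coststreettdeterminacy}) to conclude $W_1(\game') = V$, and fix a uniform finite-state winning strategy $\tau'$ for Player~$1$ in $\game'$ (its existence follows from the memory bound on Player~$1$ in bounded Streett games with costs established above). The plan is then to build an infinite-memory strategy $\tau$ for Player~$1$ in $\game$ that simulates $\tau'$ but restarts it whenever a growing counter is exceeded. Concretely, $\tau$ is guided by a counter $b \in \nats$ initialized to $1$ together with the play prefix since the last restart: $\tau$ plays as $\tau'$ on that prefix, and whenever some request of a pair~$c$ has been open for more than $b$ increment-edges of $\cost_c$, the counter is incremented, the prefix is reset to the current vertex, and $\tau$ continues playing $\tau'$ as if starting afresh there.

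Now consider a play $\rho$ consistent with $\tau$, and split on the behavior of the counter. If the counter increments infinitely often, then for every $b' \in \nats$ there is a position $k$ with $\streettdist_\costfam(\rho, k) \ge b'$, so $\limsup_k \streettdist_\costfam(\rho, k) = \infty$ and $\rho \notin \cs(\Gamma, \costfam)$. If the counter stabilizes at some value $b$, then $\rho$ admits a suffix $\rho'$ that is consistent with $\tau'$ started from a single vertex and in which no request of any pair~$c$ is open for more than $b$ increment-edges of $\cost_c$. Since $\tau'$ is winning in $\game'$, we have $\rho' \notin \bcs(\Gamma, \costfam)$; an unanswered request with cost~$\infty$ in $\rho'$ would force a further counter increment, contradicting stabilization, so the failure must come from $\limsup_k \streettdist_\costfam(\rho', k) = \infty$. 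Because every answered request in $\rho'$ satisfies $\streettdist_\costfam \le b$, this requires infinitely many unanswered requests in $\rho'$; by the pigeonhole principle some pair~$c$ must then have $Q_c$ visited infinitely often and $P_c$ only finitely often, so $\rho' \notin \streett(\Gamma)$. Since $\cs(\Gamma, \costfam) \subseteq \streett(\Gamma)$ and the cost-Streett condition is prefix-independent, we conclude $\rho \notin \cs(\Gamma, \costfam)$ as well.

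The main obstacle I expect is this stabilization case. One must carefully extract, from the failure of the bounded cost-Streett condition on $\rho'$, a violation of the strictly weaker cost-Streett condition on $\rho$; this is subtler than in the parity setting because several cost functions are in play simultaneously, so the counter discipline must track open requests separately with respect to each $\cost_c$ in order to rule out the ``unanswered with cost $\infty$'' mode of failure and isolate the Streett-style violation.
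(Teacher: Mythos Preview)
Your proposal is correct and follows the same approach as the paper, which simply states that the proof is exactly the same as the one for Lemma~\ref{lem_cp2bcp_fixpoint}. Your write-up in fact spells out the stabilization case for Streett more carefully than the parity version does---in particular, the observation that once the counter stabilizes every $\streettdist_{\cost_c}$ value is either at most $b$ or $\infty$, so a $\limsup$ of $\infty$ forces infinitely many unanswered requests and hence a classical Streett violation via pigeonhole---which is exactly the adaptation needed for multiple cost functions.
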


The proof is exactly the same as the one for Lemma~\ref{lem_cp2bcp_fixpoint}.
Also, Algorithm~\ref{algorithm_fixpoint} (where $X_j$ is now Player~$0$'s
winning region in the bounded Streett game with costs) works for this pair of
winning conditions as well. This proves the second claim of
Theorem~\ref{theorem_mainstreett}.

Furthermore, using the same construction as presented in the proof of
Lemma~\ref{lem_fixpointalgo_correctness}, one can built a winning strategy for
a Streett game with costs out of the winning strategies for the bounded
Streett games with costs solved by (the modified)
Algorithm~\ref{algorithm_fixpoint}. By reusing memory states in the different sets~$X_j$ computed by the algorithm, we obtain the
following upper bound for Player~$0$. Note that we can reuse memory states, since no information needs to be transferred between the regions~$X_j$: once a set $X_j$ is entered, the strategy forgets about the history of the play. 

\begin{rem}
In Streett games with costs, Player~$0$ has uniform finite-state winning 
strategies of size~$2^d ((2d)!)$, where $d$ is the number of Streett pairs.
\end{rem}

Again, the lower bound of $d!$ for Player~$0$ in classical Streett games is also a lower bound for her in Streett games with costs. Player~$1$ on the other hand needs infinite memory in Streett games with 
costs, as witnessed by the game in Example~\ref{example_costparity}, which can 
be easily transformed into a Streett game with costs.

Using the algorithm presented in~\cite{PitermanPnueli06}, which solves a
Streett game in time~$O(mn^ddd!)$, one can solve (bounded) Streett games with
costs in exponential time, although the Streett games that need to be solved
are of exponential size (but only in $d$). Here it is crucial that the number of Streett pairs only grows linearly. Together with the
$\exptime$-hardness of solving bounded and finitary Streett
games\footnote{Shown in unpublished work by Chatterjee, Henzinger, and Horn,
obtained by slightly modifying the proof of $\exptime$-hardness of solving
request-response games~\cite{ChatterjeeHenzingerHorn11}.}, which are a special
case of (bounded) Streett games with costs, we obtain the following result.

\begin{thm} 
The following problem is $\exptime$-complete: Given a (bounded) Streett game with costs $\game$,
$i \in \set{0,1}$, and a vertex~$v$, is $v \in W_i(\game)$? 
\end{thm}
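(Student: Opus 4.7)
The plan is to establish the result by combining an algorithmic upper bound with a known hardness result for a special case.

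For membership in $\exptime$, I would invoke Theorem~\ref{theorem_mainstreett}: solving a bounded Streett game with costs reduces to solving a single classical Streett game on an arena with $2^d n$ vertices, $2^d m$ edges, and $2d$ Streett pairs; solving a Streett game with costs reduces to $O(n)$ such instances via the fixed-point Algorithm~\ref{algorithm_fixpoint}. Plugging the running time $T(n,m,d) = O(m n^d d \cdot d!)$ of the Piterman--Pnueli algorithm~\cite{PitermanPnueli06} into the bounds of Theorem~\ref{theorem_mainstreett} gives
\[
 O\bigl(n \cdot 2^d m \cdot (2^d n)^{2d} \cdot 2d \cdot (2d)!\bigr),
\]
which is polynomial in $n$ and $m$ and exponential in $d$, hence in $\exptime$. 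The delicate point here, and arguably the main thing worth checking, is that the number of Streett pairs grows only linearly ($d \mapsto 2d$) under the reduction $(\arena,\scrr(\Gamma,I)) \le_\mem L$ sketched before Theorem~\ref{theorem_mainstreett}, so that the $d$-fold exponent of Piterman--Pnueli produces only a single-exponential blowup overall.

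For $\exptime$-hardness, I would observe that both bounded Streett games with costs and Streett games with costs strictly generalize bounded Streett games: given a bounded Streett game with Streett pairs $\Gamma$, simply equip every edge of its arena with the increment label $i$ (for every cost function of a trivial family $\costfam = (\cost_c)_{c\in[d]}$). By Remark~\ref{rem_winningconds_inclusions_streett} and the discussion following it, in arenas without $\epsilon$-edges we have $\cs(\Gamma,\costfam) = \bcs(\Gamma,\costfam)$ equal to the bounded (equivalently, finitary) Streett condition on $\Gamma$, so winning regions are preserved. Hence the unpublished $\exptime$-hardness result for bounded/finitary Streett games by Chatterjee, Henzinger, and Horn (a modification of the hardness proof for request-response games in~\cite{ChatterjeeHenzingerHorn11}) transfers immediately to both variants of the problem.

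Combining the two directions yields $\exptime$-completeness for each of $i=0$ and $i=1$; determinacy (Lemma~\ref{remark_coststreettdeterminacy}) ensures that deciding membership in $W_0(\game)$ and in $W_1(\game)$ are interreducible, so it is enough to handle one of them. I do not expect any genuine obstacle beyond the careful bookkeeping of the exponent of $d$ in the upper bound; the hardness side is off-the-shelf once the generalization relation between the winning conditions has been recorded.
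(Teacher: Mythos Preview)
Your proposal matches the paper's argument: the upper bound via Theorem~\ref{theorem_mainstreett} combined with the Piterman--Pnueli algorithm (with the crucial observation that the number of Streett pairs grows only linearly under the reduction), and the lower bound via the Chatterjee--Henzinger--Horn $\exptime$-hardness result for bounded and finitary Streett games as special cases.

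One small inaccuracy worth flagging: in arenas without $\epsilon$-edges you do \emph{not} have $\cs(\Gamma,\costfam) = \bcs(\Gamma,\costfam)$. Rather, $\cs$ collapses to the finitary Streett condition while $\bcs$ collapses to the bounded Streett condition, and these are distinct (the bounded condition additionally forbids any request being unanswered with cost~$\infty$). This does not affect your argument, since the cited hardness result covers both bounded and finitary Streett games, so each of the two variants of the problem inherits hardness from the appropriate special case.
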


\section{Conclusion}
\label{section_conc}

We introduced infinite games with cost conditions, generalizing both classical
conditions and finitary conditions. For parity games with costs, we proved
half-positional determinacy and that solving these games is not harder than
solving parity games. The decision problem is in
$\np \cap \conp$ (this was recently improved to $\up \cap \coup$~\cite{DBLP:conf/lpar/MogaveroMS13}).

 For Streett games with costs, we showed that Player~$0$ has
finite-state winning strategies and that solving these games is not harder
than solving finitary Streett games and can be done by solving linearly many
(classical) Streett games of exponential size (in the number of Streett
pairs). Table~\ref{table_results} sums up all our results on games with costs and compares them
to the results for the classical and finitary variants. Here, $d$ denotes the number of odd colors in the game and
``exponential'' is always meant to be ``exponential in the number of
Streett-pairs''. The memory bounds for the different types of parity games are
tight, while there are gaps between the exponential lower and the exponential
upper bounds for the different types of Streett games with costs. 

\begin{table*}[h]
\centering
\begin{tabular}{llll}
\toprule

winning condition\phantom{n} & computational complexity\phantom{n} & memory
Pl.~$0$\phantom{n} & memory Pl.~$1$\\

\midrule

parity				& $\up \cap \coup$ 	& positional 	& positional\\
bounded parity 		& $\ptime$ 			& positional 	& $d+1$\\
finitary parity		& $\ptime$ 			& positional 	& infinite\\
bounded cost-parity	& $\up \cap \coup$	& positional 	& $d+1$\\
cost-parity			& $\up \cap \coup$ 	& positional 	& infinite\\

\midrule

Streett 			& $\conp$-complete 		& exponential	& positional\\
bounded Streett		& $\exptime$-complete	& exponential	& exponential\\
finitary Streett 	& $\exptime$-complete 	& exponential	& infinite\\
bounded cost-Streett& $\exptime$-complete 	& exponential	& exponential\\	
cost-Streett 		& $\exptime$-complete 	& exponential	& infinite\\

\bottomrule
\end{tabular}
\caption{Overview of computational complexity and memory requirements.}
\label{table_results}
\end{table*}

Let us discuss two variations of the games presented here. In a parity game
with costs, the requests and responses are hierarchical and there is a single
cost function that is used for every request. On the other hand, in Streett
games with costs, the requests and responses are independent and there is a
cost function for every pair of requests and responses. Thus, there are two
other possible combinations.

First, consider parity games with multiple cost functions (one for each odd
color): a reduction from QBF shows that solving such games is $\pspace$-hard.
On the other hand, the problem is in $\exptime$, since every such game is a
Streett game with costs. Furthermore, one can show that Player~$0$ needs
exponential memory (in the number of odd colors) to implement her winning
strategies. All these results even hold for the bounded variant of these game,
which is defined as one would expect. In these games, both players need
exponential memory. In further research we aim at closing the gap in
complexity of solving parity games with multiple cost functions.
The second variation are Streett games with a single cost function. Solving
finitary Streett games is already $\exptime$-complete and our lower bounds on
memory requirements are derived from Streett games. Note that both finitary
Streett and classical Streett games can be seen as Streett games with a single
cost function. Hence, these games are as hard as Streett games with multiple cost functions.

Finally, there are at least two other directions to extend our results
presented here: first, our winning conditions do not cover all acceptance
conditions (for automata) discussed in~\cite{BojanczykColcombet06, Boom11}. In
ongoing research, we investigate whether our techniques are applicable to
these more expressive conditions and to winning conditions specified in
weak MSO with the unbounding quantifier~\cite{Bojanczyk11,
BojanczykTorunczyk12}. Finally, one could add decrement-edges.

\bibliographystyle{plain}
\bibliography{bibliography}

\end{document}